\newcommand\RR{\mathbb{R}}
\newcommand\CC{\mathbb{C}}
\newcommand\ZZ{\mathbb{Z}}
\newcommand\cG{\mathcal{G}}
\newcommand\bfB{\mathbf{B}}
\newcommand\bfa{\mathbf{a}}
\newcommand\bfb{\mathbf{b}}
\newcommand\bfe{\mathbf{e}}
\newcommand\bfx{\mathbf{x}}
\newcommand\graf{\Gamma}
\renewcommand\bar{\overline}
\renewcommand\tilde{\widetilde}
\DeclareMathOperator\st{Star}
\DeclareMathOperator\val{val}
\def\mH {\mathcal{H}}
\def\bone {\mathbbm{1}}
\begin{document}
\title{Geometric presentations of braid groups for particles on a graph}
\author{Byung Hee An \and Tomasz Maciazek}
\institute{Byung Hee An
\at Department of Mathematics Education, Kyungpook National University, Daegu, South Korea \\\email{anbyhee@knu.ac.kr}
\and
Tomasz Maciazek
              \at School of Mathematics, University of Bristol, Fry Building, Woodland Road, Bristol BS8 1UG, UK
           \and
Center for Theoretical Physics, Polish Academy of Sciences, Al. Lotnik\'ow 32/46, 02-668 Warszawa, Poland
}
\date{}
\maketitle

\begin{abstract}
We study geometric presentations of braid groups for particles that are constrained to move on a graph, i.e. a network consisting of nodes and edges. Our proposed set of generators consists of exchanges of pairs of particles on junctions of the graph and of certain circular moves where one particle travels around a simple cycle of the graph. We point out that so defined generators often do not satisfy the braiding relation known from 2D physics. We accomplish a full description of relations between the generators for star graphs where we derive certain quasi-braiding relations. We also describe how graph braid groups depend on the (graph-theoretic) connectivity of the graph. This is done in terms of quotients of graph braid groups where one-particle moves are put to identity. In particular, we show that for $3$-connected planar graphs such a quotient reconstructs the well-known planar braid group. For $2$-connected graphs this approach leads to generalisations of the Yang-Baxter equation. Our results are of particular relevance for the study of non-abelian anyons on networks showing new possibilities for non-abelian quantum statistics on graphs.
\end{abstract}

\section{Introduction}
The study of non-abelian quantum statistics is currently at the forefront of research concerning quantum computers \cite{qcomp-reviwew}, the fractional quantum Hall effect \cite{stern} and superconductivity \cite{Wilczek}. Exchange of (quasi-)particles obeying non-abelian quantum statistics results with a unitary transformation of the multi-component wave function of the considered quantum system. If the considered particles are constrained to move in $\RR^2$, this means that such a quantum system features a unitary representation of the braid group. It is however possible to generalise the idea of braiding to situations where particles are confined in a space which is different than $\RR^2$. In recent years there has been much interest in the study of non-abelian statistics and non-abelian anyons on spaces that have the topology of a network (also called a graph or a one-dimensional $CW$-complex). Realising anyons on a graph is a particularly promising concept in the context of topological quantum computing where computations are realised by adiabatic manipulation of anyonic quasi-particles. Such a precise control of anyons is believed to be achievable most easily on graphs. One of the leading proposals in this area utilises the exchange of Majorana fermions on networks consisting of coupled semiconducting nanowires \cite{alicea,majorana-review}. 

Following early seminal works on quantum statistics in $2D$ \cite{LM,Souriau}, in the mathematical description of anyons placed in topological space $X$ one considers the following configuration space $C_n(X)$.
\begin{equation}
C_n(X):=\left(X^n-\Delta_n\right)/S_n,
\end{equation} 
where $\Delta_n:=\left\{\left(x_1,\hdots,x_n\right):\ x_i\neq x_j{\mathrm{\ for\ all\ }} j\neq i\right\}$ is called the diagonal of $X^n$ and $S_n$ is the permutation group that acts in $X^n$ by permuting particles. The fundamental group of $C_n(X)$ is called the braid group and will be denoted by $\bfB_n(X)$. Importantly, elements of $\bfB_n(X)$ represent exchanges of particles. This is due to the fact that any loop in $C_n(X)$ can be viewed as a process where particles started in some initial configuration and returned to it modulo a permutation of particles. Note that removing the diagonal means that no two particles are allowed to occupy the same position in $X$. In such a formulation of quantum mechanics, wave functions are continuous functions $\Psi:\ C_n(X)\to\CC^d$. We allow the wave function to have $d>1$ components accounting for particle's internal degrees of freedom. Exchange of particles gives rise to a unitary transformation of the wave function $\Psi\to U\Psi$ where $U$ is a unitary $d\times d$ matrix. Such a transformation is required to be topological, i.e. it is assumed that operator $U$ depends only on the homotopy class of the loop describing particles' exchange. More formally, quantisations on $C_n(X)$ are in a one-to-one correspondence with conjugacy classes of unitary representations of $\bfB_n(X)$. Standard settings include $X=\RR^3$ or $X=\RR^2$. In the first case we have $\bfB_n(\RR^3)=S_n$ and unitary representations of $S_n$ give rise to so-called parafermions. On the other hand, unitary representations of the planar braid group $\bfB_n(\RR^2)$ give rise to non-abelian anyons relating to the fractional quantum Hall effect, tensor categories utilised in quantum computing \cite{freedman-qc} and the field-theoretic description of quantum statistics \cite{Wilczek,field}. In our paper we focus on $X=\Gamma$, a graph. 

Despite the notable interest in non-abelian statistics and anyons on graphs, relatively little is known about the most fundamental object that captures all information about particles' exchange, i.e. graph braid groups. Notably, there exists no explicit description of a universal set of generators and relations for graph braid groups in terms of geometric moves of particles on a given graph. Some sets of universal generators have been derived in terms of the discrete Morse theory \cite{FSbraid,FS12} or a recursive construction of graph configuration spaces \cite{kurlin}. In this paper, we prove that graph braid groups are universally generated by certain particle moves that have the following intuitive description: i) exchanges of pairs of particles on junctions of the considered graph and ii) circular moves where one particle travels around a simple cycle of the graph. We call the first group of generators the star-type generators and we define them in section \ref{sec:star-gen}. The circular moves are called generators of the loop type and they are defined in section \ref{sec:loop-gen}. Our proof relies on a limited use of discrete Morse theory and combinatorial analysis of certain small canonical graphs. Section \ref{sec:prel} introduces preliminary key definitions from graph theory. In section \ref{sec:morse} we briefly review technical details of the discrete Morse theory. We also tackle the problem of describing relations between the above star and loop generators. This is fully accomplished for star graphs in Proposition \ref{proposition:new relations}. For more complicated graphs, we consider a quotient of the graph braid group where all one-particle moves are put to identity. Physically, this is equivalent to the assumption that the graph is not immersed in any external fields, i.e. there are no Aharonov-Bohm-like effects stemming from situations when one particle travels around a loop in the considered graph. We first analyse the case of $2$-connected graphs in section \ref{sec:2connected} where we show that a sufficiently high connectedness of the graph allows us to greatly reduce the number of generators of the graph braid group. In section \ref{sec:3connected} we consider $3$-connected graphs where our main result is the following theorem. 
\begin{theorem}\label{main theorem}
Let $\graf$ be a $3$-connected planar graph.
Then the braid group $\bfB_n(\graf)$ admits a presentation generated by
\begin{enumerate}
\item $Y$-exchanges,
\item one-particle moves $\{\gamma_0,\hdots, \gamma_{b-1}\}$ for $b=\beta_1(\graf)$ -- the first Betti number of $\Gamma$,
\item one circular move $\delta$.
\end{enumerate}
Moreover, by taking a quotient by all $\gamma_j$'s, we obtain the classical braid group $\bfB_n(\RR^2)$.
\end{theorem}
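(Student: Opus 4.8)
The plan is to build the presentation in two stages: first reduce the universal generating set to the three families in the statement, then pin down the relations and verify completeness via discrete Morse theory. First I would invoke the universal generation result of the paper, so that $\bfB_n(\graf)$ is generated by the star-type generators (exchanges at junctions) together with the loop-type generators (one-particle circular moves). The reduction of section \ref{sec:2connected} already shows that when $\graf$ is $2$-connected the star-type generators at junctions of degree $>3$ can be rewritten in terms of $Y$-exchanges, and that the loop-type generators collapse substantially. Under the stronger hypothesis of $3$-connectivity together with planarity I would use Whitney's theorem: the planar embedding of $\graf$ is unique up to reflection, so the bounded faces furnish a canonical, geometrically meaningful basis $\{c_0,\dots,c_{b-1}\}$ of the cycle space $H_1(\graf;\ZZ)$ with $b=\beta_1(\graf)$ bounded faces. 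The associated one-particle moves $\gamma_j$ (one particle traverses $c_j$ while the rest stay put) are the natural candidates for the second family, and every loop-type generator becomes, modulo $Y$-exchanges, a product of the $\gamma_j$.

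Second, I would isolate the distinguished circular move $\delta$. The key point is that $\delta$ is the move which routes a particle once around a chosen face, so that together with the $Y$-exchanges it supplies the third-strand transport needed to turn a local exchange into a genuine braiding; here $3$-connectivity is essential, since Menger's theorem provides the three internally disjoint paths required to transport an exchange between any two faces and to slide $\delta$ across the graph. At this stage I would extract the relations. The star-subgraph relations of Proposition \ref{proposition:new relations} give the local quasi-braiding relations among $Y$-exchanges meeting at a common junction, while the transport arguments give the commutation of exchanges supported on disjoint parts of $\graf$ and the commutation of each $\gamma_j$ with the exchanges it does not link. The delicate relations are those mixing $\delta$ with the $Y$-exchanges, and these are precisely the quasi-braiding relations in which the failure of the ordinary braiding identity is measured by the $\gamma_j$.

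Finally, for the quotient I would set every $\gamma_j=1$, i.e. pass to the quotient of $\bfB_n(\graf)$ by the normal subgroup generated by $\gamma_0,\dots,\gamma_{b-1}$. In this quotient the quasi-braiding relations degenerate: the $\gamma_j$-dependent correction terms vanish, so the surviving $Y$-exchanges together with $\delta$ satisfy the genuine braiding relation $\sigma_i\sigma_{i+1}\sigma_i=\sigma_{i+1}\sigma_i\sigma_{i+1}$ and the far-commutativity relation $\sigma_i\sigma_j=\sigma_j\sigma_i$ for $|i-j|\geq 2$. I would then identify the quotient with $\bfB_n(\RR^2)$ by realizing the relevant moves inside a planar disk: because the embedding is rigid, the junction carrying the $Y$-exchanges together with the face bounded by $\delta$ can be thickened to a disk in which the particles braid exactly as in the plane, giving a map onto the standard presentation of $\bfB_n(\RR^2)$.

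The main obstacle will be \emph{completeness}: showing that the listed relations generate all relations, and that the map to $\bfB_n(\RR^2)$ is an isomorphism rather than a proper quotient, cannot be settled by the generation and transport arguments alone. Here I would fall back on the discrete Morse theory of section \ref{sec:morse}, matching the Morse presentation's critical $1$- and $2$-cells against the proposed generators and relations, or comparing abelianizations and Euler characteristics, to confirm that no further relations survive. Controlling the interaction of $\delta$ with the $Y$-exchanges across the whole graph, uniformly in $n$, is where the bulk of the technical work lies.
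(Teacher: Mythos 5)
Your outline reproduces the easy half of the theorem (the generating set, via Theorem~\ref{thm:generators} and Proposition~\ref{proposition:generators}), but the step from ``set every $\gamma_j=1$'' to ``the surviving $Y$-exchanges together with $\delta$ satisfy the genuine braiding relation'' is precisely where the entire difficulty of the theorem sits, and as stated it is not justified --- indeed it is false without the $3$-connectivity hypothesis doing real work. After killing the one-particle moves one does not have a single exchange generator but a whole family $Y^{v;a,b}$, one for each essential vertex $v$ and each pair of edges $a<b$ at $v$, and the quasi-braiding relations only degenerate to \emph{mixed} relations among these still-distinct generators. The paper's $2$-connected example $\Theta_4$ makes this concrete: in $\bar\bfB_n(\Theta_4)$ relation \eqref{theta4-rel} degenerates to the mixed pseudo-braid relation \eqref{theta-yang-baxter}, a generalised Yang--Baxter relation among three distinct generators, not Artin's relation --- which is exactly why Proposition~\ref{braid-2connected} must impose the identification of all $Y$-exchanges as an \emph{additional} quotient in the $2$-connected case. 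The content of the main theorem is that for $3$-connected planar graphs this identification is automatic once the $\gamma_j$ are killed. The paper proves this via the $\Theta$-relatedness machinery: every embedded copy of $\Theta_3$ or $\vartheta_3$ compatible with the spanning tree yields a relation of the form \eqref{theta-rel}, namely $\delta\gamma\delta^{-1}=Y^{w;2,1}\gamma Y^{v;1,2}$, which becomes an identification of two $Y$-exchanges once the one-particle move $\gamma$ is killed (Lemma~\ref{lemma:related}); and Proposition~\ref{proposition:all related} shows that $3$-connectivity forces \emph{all} triples $(p;a,b)$ to be connected by chains of such embeddings. Your appeal to Menger's theorem (``three internally disjoint paths to transport an exchange'') is not the mechanism actually needed: the proof uses planarity and $3$-connectivity through Theorem~\ref{nonseparating} (a cycle in a simple $3$-connected planar graph is facial iff nonseparating, so $\graf\setminus\delta$ stays connected), through the connectivity of $\graf$ minus two vertices, and through an induction on the tree-distance from a vertex to the distinguished face, with a delicate region argument ($L$, $R$) that relies on faces being disks. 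None of this, nor any substitute for it, appears in your proposal.

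On the completeness worry you raise at the end: you are right that generation arguments alone cannot show the quotient \emph{is} $\bfB_n(\RR^2)$, but your proposed fallbacks are both weaker than required and different from what the paper does. Comparing abelianizations or Euler characteristics cannot rule out a proper intermediate quotient (the abelianization of $\bfB_n(\RR^2)$ is $\ZZ$, which is far from characterising it), and matching Morse critical $2$-cells against the proposed relations uniformly in $n$ is a much harder undertaking than the theorem needs. The paper sidesteps completeness entirely with a retraction trick inside the proof of Proposition~\ref{braid-2connected}: the planar embedding induces $\iota_*:\bfB_n(\graf)\to\bfB_n(\RR^2)$, which kills all $\gamma_j$ and identifies all $Y$-exchanges, hence factors through the quotient as $\bar\iota_*$; conversely, because the braid and commutation relations hold among the classes $\bar\sigma_i=\psi(\delta^{i-1}Y^{v;a,b}\delta^{1-i})$, there is a surjection $f:\bfB_n(\RR^2)\to\bfB_n(\graf)/\sim$, and $\bar\iota_*\circ f=\mathrm{id}$ forces $f$ to be injective, hence an isomorphism. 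No complete presentation of $\bfB_n(\graf)$ is ever computed; supplying this two-sided argument (or an equivalent) is essential to close your proof.
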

The above theorem has important conceptual implications for the description of quantum mechanics on graphs. Namely, quantum mechanics on any planar $3$-connected graph can be understood in terms of the $2D$ quantum mechanics. However, as we point out in examples in section \ref{sec:yang-baxter}, this is not the case for $2$-connected graphs where we obtain some generalisations of the $2D$ braiding relation and the Yang-Baxter equation.

Let us finalise this section by a brief review of other approaches to the description of quantum statistics on graphs. One of the earliest works in this area \cite{BE92} tackled the problem of self-adjoint extensions of a multi-particle free hamiltonian defined on the graph configuration space $C_n(\Gamma)$. Importantly, families of self-adjoint extensions turned out to be parametrised by unitary representations of $\bfB_n(\Gamma)$. Similar approach has been utilised in more recent works \cite{Bolte13} to describe self-adjoint extensions of $n$-particle hamiltonians for bosons and fermions with pairwise singular interactions on $\Gamma^n$. Other authors proposed effective discrete hopping hamiltonians for interacting anyons that are defined on discrete configuration spaces of graphs $D_n(\Gamma)\subset C_n(\Gamma)$ \cite{HKR}. For abelian anyons, this led to full classification of abelian quantum statistics on graphs \cite{HKRS} in terms of the first homology group $H_1(C_n(\Gamma),\ZZ)$. For non-abelian quantum statistics, higher homology groups have been used to extract information about classification of flat complex bundles over $C_n(\Gamma)$ \cite{MS17,MS19}. Homology groups of graph configuration spaces as well as presentations of graph braid groups are also subjects of independent interest in mathematics, see \cite{BDKhom,BDKstab,Ramos}.

\section{Preliminaries}\label{sec:prel}

Let $\graf$ be a graph understood as a one-dimensional finite CW-complex. Sets of 0-cells and 1-cells of $\Gamma$ are called vertices and edges respectively. The set of vertices of $\Gamma$  will be denoted by $V$ and  the set of edges by $E$.

We say that two topological graphs $\graf$ and $\graf'$ are \emph{isomorphic} if they are isomorphic as combinatorial graphs, and \emph{homeomorphic} if they are homeomorphic as topological spaces. It is then clear that for each homeomorphic class of a graph $\graf$, there are infinitely many non-isomorphic graphs, which are homeomorphic to $\graf$. However, all of them are related by \emph{subdivision} or \emph{smoothing}.

A \emph{subdivision} $\graf'$ of $\graf$ is a graph obtained by adding one vertex in the middle of a chosen edge of $\Gamma$. In other words, subdivision is the effect of replacing an edge of $\graf$ with linear graph $I_2$ consisting of 3 vertices and two edges. Conversely, we call $\graf$ a \emph{smoothing} of $\graf'$ and we denote it by $\graf'\prec\graf$.

For a given graph $\graf$, we denote the set of isomorphism classes of graphs homeomorphic to $\graf$ by $\cG(\graf)$.
Then $\prec$ defines a partial order in $\cG(\graf)$, which has a unique minimal element $\graf_0$ obtained by smoothing all bivalent vertices of $\graf$.

What is more, some of the graphs in $\cG(\graf)$ can be realised as simplicial complexes. Let $\cG_\Delta(\graf)$ be the subset of $\cG(\graf)$ consisting of simplicial complexes. Then $\cG_\Delta(\graf)$ has a unique minimal element $\graf_\Delta$ with respect to the partial order $\prec$ defined above, which can be obtained from $\graf_0$ by a series of subdivisions according to the following rules.
\begin{enumerate}
\item Subdivide each simple loop twice so that it forms a triangle.
\[
\graf_0=\begin{tikzpicture}[baseline=-.5ex,scale=0.5]
\draw(0,0) circle (1);
\fill(1,0) circle (0.1);
\end{tikzpicture}\longmapsto
\graf_\Delta=
\begin{tikzpicture}[baseline=-.5ex,scale=0.5]
\draw(0:1)--(120:1)--(240:1)--cycle;
\fill(1,0) circle (0.1) (120:1) circle (0.1) (240:1) circle (0.1);
\end{tikzpicture}
\]
\item Subdivide all multiple edges except for one of each.
\[
\graf_0=\begin{tikzpicture}[baseline=-.5ex,scale=0.5]
\draw(0,0) circle (1);
\draw(0,0) ellipse (0.5 and 1);
\fill(0,1) circle (0.1) (0,-1) circle (0.1);
\end{tikzpicture}\longmapsto
\graf_\Delta=
\begin{tikzpicture}[baseline=-.5ex,scale=0.5]
\draw(0,1)--(0,-1)--(0.5,0)--(0,1)--(1,0)--(0,-1)--(1.5,0)--(0,1);
\fill(0,1) circle (0.1) (0,-1) circle (0.1) (1.5,0) circle (0.1) (0.5,0) circle (0.1) (1,0) circle (0.1);
\end{tikzpicture}
\]
\end{enumerate}
We call $\graf_\Delta$ the \emph{minimal simplicial representative} of $\cG(\graf)$.

A graph $\graf$ is called \emph{combinatorially $k$-connected} if it has at least two vertices and for any pair $\{v,w\}$ of distinct vertices, there exist $k$ internally-disjoint paths between $v$ and $w$. A singleton graph consisting of one vertex without edges will be regarded as combinatorially $1$-connected but not $k$-connected for any $k\ge 2$. Due to Menger's theorem the combinatorial $k$-connectedness is equivalent to the fact that it is not possible to pick a set of $k-1$ distinct vertices, $K\subset V(\Gamma)$, that disconnects $\Gamma$. In other words, the induced subgraph spanned on vertices $V(\Gamma)\setminus K$ is never disconnected. 

We say that a graph $\graf$ is \emph{topologically $k$-connected} if its minimal simplicial representative $\graf_\Delta$ is combinatorially $k$-connected. In this paper, topological connectivity is the default definition of connectivity, hence topologically $k$-connected graphs will be called just \emph{$k$-connected}.

\begin{example}
The following holds. Let $k\ge 3$.
\begin{itemize}
\item Star graphs $S_k$ are $1$-connected but not $2$-connected.
\begin{align*}
S_3&=\begin{tikzpicture}[baseline=-.5ex,scale=0.5]
\fill (0,0) circle (0.1);
\foreach \i in {0,1,2} {
\draw(0,0) --({120*\i}:1);
\fill({120*\i}:1) circle (0.1);
}
\end{tikzpicture}&
S_4&=\begin{tikzpicture}[baseline=-.5ex,scale=0.5]
\fill (0,0) circle (0.1);
\foreach \i in {0,...,3} {
\draw(0,0) --({90*\i}:1);
\fill({90*\i}:1) circle (0.1);
}
\end{tikzpicture}&
S_5&=\begin{tikzpicture}[baseline=-.5ex,scale=0.5]
\fill (0,0) circle (0.1);
\foreach \i in {0,...,4} {
\draw(0,0) --({72*\i}:1);
\fill({72*\i}:1) circle (0.1);
}
\end{tikzpicture}&
\cdots
\end{align*}
\item Cyclic graphs $C_k$ are $2$-connected but not $3$-connected.
\begin{align*}
C_3&=\begin{tikzpicture}[baseline=-.5ex,scale=0.5]
\foreach \i in {0,1,2} {
\draw({120*(\i+1)}:1) --({120*\i}:1);
\fill({120*\i}:1) circle (0.1);
}
\end{tikzpicture}&
C_4&=\begin{tikzpicture}[baseline=-.5ex,scale=0.5]
\foreach \i in {0,...,3} {
\draw({90*(\i+1)}:1) --({90*\i}:1);
\fill({90*\i}:1) circle (0.1);
}
\end{tikzpicture}&
C_5&=\begin{tikzpicture}[baseline=-.5ex,scale=0.5]
\foreach \i in {0,...,4} {
\draw({72*(\i+1)}:1) --({72*\i}:1);
\fill({72*\i}:1) circle (0.1);
}
\end{tikzpicture}&
\cdots
\end{align*}
\item Graphs $\Theta_k$ are topologically $2$-connected but not $3$-connected.
\begin{align*}
\Theta_3&=\begin{tikzpicture}[baseline=-.5ex,scale=0.5]
\fill(-1,0) circle (0.1) (1,0) circle (0.1);
\draw(-1,0)--(1,0) (0,0) circle (1);
\end{tikzpicture}&
\Theta_4&=\begin{tikzpicture}[baseline=-.5ex,scale=0.5]
\fill(-1,0) circle (0.1) (1,0) circle (0.1);
\draw (0,0) ellipse (1 and 0.5);
\draw (0,0) circle (1);
\end{tikzpicture}&
\Theta_5&=\begin{tikzpicture}[baseline=-.5ex,scale=0.5]
\fill(-1,0) circle (0.1) (1,0) circle (0.1);
\draw(-1,0)--(1,0) (0,0) circle (1);
\draw (0,0) ellipse (1 and 0.5);
\end{tikzpicture}&
\cdots
\end{align*}
\item Wheel graphs $W_k$ are $3$-connected.
\begin{align*}
W_3&=\begin{tikzpicture}[baseline=-.5ex,scale=0.5]
\fill (0,0) circle (0.1);
\foreach \i in {0,1,2} {
\draw({120*(\i+1)}:1) --({120*\i}:1);
\fill({120*\i}:1) circle (0.1);
\draw(0,0) --({120*\i}:1);
}
\end{tikzpicture}&
W_4&=\begin{tikzpicture}[baseline=-.5ex,scale=0.5]
\fill (0,0) circle (0.1);
\foreach \i in {0,...,3} {
\draw({90*(\i+1)}:1) --({90*\i}:1);
\fill({90*\i}:1) circle (0.1);
\draw(0,0) --({90*\i}:1);
}
\end{tikzpicture}&
W_5&=\begin{tikzpicture}[baseline=-.5ex,scale=0.5]
\fill (0,0) circle (0.1);
\foreach \i in {0,...,4} {
\draw({72*(\i+1)}:1) --({72*\i}:1);
\fill({72*\i}:1) circle (0.1);
\draw(0,0) --({72*\i}:1);
}
\end{tikzpicture}&
\cdots
\end{align*}
\item Any topologically non-$2$-connected graph has a vertex $v_0$ such that $\Gamma\setminus\{v_0\}$ is not connected.
\end{itemize}
\end{example}

Let $\graf$ be a graph. In further parts of this paper, we will often consider the space $\tilde \graf^{S_3}\coloneqq \{\iota:S_3\to \graf\}$ of all topological embeddings from $S_3$ to $\graf$.
\[
\left(\iota:S_3=\begin{tikzpicture}[baseline=-.5ex,scale=0.5]
\fill (0,0) circle (0.1) (0:1) circle (0.1) (120:1) circle (0.1) (240:1) circle (0.1);
\draw(0,0) -- (120:1) (0,0)--(240:1) (0,0)--(1,0);
\end{tikzpicture}\to\graf\right)\in \tilde\graf^{S_3}.
\]
This means that embedding $\iota$ may not the map the endpoint each leaf of $S_3$ to a vertex of $\graf$. However, it must send the central vertex of $S_3$ to an essential vertex of $\graf$. We say that two such embeddings $\iota$ and $\iota'$ are \emph{equivalent} if they are isotopic as embeddings (up to automorphisms of $S_3$) between topological spaces.
We denote the resulting quotient space by $\graf^{S_3}$. As a result, $\graf^{S_3}$ is a finite discrete set whose each element can be (uniquely) characterised by a pair $(v,\bfe=\{e_1,e_2,e_3\})$, where $v$ is a vertex and $e_i$'s are three distinct edges adjacent to $v$.

\subsection{Presentations of the braid group for particles in $\RR^2$}

In this paper we will often make references to different presentations of the planar braid group $\bfB_n(\RR^2)$. Let us next briefly review the relevant presentations. As a general reference for this subsection, we direct the reader to \cite{braids}. We start with the Artin presentation of $\bfB_n(\RR^2)$. In this presentation, generators of $\bfB_n(\RR^2)$ are called simple braids. We assume that the initial configuration is such that particles $1,\hdots, n$ are aligned on a line segment in $\RR^2$ in an equidistant way. Simple braid $\sigma_i$ exchanges particle $i$ with particle $i+1$ in a clockwise direction while other particles remain in their fixed positions (the choice of anti-clockwise exchange is a matter of convention). In this way, we obtain the set of $n-1$ generators $\sigma_1,\hdots,\sigma_{n-1}$. Clearly, simple braids involving disjoint sets of particles commute with each other, i.e. $\sigma_i\sigma_j=\sigma_j\sigma_i$ for $|i-j|\geq 2$. Importantly, simple braids involving triples of consecutive particles satisfy the braiding relation, i.e. $\sigma_i\sigma_{i+1}\sigma_i=\sigma_{i+1}\sigma_i\sigma_{i+1}$ for $i\in \{1,\hdots, n-2\}$. 

When applied to a special class of unitary representations of $\bfB_n(\RR^2)$, the braiding relation leads to the celebrated Yang-Baxter equation. This concerns high-dimensional representations of $\bfB_n(\RR^2)$ to unitary operators acting on the Hilbert of $n$ particles. We assume that the Hilbert space of a single particle is finite-dimensional, $\mH_1=\CC^d$. Recall that the $n$-particle Hilbert space is the $n$-fold tensor product of single-particle spaces, i.e. $\mH_n=\left(\CC^d\right)^{\otimes n}$. The Yang-Baxter equation is an equation for a unitary operator $R:\ \CC^d\otimes \CC^d\to \CC^d\otimes \CC^d$, called the $R$-matrix, that satisfies
\begin{equation}\label{yang-baxter}
(R\otimes\bone)(\bone\otimes R)(R\otimes\bone)=(\bone\otimes R)(R\otimes\bone)(\bone\otimes R).
\end{equation}
The corresponding representation $\bfB_n(\RR^2)\to U(d^n)$ is constructed by assigning
\[\sigma_i\mapsto U_i:=\bone\otimes\hdots\otimes\bone\otimes R_{i,i+1}\otimes\bone\otimes\hdots\otimes\bone.\]
Because the $R$-matrix satisfies equation (\ref{yang-baxter}), operators $U_i$ automatically satisfy the braiding relation. We will revisit this construction in section \ref{sec:yang-baxter}.

Another presentation of $\bfB_n(\RR^2)$ that will be crucial for this paper uses only two generators. This is done by introducing the so-called total braid $\delta:=\sigma_1\sigma_2\hdots\sigma_{n-1}$. By utilising the commutative and braiding relations for simple braids, one can verify that $\sigma_{i+1}=\delta\sigma_i\delta^{-1}$. This allows one to recursively express any simple braid as $\sigma_{i+1}=\delta^{i}\sigma_1\delta^{-i}$. When expressed in terms of generators $\delta$ and $\sigma_1$, braiding and commutative relations have the following forms
\begin{gather*}
\sigma_1\delta\sigma_1\delta^{-1}\sigma_1=\delta\sigma_1\delta^{-1}\sigma_1\delta\sigma_1\delta^{-1}, \\
\sigma_1\delta^{j-i}\sigma_1\delta^{i-j}=\delta^{j-i}\sigma_1\delta^{i-j}\sigma_1{\quad\mathrm{\ for\ }}|i-j|\geq2.
\end{gather*}

\section{Presentations of graph braid groups}
Throughout this section, we assume that i) graph $\graf$ is finite, connected and has at least one essential vertex, ii)  $\graf$ is {\textit{sufficiently subdivided}} with respect to a fixed $n>1$. This means that the edge-length of every path between two essential vertices is at least $n-1$ and the edge-length of every loop is at least $n+1$.

\subsection{Morse presentations}\label{sec:morse}
We first review Farley-Sabalka's algorithm \cite{FSbraid} which provides a group presentation of the graph braid group for any graph  in terms of cells of an appropriate Morse complex. Certain features of Morse presentations of graph braid groups will be key ingredients of some of our proofs in later subsections. Farley-Sabalka's algorithm realises a specific instance of Forman's discrete Morse theory \cite{Forman} on the {\textit{discrete configuration space}}, $D_n(\Gamma)$. A discrete configuration space is a regular cube complex that is a deformation retract of $C_n(\Gamma)$. For the deformation-retracion $C_n(\Gamma)\to D_n(\Gamma)$ to work, $\graf$ has to be sufficiently subdivided with respect to the particle number $n$ \cite{Abrams,Abrams-subdiv}. Cells of $D_n(\Gamma)$ have the following form
\[
c=\{c_1,c_2,\hdots,c_n\}\subset \graf\mid \bar c_i\cap \bar c_j=\emptyset\quad \forall i\neq j
\]
where $\{c_i\}_{i=1}^n$ are mutually disjoint closed cells of $\graf$, i.e. they are either edges or vertices. Cell $c$ can be viewed as a cube
\[
c\cong \bigtimes_{i=1}^n c_i.
\]
Hence the dimension of $c$ is given by $|E(\Gamma)\cap c|$. In this way, the discrete configuration space can also be viewed as a proper subset of $C_n(\Gamma)$. 

The next step is the construction of a \emph{rooted spanning tree} that determines the ordering of vertices of $\graf$.
We choose a rooted spanning tree $(T,*)\subset \graf$ such that $*$ is univalent in $T$. Edges that belong to the complement of $T$ in $\Gamma$ are called \emph{deleted edges}. We assume that $T$ is chosen so that the endpoints of deleted edges are always bivalent vertices.

We fix a planar embedding $T\subset \RR^2$ and label vertices in $T$ as follows.
\begin{itemize}
\item The root $*$ is labelled by $0$.
\item To determine labels of the remaining vertices, we imagine a thin ribbon neighbourhood of $T$. Starting from $*$ and travelling clockwise along the boundary of the ribbon, we label the visited vertices as $1,2,\hdots, |V|-1$.
\end{itemize}
We denote the deleted edge that is adjacent to $*$ by $e_0$ (if such an edge exists). Between any two vertices $\{v_1,v_2\}\subset V(\Gamma)$ there exists a unique edge path from $v_1,v_2$ in $T$. Such a path will be denoted by $[v_1,v_2]$. Therefore, by choosing a rooted spanning tree, we have a linear order on $V(\graf)$ that we denote by $<$. For any edge $e\in E(\Gamma)$ we define its initial and terminal vertices denoted by $\iota(e)$ and $\tau(e)$ respectively so that $\iota(e)>\tau(e)$. Similarly, for any $v\in V(\graf)$ we define $e(v)$ as the edge for which $\iota(e(v))=v$. For an embedding between labelled graphs $f:\graf_1\to\graf_2$, we say that $f$ is \emph{order-preserving} if it preserves the order of vertices.

\begin{example}
For a subdivided $\Theta_k$, we may choose a rooted spanning tree $(T_k,*)$ as follows:
\[
(T_k,*)=\begin{tikzpicture}[baseline=-.5ex]
\fill(-1,0) circle (2pt) (1,0) circle (2pt);
\draw(-1,0)--(-0.5,0) (0.5,0)--(1,0) (60:1) arc (60:-60:1) node {$*$} (120:1) arc (120:240:1);
\draw (60:1 and 0.5) arc (60:-240:1 and 0.5);
\fill(60:1 and 0.5) circle (2pt) (60:1) circle (2pt) (120:1) circle (2pt) (240:1) circle (2pt) (120:1 and 0.5) circle (2pt) (0.5,0) circle (2pt) (-0.5,0) circle (2pt);
\end{tikzpicture}
\subset\begin{tikzpicture}[baseline=-.5ex]
\fill(-1,0) circle (2pt) (1,0) circle (2pt);
\fill(60:1 and 0.5) circle (2pt) (60:1) circle (2pt) (120:1) circle (2pt) (240:1) circle (2pt) (120:1 and 0.5) circle (2pt) (0.5,0) circle (2pt) (-0.5,0) circle (2pt) (-60:1) circle (2pt);
\draw(-1,0)--(1,0) (0,0) circle (1);
\draw (0,0) ellipse (1 and 0.5);
\end{tikzpicture}=\Theta_k
\]
Then the ribbon of $T_k$ and labels are given below.
\[
(T_k,*)=\begin{tikzpicture}[baseline=-.5ex]
\draw[line width=7,cap=round](-1,0)--(-0.5,0) (0.5,0)--(1,0) (60:1) arc (60:-60:1) (120:1) arc (120:240:1);
\draw[line width=7,cap=round] (60:1 and 0.5) arc (60:-240:1 and 0.5);
\draw[line width=6,cap=round,lightgray](-1,0)--(-0.5,0) (0.5,0)--(1,0) (60:1) arc (60:-60:1) (120:1) arc (120:240:1);
\draw[line width=6,cap=round,lightgray] (60:1 and 0.5) arc (60:-240:1 and 0.5);
\draw(-1,0)--(-0.5,0) (0.5,0)--(1,0) (60:1) arc (60:-60:1) node {$*$} node[below=1pt] {$0$} (120:1) arc (120:240:1);
\draw(60:1 and 0.5) arc (60:-240:1 and 0.5);
\fill(-1,0) circle (2pt) node[left=2pt] {$2$} (1,0) circle (2pt) node[right=2pt] {$1$};
\fill(60:1 and 0.5) circle (2pt) node[left=1pt] {$8$} (60:1) circle (2pt) node[left=1pt] {$9$} (120:1) circle (2pt) node[right=1pt] {$4$} (240:1) circle (2pt) node[below=1pt] {$3$} (120:1 and 0.5) circle (2pt) node[right=1pt] {$5$} (0.5,0) circle (2pt) node[left=1pt] {$7$} (-0.5,0) circle (2pt) node[right=1pt] {$6$};
\end{tikzpicture}
\]
\end{example}

In the description of the graph braid group, we are particularly interested in loops and paths in $C_n(\Gamma)$. Note first that any path in $C_n(\Gamma)$ with its endpoints contained in $D_n(\Gamma)\subset C_n(\Gamma)$ is homotopy equivalent to a path that is entirely contained in $D_n(\Gamma)$. Furthermore, any path in $D_n(\Gamma)$ is homotopy equivalent to a path that is fully contained in the $1$-skeleton of $D_n(\Gamma)$. In particular, any element $\bfB_n(\Gamma)$ can be represented as a word of $1$-cells on $D_n(\Gamma)$. More precisely, cell $\{e,v_1,\hdots,v_{n-1}\}$ is viewed as a directed path from $\{\iota(e),v_1,\hdots,v_{n-1}\}$ to $\{\tau(e),v_1,\hdots,v_{n-1}\}$. 

The general idea of the discrete Morse theory is to contract configuration space $D_n(\Gamma)$ to a much smaller Morse complex $M_n(\Gamma)$ whose $1$-skeleton is a wedge of circles based at the same point given by configuration $\{0,1,\hdots,n-1\}$. Then, one uses the fact that $\bfB_n(\Gamma)\cong\pi_1\left(M_n(\Gamma)\right)$, i.e. $\bfB_n(\Gamma)$ is generated by the circles in the $1$-skeleton of $M_n(\Gamma)$.

The construction of discrete Morse theory relies on the notion of {\it Morse matching}. A Morse matching is a collection of partially defined maps $\{W_d\}_{d=0}^{n}$ where $W_i$ maps some of $d$-cells in $D_n(\Gamma)$ to some of $(d+1)$-cells. In Farley-Sabalka's algorithm the Morse matching can be understood in terms of imposing certain rules for particle movement on $\graf$. A $d$-dimensional cell $c\in D_n(\Gamma)$ is viewed as a move where $d$ particles simultaneously slide along edges $e\in E(\Gamma)\cap c$ in the direction from $\iota(e)$ to $\tau(e)$. We say that an edge $e\in c$ is $\emph{order-respecting}$ if $e\subset T$ and there are no vertices $v\in c$ for which $v<\tau(e)$ and $\tau(e(v))=\tau(e)$. Intuitively, on junctions in $\graf$ only particles that have no other particles to their right (in terms of vertex ordering on $T$) are allowed to move first. In a similar spirit one defines \emph{blocked vertices} as vertices whose movement is blocked in $c$ by some other particles. Vertex $v\in c$ is blocked if there exists $c_i\in c$ such that $\tau(e(v))\cap c_i\neq\emptyset$. Clearly, any edge $e\in c$ is either order-respecting or non order-respecting. Similarly, $v\in c$ can be either blocked or unblocked. 
\begin{definition}
A cell $c\in D_n(\Gamma)$ is critical if and only all vertices $v\in V(\Gamma)\cap c$ are blocked and all edges $e\in E(\Gamma)\cap c$ are non order-respecting.
\end{definition}
The definition of map $W_d$ is recursive. Namely, the domain of $W_d$ consists of cells that do not belong to the image of $W_{d-1}$ and are not critical. For such cells, $W_d$ finds the lowest unblocked vertex in $c$, $v_0$, and replaces it with edge $e(v)$, i.e.
\[W_d(c)=\left(c\setminus\{v_0\}\right)\cup\{e(v_0)\}.\]
Cells that belong to the domain of some $W_d$ are called \emph{redundant} and cells that belong to the image of some $W_{d}$ are called \emph{collapsible}. The Morse complex $M_n(\Gamma)$ is a complex whose $d$-skeleton consists of critical $d$-cells in $D_n(\Gamma)$. Importantly, any $d$-cell $c$ in can be mapped to $M_n(\Gamma)$ as a word that consists only of critical cells. This is done by means of the \emph{principal reduction} $F$. We will specify this construction only for $1$-cells as this is the only relevant case for this paper. To this end, consider the boundary word of $2$-cell $W_1(c)$ for a redundant cell $c$ and bring it to the form $cw$ where $w$ is a word of appropriate $1$-cells. Then, we define the principal reduction of a redundant $c$ as $F(c)=w^{-1}$. The action of $F$ is extended to critical sells as $F(c)=c$ and to collapsible cells as $F(c)=1$. We extend $F$ to any word of $1$-cells in a natural way. By applying map $F$ to any word sufficiently many times, we always end up with a word that consists only of critical cells. Such a word is invariant under map $F$. Let us denote such an infinite iteration of $F$ by $F^\infty$. We are now ready to state the central theorem of this subsection.
\begin{theorem}[Farley-Sabalka]\label{thm:FS}
Group $\bfB_n(\graf)$ is generated by critical $1$-cells in $D_n(\Gamma)$. Relators are given by $F^\infty(w)$ where $w$ goes through boundary words of all critical $2$-cells in $D_n(\Gamma)$.
\end{theorem}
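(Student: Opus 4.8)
The plan is to deduce the statement from Forman's discrete Morse theory \cite{Forman}, whose output is that whenever a regular CW complex carries an acyclic matching, it is homotopy equivalent to a CW complex with exactly one cell of dimension $d$ for each critical $d$-cell, and the attaching maps of these cells are recorded by the induced gradient flow. Since $D_n(\Gamma)$ is a deformation retract of $C_n(\Gamma)$ and $\pi_1$ is a homotopy invariant, it suffices to establish two things: (i) that the Farley--Sabalka rules $\{W_d\}$ genuinely define an \emph{acyclic} matching on $D_n(\Gamma)$, so that $M_n(\Gamma)$ is a bona fide Morse complex and $\bfB_n(\graf)\cong\pi_1(M_n(\Gamma))$; and (ii) that the presentation of $\pi_1(M_n(\Gamma))$ induced by the flow has precisely the claimed generators and relators.

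For step (i) I would first check that $\{W_d\}$ is a partial matching. Each $W_d$ is injective on its domain, since the lowest unblocked vertex $v_0$ and the remainder $c\setminus\{v_0\}$ reconstruct one another; the domain of $W_d$ is disjoint from the image of $W_{d-1}$ by the recursive definition; and no cell is simultaneously redundant and collapsible. The substantive point is acyclicity, i.e.\ the absence of a nontrivial closed gradient path. The clean way to obtain it is to produce a discrete Morse function compatible with the matching --- equivalently, a weight on cells that jumps by exactly one across each matched pair and strictly decreases along every gradient step that passes to an \emph{unmatched} codimension-one face. The Farley--Sabalka rules admit such a weight, built from the labels of the occupied vertices (essentially the position of the lowest unblocked vertex together with a lexicographic tie-break), and finiteness of the cell set then forbids a cycle; this is exactly the combinatorial verification of \cite{FSbraid}. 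The same monotonicity also guarantees that the principal reduction stabilises, so that $F^\infty$ is well defined: each application of $F$ replaces a redundant $1$-cell by the complementary letters of the boundary word of its matched $2$-cell, which are strictly lower in the weight, whence iteration terminates on a word of critical cells.

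For step (ii) I would first locate the critical $0$-cells. In any $0$-cell other than the base configuration $\{0,1,\dots,n-1\}$ some particle occupies a vertex $v$ with $\tau(e(v))$ unoccupied, so $v$ is unblocked and the cell is not critical; hence $\{0,1,\dots,n-1\}$ is the unique critical $0$-cell and every critical $1$-cell is a loop based there. The acyclic matching collapses away all collapsible and redundant $1$-cells, so the $1$-skeleton of $M_n(\Gamma)$ is a wedge of circles indexed by the critical $1$-cells; these are the generators. Each critical $2$-cell contributes a relator equal to its attaching map, read off as the image of its boundary word under the gradient flow. Tracking the flow cell by cell shows that collapsing one matched pair is implemented algebraically by a single application of $F$ --- a collapsible $1$-cell is erased, and a redundant $1$-cell is rewritten via the boundary relation of the $2$-cell $W_1$ matched to it --- so the attaching map of a critical $2$-cell $w$ is precisely $F^\infty(w)$. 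The standard presentation of the fundamental group of a two-dimensional complex with one vertex, loops given by the critical $1$-cells, and $2$-cells attached along the words $F^\infty(w)$, then yields exactly the asserted presentation of $\bfB_n(\graf)$.

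The main obstacle is the acyclicity in step (i): it is the single place where the detailed combinatorics of the Farley--Sabalka ordering (order-respecting edges, blocked vertices, and the ``lowest unblocked vertex first'' rule) genuinely enter, and it is simultaneously what makes the Morse collapse valid and what makes the iteration $F^\infty$ terminate. Once acyclicity and the compatible weight are in hand, everything else is a formal consequence of Forman's theorem together with the standard edge-path presentation of $\pi_1$.
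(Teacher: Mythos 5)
A preliminary remark: the paper itself offers no proof of this statement --- it is an attributed quotation of the main theorem of \cite{FSbraid}, and the surrounding subsection only assembles the definitions needed to state it. So your proposal can only be measured against Farley--Sabalka's original argument, whose overall architecture (Forman's theorem \cite{Forman} applied to the matching $\{W_d\}$, a unique critical $0$-cell, generators from critical $1$-cells, relators from $F^\infty$-images of boundary words of critical $2$-cells) you reproduce faithfully. The problem is that the steps of your outline carrying the actual mathematical content are either deferred to the reference being proved or argued incorrectly, so the proposal is a roadmap rather than a proof.

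Concretely, two gaps. First, the heart of the matter --- that $\{W_d\}$ is a well-defined acyclic matching and that the principal reduction terminates --- is not proved: you assert the existence of a compatible weight function and refer for its construction to ``exactly the combinatorial verification of \cite{FSbraid}'', which is circular when the statement being proved \emph{is} the theorem of \cite{FSbraid}. (Even well-definedness is less immediate than you suggest: your injectivity argument for $W_d$ only shows that $W_d(c)$ is determined by $c$; to have a matching one must rule out $W_d(c)=W_d(c')$ for $c\neq c'$, i.e.\ show that the inserted edge can be recognised inside the image cell, which is one of the lemmas of \cite{FSbraid}.) Second, your identification of the critical $0$-cells is wrong as stated. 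By the definition of blocked vertices, a $0$-cell is critical precisely when every occupied vertex other than the root has its parent in $T$ occupied, so \emph{any} configuration closed under passing to parents and containing the root is critical, not only $\{0,1,\dots,n-1\}$. For instance, if the essential vertex $u$ nearest to the root is at distance $d\le n-2$ from $*$, then occupying the path from $*$ to $u$, the next $n-d-2$ vertices along the first branch at $u$, and the first vertex of the second branch at $u$ produces a critical $0$-cell different from $\{0,1,\dots,n-1\}$: every particle there has $\tau(e(v))$ occupied, contradicting your claim that some particle is always unblocked. Uniqueness of the critical $0$-cell needs the hypothesis that the vertices $0,1,\dots,n-1$ form a path in $T$ ascending from the root, so that every vertex $v\ge n$ has all of $0,\dots,n-1$ among its ancestors; this is exactly what the paper's phrase ``under the above assumptions about the choice of the spanning tree'' is doing, and it is arranged by sufficient subdivision together with the placement of $*$, not by the matching rules alone. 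Without it the Morse complex has several vertices, every statement about loops based at a single $0$-cell breaks down, and the claimed presentation must be corrected by a choice of spanning tree in the Morse $1$-skeleton.
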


Recall that any critical $1$-cell has the form $c=\{e,v_1,\hdots,v_{n-1}\}$ where $e$ is a non order-respecting edge and all $\{v_i\}_{i=1}^{n-1}$ are blocked. We will distinguish the following two types of critical $1$-cells.
\begin{enumerate}
\item Critical cells associated with an essential vertex, i.e. $e\in T$. 
\item Critical cells associated with a deleted edge, i.e. $e\in \graf\setminus T$. 
\end{enumerate}
A critical cell at an essential vertex has the following properties i) $\tau(e)$ is an essential vertex and there exists $i$ such that $\tau(e(v_i))=\tau(e)$ and $\tau(e)<v_i<\iota(e)$, ii) for every $i$, we have $\tau(e(v_i))=v_j$ for some $i\neq j$ or $\tau(e(v_i))=\iota(e)$. Hence, the form of such a critical $1$-cell can be uniquely determined by specifying i) the relevant essential vertex, ii) the distribution of particles on leaves incident at the essential vertex and iii) the leaf at $v$ containing the non-order respecting edge in $c$. If the essential vertex is $v$ and $v$ is of valence $k$, such a critical cell will be denoted by $v_i(\bfb)$, where $i\in\{1,\hdots,k-1\}$ is the leaf containing the non order-respecting edge and $\bfb=(b_1,b_2,\hdots,b_{k-1})$ is a sequence of nonnegative integers that specifies the distribution of particles on leaves at vertex $v$.

Furthermore, the assumption that end-vertices of all deleted edges are bivalent implies that all critical cells that are associated with a deleted edge $e$ are of the form $\{e,0,1,\hdots,n-2\}$ if $\tau(e)\neq*$ and $\{e,1,2,\hdots,n-1\}$ otherwise.

Note also that under the above assumptions about the choice of the spanning tree, there is just one critical $0$-cell in $D_n(\Gamma)$, namely $\{0,1,\hdots,n-1\}$. Hence, the $1$-skeleton of $M_n(\Gamma)$ consists of $1$-cells that are topological circles based at the unique $0$-cell.

 In the following subsections we establish a correspondence between geometric generators of $\bfB_n(\graf)$ and the above two types of critical $1$-cells. Namely, critical cells at essential vertices will correspond to generators of the \emph{star type} and critical cells associated with deleted edges will correspond to generators of the \emph{loop type}.

\subsection{Generators of the star type}\label{sec:star-gen}
One of the building blocks in geometric presentations of graph braid groups is an exhaustive description of $\bfB_n(\graf)$ when $\graf=S_k$, a star graph with $k$ leaves. This is because for any graph $\graf$ generators of $\bfB_n(S_k)$ can be regarded as generators of $\bfB_n(\graf)$. To see this, for an essential vertex $v\in V(\graf)$ of valency $k$ consider the following order-preserving embedding of $S_k$ into spanning tree $T\subset \graf$ centred at $v$
\[
\iota_v:(S_k,*)=\begin{tikzpicture}[baseline=-.5ex,scale=0.5]
\fill (0,0) circle (0.1);
\foreach \i in {0,...,4} {
\draw(0,0) --({72*\i}:1);
\fill({72*\i}:1) circle (0.1);
}
\draw (-72:1) node[right] {$*$};
\end{tikzpicture}\to (T,*)\subset\graf,
\]
where the central vertex of $S_k$ is mapped to $v$, the $0$-th edge of $S_k$ is mapped to a path from $*$ to $v$ and the $i$-th edge of $S_k$ ($0<i<k$) is mapped to the corresponding (subdivided) edge adjacent to $v$. For such an embedding, critical $1$-cells of $D_n(\graf)$ at $v$ are in a one-to-one correspondence with critical $1$-cells of $D_n(S_k)$. Hence, $\iota_v$ indices a well-defined map from $\bfB_n(S_k)$ to $\bfB_n(\graf)$.

In the remaining part of this subsection we will introduce geometric generators of the star type and by writing down all relations between them we will reproduce the following well-known result.
\begin{theorem}\cite{Ghirst,KP}
The $n$-braid group $\bfB_n(S_k)$ of the star graph $S_k$ is a free group of rank 
\[
N_1(n,k)\coloneqq(k-2)\binom{n+k-2}{n-1}-\binom{n+k-2}{n}+1.
\]
\end{theorem}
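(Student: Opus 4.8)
The plan is to compute the rank of the free group $\bfB_n(S_k)$ by counting the critical $1$-cells in the Morse complex $M_n(S_k)$ obtained from the Farley--Sabalka algorithm (Theorem \ref{thm:FS}). The first key observation is that $S_k$ is a tree, so it has no deleted edges, and consequently $\bfB_n(S_k)$ is a \emph{free} group: for a tree the spanning tree $T$ equals the whole graph, there are no loop-type generators, and more importantly there can be no critical $2$-cells. Indeed, a critical $2$-cell requires two non-order-respecting edges, but in a star graph all essential-vertex activity is concentrated at the single central vertex $v$, and two simultaneously-moving edges at the same vertex cannot both be non-order-respecting while their endpoints remain disjoint. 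Hence the Morse complex has trivial $2$-skeleton and $\bfB_n(S_k)$ is free on the set of critical $1$-cells.

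With freeness established, the whole problem reduces to a counting argument: I must show that the number of critical $1$-cells at the central vertex equals $N_1(n,k)$. First I would fix the labelling of $S_k$ induced by the rooted spanning tree, with the root placed on one leaf (say the $0$-th edge), so that the central vertex has valence $k$ and the remaining $k-1$ leaves carry the particles. Recall from the preliminary discussion that a critical $1$-cell at the central vertex is encoded by a pair: a distribution $\bfb=(b_1,\dots,b_{k-1})$ of nonnegative integers recording how many of the particles sit on each of the $k-1$ branches, together with a choice of index $i\in\{1,\dots,k-1\}$ marking the branch carrying the unique non-order-respecting moving edge. The remaining step is to translate the criticality conditions (all vertices blocked, the moving edge non-order-respecting) into explicit combinatorial constraints on the admissible pairs $(\bfb,i)$, and then to count them.

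The counting itself I would organise as follows. The number of ways to distribute the remaining $n-1$ particles (one particle is occupied by the moving edge) onto the $k-1$ branches, subject to the blocking and order-respecting constraints, is governed by the ordering of branches: the non-order-respecting condition forces the moving branch $i$ to have no occupied branch of lower index obstructing it, which singles out a specific branch, while the blocked condition constrains the others. Using stars-and-bars type arguments, the total count of weak compositions of $n$ (or $n-1$) into $k-1$ parts is $\binom{n+k-2}{n}$-type binomial coefficients, and the $(k-2)$ factor arises from the number of admissible choices of the distinguished moving branch relative to the ordering. I would carefully assemble these contributions, subtracting the configurations that fail criticality, to arrive at
\[
N_1(n,k)=(k-2)\binom{n+k-2}{n-1}-\binom{n+k-2}{n}+1.
\]

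The main obstacle I anticipate is getting the combinatorial bookkeeping exactly right: the criticality conditions interact subtly with the vertex labelling induced by the clockwise ribbon traversal, and it is easy to miscount by off-by-one errors in which branch can legitimately host the non-order-respecting edge or in whether the root-branch contributes. The cleanest route is probably to set up a generating-function or direct bijective argument for the admissible pairs $(\bfb,i)$ and verify it against small cases (e.g. $S_3$ with small $n$, where $N_1(n,3)=n-1$ should reproduce the known free rank), rather than manipulating the binomial identity $(k-2)\binom{n+k-2}{n-1}-\binom{n+k-2}{n}+1$ directly. Once the critical-cell count matches this formula, freeness from the first step immediately yields that $\bfB_n(S_k)$ is free of rank $N_1(n,k)$.
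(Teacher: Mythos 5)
Your high-level route is sound, and it is worth noting that it is not the paper's route at all: the paper never reproves this statement, it cites Ghrist and Ko--Park and in fact later \emph{uses} the stated rank as an input (the Hopfian argument identifying the abstract group $G_{n,k}$ with $\bfB_n(S_k)$ requires knowing beforehand that $\bfB_n(S_k)$ is free of rank $N_1(n,k)$). What you propose is essentially the proof from the cited literature: run Farley--Sabalka on $S_k$, deduce freeness, and count critical $1$-cells. Your freeness step is correct as stated: a non-order-respecting edge must terminate at an essential vertex, $S_k$ has a unique essential vertex, and the cells constituting a cube in $D_n(S_k)$ must have pairwise disjoint closures, so a critical $2$-cell (which needs two disjoint non-order-respecting edges) cannot exist; hence $M_n(S_k)$ is one-dimensional and $\bfB_n(S_k)$ is free on the critical $1$-cells.

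The genuine gap is that the second half --- the derivation of $N_1(n,k)$, which is the entire quantitative content of the theorem --- is never carried out, and the two concrete details you do commit to are both wrong. First, you state the criticality condition backwards: a critical $1$-cell requires the moving edge to be \emph{non}-order-respecting, which by the paper's property (i) means there must \emph{exist} a particle adjacent to the centre on a branch of strictly smaller index than the branch carrying the edge (a vertex $v_i$ with $\tau(e(v_i))=\tau(e)$ and $\tau(e)<v_i<\iota(e)$); you instead require that there be \emph{no} occupied lower-index branch, which is the order-respecting condition, so your bookkeeping would enumerate exactly the complementary (non-critical) configurations and produce a wrong formula. With the correct condition the count is short: a critical cell is a pair $(i,\bfb)$ with $\bfb=(b_1,\dots,b_{k-1})$, $\sum_j b_j\le n-1$ (the remaining particles sit in a stack at the root), and $b_j\ge 1$ for some $j<i$; for fixed $i$ this gives $\binom{n+k-2}{n-1}-\binom{n+k-i-1}{k-i}$ cells, and summing over $i$ and applying the hockey-stick identity together with Pascal's rule yields exactly $(k-2)\binom{n+k-2}{n-1}-\binom{n+k-2}{n}+1$. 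Second, your proposed sanity check is itself erroneous: $N_1(n,3)=\binom{n}{2}$ (for instance $N_1(3,3)=3$), not $n-1$; the rank of $\bfB_n(S_3)$ counts one generator per pair of particles. Checking a correct count against the target $n-1$ would have led you to reject it, so the verification step you rely on to catch bookkeeping errors would instead have introduced one.
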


The strategy for this subsection is to find for each critical $1$-cell $c\in D_n(S_k)$ its corresponding loop $\gamma\subset C_n(\Gamma)$ such that the associated word of $1$-cells, $w_\gamma$, satisfies $F^\infty(w_\gamma)=c$. To this end, we introduce a shorthand notation for the configuration where $a_i$ points occupies $i$th leaf of $S_k$. We denote the leaves by $e_0,\hdots,e_{k-1}$ and the respective particle configuration by by $e_0^{a_0}e_1^{a_1}\cdots e_k^{a_k}$.
\begin{align*}
e_0^7&=\begin{tikzpicture}[baseline=-.5ex,scale=0.5,xscale=-1]
\fill[gray](0,0) circle (2pt);
\foreach \i in {1,...,4} {
\fill[gray]({\i*72}:2) circle (2pt);
\draw(0,0) -- ({\i*72}:2);
\draw({\i*72}:2.4) node {$\i$};
}
\draw(0,0) -- (0:4) node[below] {$0$};
\fill(0:4) circle (2pt);
\foreach \i in {2,...,8} {
\fill[red]({\i*0.5},0) circle (0.1);
}
\end{tikzpicture}&
e_0^4e_1^2e_3&=\begin{tikzpicture}[baseline=-.5ex,scale=0.5,xscale=-1]
\fill[gray](0,0) circle (2pt);
\foreach \i in {1,...,4} {
\fill[gray]({\i*72}:2) circle (2pt);
\draw(0,0) -- ({\i*72}:2);
\draw({\i*72}:2.4) node {$\i$};
}
\draw(0,0) -- (0:2) node[below] {$0$};
\fill[gray](0:2) circle (2pt);
\foreach \i in {1,...,4} {
\fill[red]({\i*0.5},0) circle (0.1);
}
\fill[red](72:2) circle (0.1) (216:2) circle (0.1) (72:1.5) circle (0.1);
\end{tikzpicture}
\end{align*}

We also denote by $\beta^i$ the path from configuration $\bfx e_0$ to configuration $\bfx e_i$ for some abstract configuration $\bfx$ where the top particle fom leaf $e_0$ is moved to leaf $i$. Consequently, by $\beta^{-i}$ we denote the inverse of $\beta^i$.
\[
\begin{tikzcd}
e_0^4e_1^2e_3=\begin{tikzpicture}[baseline=-.5ex,scale=0.5,xscale=-1]
\fill[gray](0,0) circle (2pt);
\foreach \i in {0,...,4} {
\fill[gray]({\i*72}:2) circle (2pt);
\draw(0,0) -- ({\i*72}:2);
\ifthenelse{\i = 4}
{\draw({\i*72}:2.4) node [below=-6pt] {$\i$}}
{\draw({\i*72}:2.4) node {$\i$}};
}
\fill[red](2,0) circle (0.1) (1.5,0) circle (0.1) (72:2) circle (0.1) (72:1.5) circle (0.1) (216:2) circle (0.1) (1,0) circle (0.1);
\fill[blue] (0.5,0) circle (0.1);
\draw[rounded corners,<-] (-72:0.7)++(18:0.2) -- (-36:0.3) -- (0.5,-0.2);
\end{tikzpicture}\arrow[r,"\beta^4",yshift=.5ex]&
e_0^3e_1^2e_3e_4=\begin{tikzpicture}[baseline=-.5ex,scale=0.5,xscale=-1]
\fill[gray](0,0) circle (2pt);
\foreach \i in {0,...,4} {
\fill[gray]({\i*72}:2) circle (2pt);
\draw(0,0) -- ({\i*72}:2);
\ifthenelse{\i = 4}
{\draw({\i*72}:2.4) node [below=-6pt] {$\i$}}
{\draw({\i*72}:2.4) node {$\i$}};
}
\fill[red](2,0) circle (0.1) (1.5,0) circle (0.1) (72:2) circle (0.1) (72:1.5) circle (0.1) (216:2) circle (0.1) (1,0) circle (0.1);
\fill[blue] (-72:0.5) circle (0.1);
\end{tikzpicture}\arrow[l,"\beta^{-4}",yshift=-.5ex]
\end{tikzcd}
\]

\begin{definition}[$Y$-exchange]
For a sequence $\bfa=(a_1,\hdots, a_{i-1})$ with $1\le a_j<k$, we define $\beta^\bfa$ as the concatenation
\[
\beta^\bfa = \beta^{a_1}\cdots\beta^{a_{i-1}}. 
\]
For $1\le a,b <k$, we define a loop, called a \emph{$Y$-exchange} on leaves $a,b$ as
\[
\sigma_i^{\bfa,a,b}:=\beta^{\bfa}Y^{ab}\beta^{-\bfa}:e_0^n\to e_0^n\in \bfB_n(S_k),
\]
where 
\[
-\bfa = (-a_{i-1},-a_{i-2},\hdots, -a_1),
\]
and $Y^{ab}$ is the commutator of $\beta^a$ and $\beta^b$
\[
Y^{ab} := \beta^{a,b,-a,-b}=\beta^a\beta^b\beta^{-a}\beta^{-b}.
\]
In particular,
\[
\sigma_1^{a,b}=Y^{ab}.
\]
\end{definition}
\begin{remark}
We will also use the notation $\sigma_i^\bfa$ when $\bfa=(a_1,\hdots, a_{i+1})$ to denote 
$\sigma_i^{\bfa',a_i,a_{i+1}}$ for $\bfa'=(a_1,\hdots, a_{i-1})$.
\end{remark}

\begin{example}
Paths $\beta^{1,3,1}$ and $Y^{4,2}$ for respective starting configurations $e_0^7$ and $e_0^4e_1^2e_3^1$ are depicted in Figure~\ref{figure:paths}.
\end{example}

\begin{figure}[ht]
\[
\begin{tikzcd}
e_0^7=\begin{tikzpicture}[baseline=-.5ex,scale=0.5,xscale=-1]
\fill[gray](0,0) circle (2pt);
\foreach \i in {1,...,4} {
\fill[gray]({\i*72}:2) circle (2pt);
\draw(0,0) -- ({\i*72}:2);
\ifthenelse{\i = 4}
{\draw({\i*72}:2.4) node [below=-6pt] {$\i$}}
{\draw({\i*72}:2.4) node {$\i$}};
}
\draw(0,0) -- (0:4) node[below] {$0$};
\fill(0:4) circle (2pt);
\foreach \i in {2,...,8} {
\fill[red]({\i*0.5},0) circle (0.1);
}
\draw[rounded corners,<-] (72:2)++(-18:0.2) -- (36:0.3) -- (1,0.2);
\end{tikzpicture}\arrow[r,"\beta^1"] \arrow[d,"\beta^{1,3,1}"']&
\begin{tikzpicture}[baseline=-.5ex,scale=0.5,xscale=-1]
\fill[gray](0,0) circle (2pt);
\foreach \i in {1,...,4} {
\fill[gray]({\i*72}:2) circle (2pt);
\draw(0,0) -- ({\i*72}:2);
\ifthenelse{\i = 4}
{\draw({\i*72}:2.4) node [below=-6pt] {$\i$}}
{\draw({\i*72}:2.4) node {$\i$}};
}
\draw(0,0) -- (0:4) node[below] {$0$};
\fill(0:4) circle (0.1);
\foreach \i in {3,...,8} {
\fill[red]({\i*0.5},0) circle (0.1);
}
\fill[red](72:2) circle (0.1);
\draw[rounded corners,<-] (216:2)++(306:0.2) -- (-72:0.3) -- (1.5,-0.2);
\end{tikzpicture}\arrow[d,"\beta^3"]\\
e_0^4e_1^2e_3=\begin{tikzpicture}[baseline=-.5ex,scale=0.5,xscale=-1]
\fill[gray](0,0) circle (2pt);
\foreach \i in {1,...,4} {
\fill[gray]({\i*72}:2) circle (2pt);
\draw(0,0) -- ({\i*72}:2);
\ifthenelse{\i = 4}
{\draw({\i*72}:2.4) node [below=-6pt] {$\i$}}
{\draw({\i*72}:2.4) node {$\i$}};
}
\draw(0,0) -- (0:2) node[below] {$0$};
\fill[gray](0:2) circle (2pt);
\foreach \i in {1,...,4} {
\fill[red]({\i*0.5},0) circle (0.1);
}
\fill[red](72:2) circle (0.1) (216:2) circle (0.1) (72:1.5) circle (0.1);
\end{tikzpicture}
&
\begin{tikzpicture}[baseline=-.5ex,scale=0.5,xscale=-1]
\fill[gray](0,0) circle (2pt);
\foreach \i in {1,...,4} {
\fill[gray]({\i*72}:2) circle (2pt);
\draw(0,0) -- ({\i*72}:2);
\ifthenelse{\i = 4}
{\draw({\i*72}:2.4) node [below=-6pt] {$\i$}}
{\draw({\i*72}:2.4) node {$\i$}};
}
\draw(0,0) -- (0:4) node[below] {$0$};
\fill[gray](0:4) circle (2pt);
\foreach \i in {4,...,8} {
\fill[red]({\i*0.5},0) circle (0.1);
}
\fill[red](72:2) circle (0.1) (216:2) circle (0.1);
\draw[rounded corners,<-] (72:1.5)++(-18:0.2) -- (36:0.3) -- (2,0.2);
\end{tikzpicture}\arrow[l,"\beta^1"']
\end{tikzcd}
\]
\[
Y^{4,2}=\left(
\begin{tikzcd}
\bfx=\begin{tikzpicture}[baseline=-.5ex,scale=0.5,xscale=-1]
\fill[gray](0,0) circle (2pt);
\foreach \i in {0,...,4} {
\fill[gray]({\i*72}:2) circle (2pt);
\draw(0,0) -- ({\i*72}:2);
\ifthenelse{\i = 4}
{\draw({\i*72}:2.4) node [below=-6pt] {$\i$}}
{\draw({\i*72}:2.4) node {$\i$}};
}
\fill[red](2,0) circle (0.1) (1.5,0) circle (0.1) (72:2) circle (0.1) (72:1.5) circle (0.1) (216:2) circle (0.1);
\fill[blue] (0.5,0) circle (0.1) (1,0) circle (0.1);
\draw[rounded corners,<-] (-72:0.7)++(18:0.2) -- (-36:0.3) -- (0.5,-0.2);
\end{tikzpicture}\arrow[r,"\beta^4"]&
\begin{tikzpicture}[baseline=-.5ex,scale=0.5,xscale=-1]
\fill[gray](0,0) circle (2pt);
\foreach \i in {0,...,4} {
\fill[gray]({\i*72}:2) circle (2pt);
\draw(0,0) -- ({\i*72}:2);
\ifthenelse{\i = 4}
{\draw({\i*72}:2.4) node [below=-6pt] {$\i$}}
{\draw({\i*72}:2.4) node {$\i$}};
}
\fill[red](2,0) circle (0.1) (1.5,0) circle (0.1) (72:2) circle (0.1) (72:1.5) circle (0.1) (216:2) circle (0.1);
\fill[blue] (-72:0.5) circle (0.1) (1,0) circle (0.1);
\draw[rounded corners,<-] (144:0.5)++(54:0.2) -- (72:0.3) -- (1,0.2);
\end{tikzpicture}\arrow[d,"\beta^2"]\\
\begin{tikzpicture}[baseline=-.5ex,scale=0.5,xscale=-1]
\fill[gray](0,0) circle (2pt);
\foreach \i in {0,...,4} {
\fill[gray]({\i*72}:2) circle (2pt);
\draw(0,0) -- ({\i*72}:2);
\ifthenelse{\i = 4}
{\draw({\i*72}:2.4) node [below=-6pt] {$\i$}}
{\draw({\i*72}:2.4) node {$\i$}};
}
\fill[red](2,0) circle (0.1) (1.5,0) circle (0.1) (72:2) circle (0.1) (72:1.5) circle (0.1) (216:2) circle (0.1);
\fill[blue] (144:0.5) circle (0.1) (1,0) circle (0.1);
\draw[rounded corners,->] (144:0.5)++(54:0.2) -- (72:0.3) -- (0.5,0.2);
\end{tikzpicture}\arrow[u,"\beta^{-2}"] &
\begin{tikzpicture}[baseline=-.5ex,scale=0.5,xscale=-1]
\fill[gray](0,0) circle (2pt);
\foreach \i in {0,...,4} {
\fill[gray]({\i*72}:2) circle (2pt);
\draw(0,0) -- ({\i*72}:2);
\ifthenelse{\i = 4}
{\draw({\i*72}:2.4) node [below=-6pt] {$\i$}}
{\draw({\i*72}:2.4) node {$\i$}};
}
\fill[red](2,0) circle (0.1) (1.5,0) circle (0.1) (72:2) circle (0.1) (72:1.5) circle (0.1) (216:2) circle (0.1);
\fill[blue] (-72:0.5) circle (0.1) (144:0.5) circle (0.1);
\draw[rounded corners,->] (-72:0.5)++(18:0.2) -- (-36:0.3) -- (1,-0.2);
\end{tikzpicture}\arrow[l,"\beta^{-4}"']
\end{tikzcd}
\right)
\]
\caption{Paths $\beta^{1,3,1}$ and $Y^{4,2}$.}
\label{figure:paths}
\end{figure}
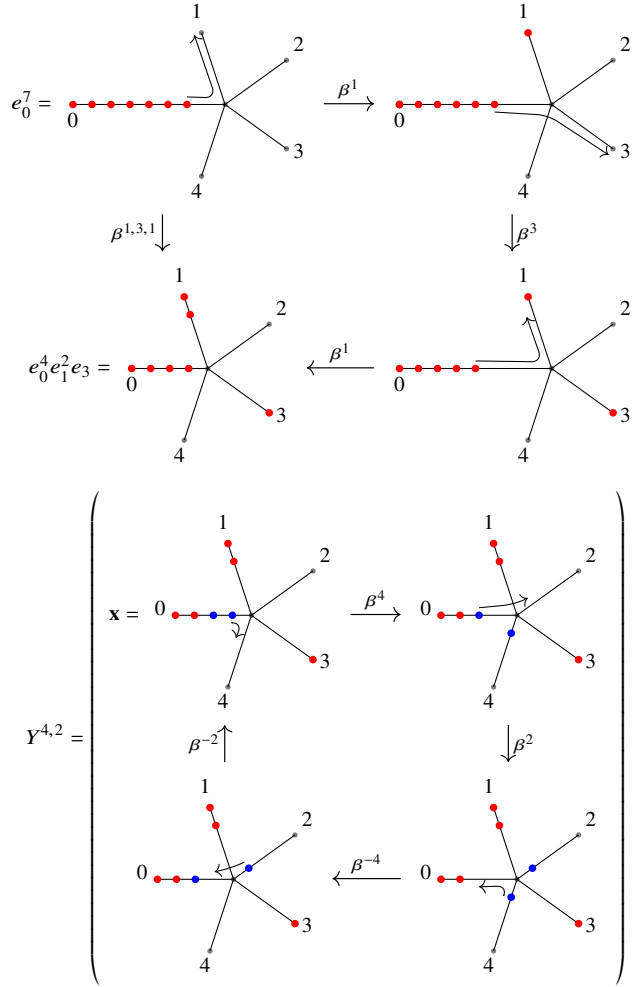

Roughly speaking, $\sigma_i^{\bfa,a,b}$ interchanges $i$th and $(i+1)$st particles by using $0$th, $a$th and $b$th edges after moving the first $(i-1)$ particles to leaves determined by sequence $\bfa$.
Since $\left(Y^{ab}\right)^{-1}=Y^{ba}$ and $Y^{aa}=1$, we have
\begin{align}
\left(\sigma_i^{\bfa,a,b}\right)^{-1}&=\sigma_i^{\bfa,b,a},&\sigma_i^{\bfa,a,b}&=1.
\end{align}
\begin{remark}
By counting the number of possible sequences $\bfa=(a_1,\hdots,a_{i+1})$ for $i=1,\hdots,n-1$, we get that the total number of $Y$-exchanges in $\bfB_n(S_k)$ is 
\[
\sum_{i=1}^{n-1}(k-1)^{i-1}\binom{k-1}2.
\]
Factors $(k-1)^{i-1}$ come from choosing $i-1$ from $k-1$ leaves to accommodate the first $i-1$ particles. Factors $\binom{k-1}2$ stem from the choice of two leaves where the exchange of particles $i$ and $i+1$ takes place.
\end{remark}
Next, we will proceed with the description of relations between the star-generators. To this end, we need to set up some more notation. Consider a path that connects configuration $e_0^n$ with configuration $e_0^{n-\sum b_j}e_1^{b_1}\cdots e_{k-1}^{b_{k-1}}$. Such a path will be unambiguously determined by the following sequence of nonnegative integers $\bfb=(b_1,\hdots, b_{k-1})$ that encodes the final particle configuration with $b_i$ particles on leaf $e_i$. Namely, we associate with $\bfb$ the following sequence
\[
\bar\bfb = (\underbrace{k-1,k-1,\hdots, k-1}_{b_{k-1}},\underbrace{k-2,k-2,\hdots, k-2}_{b_{k-2}},\hdots,\underbrace{1,1,\hdots, 1}_{b_1}).
\]
Path $\beta^{\bar\bfb}$ connects configuration $e_0^n$ with configuration $e_0^{n-\sum b_j}e_1^{b_1}\cdots e_{k-1}^{b_{k-1}}$ as desired. Moreover, when written as a word of $1$-cells of $D_n(\Gamma)$, $\beta^{\bar\bfb}$ consists only of collapsible cells. To see this, note first that all $1$-cells associated with one-particle moves in $\beta^{\bar\bfb}$, $c=\{e,v_1,\hdots,v_{n-1}\}$, are such that i) $e$ is order-respecting and ii) vertex $\iota(e)$ is the lowest unblocked vertex in the $0$-cell $c_0=\{\iota(e),v_1,\hdots,v_{n-1}\}$. This in turn means that $c=W_0\left(c_0\right)$, i.e. $c$ is collapsible. Using similar arguments, one can check that a critical cell $v_i(\bfb)$ for $\bfb=(b_1,\hdots,b_{k-1})$ corresponds to the following concatenation of paths
\begin{equation}\label{equation:critical cell interpretation}
v_i(\bfb) = F^\infty\left(\beta^{\bar{\bfb+e_i}}\cdot\beta^{-i}\cdot\beta^{-\bar\bfb}\right),
\end{equation}
where
\[
\bfb\pm e_i:=(b_1,\hdots, b_{i-1}, b_i\pm1, b_{i+1},\hdots, b_{k-1}).
\]
To see this, note that the only critical cell in the word corresponding to \ref{equation:critical cell interpretation} comes from the middle $\beta^{-i}$-move and this is exactly critical cell $v_i(\bfb)$.

\begin{proposition}\label{proposition:new generators}
The set $\{\sigma_i^\bfa\mid 1\le i\le n-1, \bfa=(a_1,\hdots, a_{i+1}), 1\le a_j\le k-1\}$ generates $\bfB_n(S_k)$.
\end{proposition}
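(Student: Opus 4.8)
The plan is to invoke the Farley--Sabalka presentation (Theorem~\ref{thm:FS}) and then rewrite its generators in terms of the $Y$-exchanges. Because $S_k$ is a tree, a spanning tree may be taken to be the whole (subdivided) graph; there are then no deleted edges and hence no loop-type critical cells, so every critical $1$-cell of $D_n(S_k)$ is of the star type $v_i(\bfb)$. By Theorem~\ref{thm:FS} these cells generate $\bfB_n(S_k)$, so it suffices to show that each $v_i(\bfb)$ lies in the subgroup generated by the $\sigma_\ell^\bfa$.

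First I would replace the Morse generator by a genuine loop. By (\ref{equation:critical cell interpretation}), $v_i(\bfb)$ is represented by $\beta^{\bar{\bfb+e_i}}\cdot\beta^{-i}\cdot\beta^{-\bar\bfb}$, since the principal reduction $F^\infty$ rewrites a word of $1$-cells into critical cells within its homotopy class and so does not change the represented element of $\bfB_n(S_k)$. Writing $\bar\bfb=(c_1,\dots,c_m)$ and letting $p=b_{k-1}+\dots+b_i$ be the number of particles sent to leaves $\ge i$, the block $c_1,\dots,c_p$ consists of the leaf-labels $\ge i$ and $c_{p+1},\dots,c_m$ of the labels $<i$; adding one more particle on leaf $i$ simply inserts an extra $\beta^i$ between these two blocks. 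Hence, with $g:=\beta^{c_1}\cdots\beta^{c_p}$ and $h:=\beta^{c_{p+1}}\cdots\beta^{c_m}$, one obtains the loop
\[
v_i(\bfb)=g\,\beta^i\,h\,\beta^{-i}\,h^{-1}\,g^{-1}.
\]

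The computational core is a telescoping-commutator identity. Setting $h_{j-1}:=\beta^{c_{p+1}}\cdots\beta^{c_{p+j-1}}$, the general group identity $a\,(x_1\cdots x_r)\,a^{-1}(x_1\cdots x_r)^{-1}=\prod_{j=1}^{r}(x_1\cdots x_{j-1})\,(a\,x_j\,a^{-1}x_j^{-1})\,(x_1\cdots x_{j-1})^{-1}$, applied with $a=\beta^i$ and $x_j=\beta^{c_{p+j}}$, turns the middle factor $\beta^i h\beta^{-i}h^{-1}$ into $\prod_{j=1}^{m-p}h_{j-1}\,Y^{i,c_{p+j}}\,h_{j-1}^{-1}$. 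Conjugating by $g$ and noting $g\,h_{j-1}=\beta^{\bfa_j}$ with $\bfa_j=(c_1,\dots,c_{p+j-1})$ then gives
\[
v_i(\bfb)=\prod_{j=1}^{m-p}\beta^{\bfa_j}\,Y^{i,c_{p+j}}\,\beta^{-\bfa_j}=\prod_{j=1}^{m-p}\sigma_{p+j}^{\bfa_j,\,i,\,c_{p+j}},
\]
so that each factor is exactly one of the defining $Y$-exchanges.

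It remains to check the index bookkeeping, which is where the only real care is needed: since $v_i(\bfb)$ distributes $m=\sum_j b_j\le n-1$ particles onto the leaves, the indices $p+j$ run through $\{p+1,\dots,m\}\subseteq\{1,\dots,n-1\}$, while all the entries $c_r$ and the label $i$ lie in $\{1,\dots,k-1\}$; thus every $\sigma_{p+j}^{\bfa_j,\,i,\,c_{p+j}}$ is a member of the proposed generating set. This exhibits an arbitrary critical cell as a product of $Y$-exchanges and proves the claim. The main obstacle is not conceptual but bookkeeping: one must match the decomposition of the words $\beta^{\bar{\bfb+e_i}}$ and $\beta^{\bar\bfb}$ to the telescoping identity so that the emergent commutators are precisely the defined generators $\sigma_\ell^\bfa$, with the correct index $\ell=p+j$ and conjugating sequence $\bfa_j$.
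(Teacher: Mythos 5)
Your proof is correct and follows essentially the same route as the paper: both reduce to Farley--Sabalka critical $1$-cells, invoke equation~\eqref{equation:critical cell interpretation} to realise $v_i(\bfb)$ as the loop $\beta^{\bar\bfb_1}\,\beta^i\beta^{\bar\bfb_2}\beta^{-i}\beta^{-\bar\bfb_2}\,\beta^{-\bar\bfb_1}$ (your $g\,\beta^i h\beta^{-i}h^{-1}g^{-1}$), and then expand the inner commutator into conjugated $Y$-exchanges. Your closed-form telescoping identity is exactly the paper's inductive step $\beta^i\beta^{\bfa}\beta^{-i}\beta^{-\bfa}=Y^{i,a_1}\cdot\beta^{a_1}\bigl(\beta^i\beta^{\bfa'}\beta^{-i}\beta^{-\bfa'}\bigr)\beta^{-a_1}$ unrolled, so the two arguments coincide up to presentation.
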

\begin{proof}
Let us start with the simplest case of $n=2$. Any critical $1$-cell has the form $\{e_i,w\}$ where $e_i$ is the edge from $i$th leaf that is incident to the central vertex $v$ and $w$ is a vertex that is adjacent to $v$ and $v<w<\iota(e)$. According to our shorthand notation, such a $1$-cell is denoted by $v_i(w)$. Any two-particle exchange has the form $\sigma_1^{a,b}=Y^{a,b}$. It is straightforward to see that 
\[v_i(w)=F^\infty\left(Y^{\iota(e),w}\right).\]

In general, arbitrary critical cell $v_i(\bfb)\in D_n(\Gamma)$ can be expressed as a word of a number of star generators. In the remaining part of the proof, we will construct an inductive way for finding such a word. In order to handle arbitrary $n$, for any $\bfb=(b_1,\hdots,b_{k-1})$ consider the following splitting $\bfb_1:=(0,\hdots, 0,b_i, b_{i+1},\hdots, b_{k-1})$, $\bfb_2:=(b_1,\hdots, b_{i-1},0,\hdots, 0)$. From equation \eqref{equation:critical cell interpretation} we get that $v_i(\bfb)$ is the following conjugation
\begin{align*}
v_i(\bfb) &\sim_{F^\infty} \beta^{\bar{\bfb_1}} B  \beta^{-\bar{\bfb_1}},&
B&=\beta^i \beta^{\bar{\bfb_2}} \beta^{-i} \beta^{-\bar\bfb_2}.
\end{align*}
Ley us next show that $B$ is a word of $Y$-exchanges (modulo conjugation by appropriate paths) with the starting configuration being the end point of $\beta^{\bar\bfb_1}$, i.e.  $e_0^{n-b_i-\hdots-b_{k-1}}e_i^{b_i}\hdots e_{k-1}^{b_{k-1}}$. Indeed, for any sequence $\bfa=(a_1,\hdots,a_\ell)$, we have the following:
\[
\beta^i\beta^{\bfa}\beta^{-i}\beta^{-\bfa}=\beta^i\beta^{a_1}\beta^{\bfa'}\beta^{-i}\beta^{-\bfa'}\beta^{-a}=Y^{i,a_1}\cdot
\beta^{a_1}\left(
\beta^i\beta^{\bfa'}\beta^{-i}\beta^{-\bfa'}
\right)\beta^{-a_1},
\]
where $\bfa'=(a_2,\hdots, a_\ell)$. The above expression allows us to use the induction on the length of $\bfa$. Namely, by substituting $\bfa=\bar\bfb_2$, we have
\begin{gather*}
v_i(\bfb) \sim_{F^\infty} \left(\beta^{\bar\bfb_1}Y^{i,i-1}\beta^{-\bar\bfb_1}\right)\cdot \left(\beta^{\bar\bfb_1}\beta^{i-1}\left(\beta^i\beta^{\bar{\bfb'}}\beta^{-i}\beta^{-\bar{\bfb'}}\right)\beta^{-(i-1)}\beta^{-\bar\bfb_1}\right)= \\
=\sigma_{i_1}^{\bar\bfb_1,i,i-1}\cdot \left(\beta^{\bar\bfb_1}\beta^{i-1}\left(\beta^i\beta^{\bar{\bfb'}}\beta^{-i}\beta^{-\bar{\bfb'}}\right)\beta^{-(i-1)}\beta^{-\bar\bfb_1}\right),
\end{gather*}
where $\bar{\bfb'}=(b_1,\hdots,b_{i-1}-1)$ and $i_1=1+b_{k-1}+b_{k-2}+\hdots+b_i$. By iterating the above inductive step for $\beta^i\beta^{\bar{\bfb'}}\beta^{-i}\beta^{-\bar{\bfb'}}$, we get that $v_i(\bfb)$ can be expressed as the $F^\infty$-image of a concatenation of the resulting star generators.
\end{proof}

\begin{proposition}\label{proposition:new relations}
There are relations among $\sigma_i^\bfa$'s as follows:
\begin{enumerate}
\item Pseudo-commutative relation: for $|j-i|\ge 2$,
\begin{equation}\label{equation:pseudo-commutative}
\sigma_j^{a_1\hdots a_{j+1}}\sigma_i^{a_1\hdots a_{i+1}} = \sigma_i^{a_1\hdots a_{i+1}}\sigma_j^{a_1\hdots a_{i-1}a_{i+1}a_ia_{i+2}\hdots a_{j+1}}.
\end{equation}
\item Pseudo-braid relation: for $1\le i\le n-2$,
\begin{align}\label{equation:pseudo-braid}
&\mathrel{\hphantom{=}}\sigma_{i+1}^{a_1\hdots a_{i-1}a_i a_{i+1}a_{i+2}}\sigma_i^{a_1\hdots a_{i-1}a_i a_{i+2}}\sigma_{i+1}^{a_1\hdots a_{i-1}a_{i+2}a_i a_{i+1}}\notag\\
&=\sigma_i^{a_1\hdots a_{i-1}a_i a_{i+1}}\sigma_{i+1}^{a_1\hdots a_{i-1}a_{i+1}a_i a_{i+2}}\sigma_i^{a_1\hdots a_{i-1}a_{i+1}a_{i+2}}.
\end{align}
\end{enumerate}
\end{proposition}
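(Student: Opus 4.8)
The plan is to work entirely inside the fundamental groupoid of $C_n(S_k)$, treating each $\beta^{\pm c}$ as a morphism (a homotopy class of particle‑move path) and using only the backtracking cancellations $\beta^c\beta^{-c}=\beta^{-c}\beta^c=1$, which are valid whenever the relevant moves are defined, i.e. whenever enough particles sit on the leaf $e_0$. Every generator occurring in the two relations is a loop based at $e_0^n$, and the one structural identity I will use repeatedly is the rearrangement of the definition of the commutator,
\begin{equation*}
\beta^a\beta^b = Y^{ab}\,\beta^b\beta^a,
\end{equation*}
obtained from $Y^{ab}=\beta^a\beta^b\beta^{-a}\beta^{-b}$ by cancelling $\beta^{-b}\beta^b$ and then $\beta^{-a}\beta^a$.

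The first step, common to both relations, is to factor out a shared conjugating path. Writing $P=\beta^{a_1}\cdots\beta^{a_{i-1}}$ for the prefix that pushes the first $i-1$ particles onto the leaves $a_1,\dots,a_{i-1}$, every $\sigma$ appearing in either relation has the form $P(\cdots)(P)^{-1}$, since $\sigma_m^{\bfa}=P'\,Y^{a_m a_{m+1}}\,(P')^{-1}$ with $P'=\beta^{a_1}\cdots\beta^{a_{m-1}}$ extending $P$. Multiplying the factors, the inner $(P)^{-1}P$ pairs cancel and each relation becomes an identity of loops based at the single configuration $X$ reached by $P$, involving only the leaves $a_i,a_{i+1}$ (and, for the braid relation, $a_{i+2}$) and the top two (respectively three) particles still on $e_0$. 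Because $i\le n-2$ and $j\le n-1$, at least three (respectively two) particles remain on $e_0$ at the relevant stages, so every intermediate move is defined. One point that needs a line of justification is that the inner loop $Y^{a_ja_{j+1}}$ is the same morphism irrespective of the order in which the earlier particles were pushed out; this holds because the configuration reached by a product of $\beta$'s depends only on the multiset of occupied leaves, so the two prefixes $\beta^{a_i}\beta^{a_{i+1}}Q$ and $\beta^{a_{i+1}}\beta^{a_i}Q$ (with $Q=\beta^{a_{i+2}}\cdots\beta^{a_{j-1}}$) have the same endpoint.

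For the pseudo-commutative relation (taking $j>i$ without loss of generality), after cancelling $P$ the claim reduces to
\begin{equation*}
(\beta^{a_i}\beta^{a_{i+1}}Q)\,Y^{a_ja_{j+1}}\,(\beta^{a_i}\beta^{a_{i+1}}Q)^{-1}\,Y^{a_ia_{i+1}}
=Y^{a_ia_{i+1}}\,(\beta^{a_{i+1}}\beta^{a_i}Q)\,Y^{a_ja_{j+1}}\,(\beta^{a_{i+1}}\beta^{a_i}Q)^{-1}.
\end{equation*}
I would substitute $\beta^{a_i}\beta^{a_{i+1}}=Y^{a_ia_{i+1}}\beta^{a_{i+1}}\beta^{a_i}$ on the left; the leading $Y^{a_ia_{i+1}}$ it produces, together with the trailing $(Y^{a_ia_{i+1}})^{-1}$ coming from the inverse, telescopes against the standalone $Y^{a_ia_{i+1}}$, and both sides become literally the same word. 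Conceptually, commuting $\sigma_j$ past $\sigma_i$ is exactly the act of sliding the block $\beta^{a_i}\beta^{a_{i+1}}$ through the exchange, which interchanges the two labels $a_i,a_{i+1}$ in the superscript of $\sigma_j$ — precisely the swap recorded in the statement.

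For the pseudo-braid relation, set $x=a_i$, $y=a_{i+1}$, $z=a_{i+2}$. After cancelling $P$ the claim becomes
\begin{equation*}
\beta^x Y^{yz}\beta^{-x}\cdot Y^{xz}\cdot\beta^z Y^{xy}\beta^{-z}
=Y^{xy}\cdot\beta^y Y^{xz}\beta^{-y}\cdot Y^{yz}.
\end{equation*}
Here I would expand every $Y$ through its definition and repeatedly cancel adjacent inverse pairs $\beta^{-x}\beta^{x}$, $\beta^{-y}\beta^{y}$ and $\beta^{-z}\beta^{z}$; both sides then collapse to the single symmetric word $\beta^x\beta^y\beta^z\beta^{-x}\beta^{-y}\beta^{-z}$, a kind of three-particle total exchange, so they agree. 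The computations themselves are short, and the relations hold already in the free groupoid on the symbols $\beta^{\pm c}$ modulo backtracking, using nothing about $S_k$ beyond the existence of the moves. Consequently the only thing requiring genuine care — and hence the main obstacle — is the bookkeeping that precedes the computation: matching each $\sigma$ to its prefix/commutator decomposition with the correct index shifts, and verifying the well-definedness issues above, namely that the two prefix routes share an endpoint and that enough particles remain on $e_0$. Once the groupoid calculus is set up cleanly, both relations fall out of the single identity $\beta^a\beta^b=Y^{ab}\beta^b\beta^a$ together with free backtrack cancellation.
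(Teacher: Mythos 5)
Your proposal is correct and takes essentially the same approach as the paper: the paper's proof likewise expands each generator as $\sigma_i^{\mathbf{a}}=\beta^{a_1}\cdots\beta^{a_{i-1}}Y^{a_i a_{i+1}}\beta^{-a_{i-1}}\cdots\beta^{-a_1}$ and verifies both relations by freely cancelling backtracking pairs $\beta^{x}\beta^{-x}$. You merely carry out explicitly (common-prefix factorization, the identity $\beta^a\beta^b=Y^{ab}\beta^b\beta^a$, and the collapse of both sides of the pseudo-braid relation to $\beta^{x}\beta^{y}\beta^{z}\beta^{-x}\beta^{-y}\beta^{-z}$) the computation the paper declares straightforward, and your groupoid bookkeeping and well-definedness remarks are accurate.
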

\begin{proof}
The relations can be checked in a straightforward way by using the definition of star generators and expanding each generator as 
\[\sigma_i^{\bf b}=\beta^{b_1}\hdots\beta^{b_{i-1}}Y^{b_i,b_{i+1}}\beta^{-b_{i-1}}\hdots\beta^{-b_1},\]
where $Y^{b_i,b_{i+1}}=\beta^{b_i}\beta^{b_{i+1}}\beta^{-b_i}\beta^{-b_{i+1}}$ and cancelling $\beta^x\beta^{-x}$ whenever such an expression is encountered.
\end{proof}

Note that relation \eqref{equation:pseudo-commutative} becomes trivial if $a_i=a_{i+1}$ or $a_j=a_{j+1}$. Similarly, relations \eqref{equation:pseudo-braid} become trivial is at least one of the following equations is satisfied: $a_i=a_{i+1}$, $a_{i+1}=a_{i+2}$,  $a_i=a_{i+2}$. In particular, relations \eqref{equation:pseudo-braid} are nontrivial if and only if $k\ge 4$.

Let $G_{n,k}$ be the abstract group whose generators are all $\sigma_i^{\bfa}$ and relations are given by \eqref{equation:pseudo-commutative} and \eqref{equation:pseudo-braid}.

\begin{lemma}
Group $G_{n,k}$ is a free group of rank $N_1(n,k)$ that is generated by $\sigma_i^{\bfa}$ such that $1\leq a_1\le a_2\leq \hdots\le a_{i-1}\leq a_n$ and $a_{i}<a_{i+1}$.
\end{lemma}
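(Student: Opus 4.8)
The plan is to leverage the known freeness of $\bfB_n(S_k)$ together with a surjection out of $G_{n,k}$, so that the entire statement reduces to a generation claim plus a dimension count. By Proposition~\ref{proposition:new generators} the elements $\sigma_i^{\bfa}$ generate $\bfB_n(S_k)$, and by Proposition~\ref{proposition:new relations} (together with the elementary relations $(\sigma_i^{\bfa,a,b})^{-1}=\sigma_i^{\bfa,b,a}$ and $\sigma_i^{\bfa,a,a}=1$ recorded just after the definition of the $Y$-exchanges) all defining relations of $G_{n,k}$ hold in $\bfB_n(S_k)$. Hence $\sigma_i^{\bfa}\mapsto\sigma_i^{\bfa}$ defines a surjection $\phi\colon G_{n,k}\twoheadrightarrow\bfB_n(S_k)$, whose target is free of rank $N_1(n,k)$ by the cited theorem. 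Writing $S$ for the set of sorted generators $\sigma_i^{\bfa}$ with $1\le a_1\le\cdots\le a_{i-1}\le a_{i+1}$ and $a_i<a_{i+1}$, it then suffices to prove (i) that $S$ generates $G_{n,k}$ and (ii) that $|S|=N_1(n,k)$. Granting these, $\phi(S)$ generates a free group of rank $N_1(n,k)$ with $|\phi(S)|\le|S|=N_1(n,k)$, so $\phi$ carries $S$ bijectively onto a free basis (a free group of rank $N_1$ has no generating set of size below $N_1$, and any generating set of size exactly $N_1$ is a basis, since free groups are Hopfian). The natural map $\rho$ from the free group on $S$ to $G_{n,k}$ is then surjective, while $\phi\circ\rho$ sends a basis to a basis and is therefore an isomorphism; hence $\rho$ is injective, thus an isomorphism, and $G_{n,k}$ is free with basis $S$.

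First I would prove (i) by reducing an arbitrary generator to a word in $S$. Using $(\sigma_i^{\bfa,a,b})^{-1}=\sigma_i^{\bfa,b,a}$ and $\sigma_i^{\bfa,a,a}=1$ I may assume the exchange pair satisfies $a_i<a_{i+1}$ and discard degenerate generators. To organise the initial segment I read relations \eqref{equation:pseudo-commutative} and \eqref{equation:pseudo-braid} as substitution rules for a single generator: solving the pseudo-commutative relation for the higher-index factor gives $\sigma_j^{P}=\sigma_i^{Q}\,\sigma_j^{P'}\,(\sigma_i^{Q})^{-1}$, where $P'$ arises from $P$ by transposing two adjacent entries of its initial segment, which lets me sort the prefix into weakly increasing order; and solving the pseudo-braid relation for the factor $\sigma_{i+1}^{a_1\hdots a_{i-1}a_{i+2}a_i a_{i+1}}$ expresses a generator whose last prefix slot holds the maximal value $a_{i+2}$ as a product of generators in which that value has been moved into the terminal position (so that each surviving factor carries the maximum as its final entry). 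Performing these substitutions in an order governed by a well-founded complexity measure rewrites every generator as a word in those $\sigma_i^{\bfa}$ with $a_1\le\cdots\le a_{i-1}\le a_{i+1}$ and $a_i<a_{i+1}$, i.e.\ in $S$. Only termination of this process is required, not confluence, since the Hopfian argument above already accounts for the absence of further relations among $S$.

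For (ii) I would fix $i$ and the maximal last value $a_{i+1}=s\in\{2,\dots,k-1\}$. There are $s-1$ choices of $a_i\in\{1,\dots,s-1\}$ and $\binom{s+i-2}{i-1}$ weakly increasing initial segments $a_1\le\cdots\le a_{i-1}$ with entries in $\{1,\dots,s\}$, so
\[
|S|=\sum_{i=1}^{n-1}\sum_{s=2}^{k-1}(s-1)\binom{s+i-2}{i-1}=\sum_{s=2}^{k-1}(s-1)\binom{s+n-2}{n-2},
\]
where the $i$-sum collapses by the hockey-stick identity. Splitting $(s-1)\binom{s+n-2}{n-2}=(n-1)\binom{s+n-2}{n-1}-\binom{s+n-2}{n-2}$ via the absorption identity $s\binom{s+n-2}{n-2}=(n-1)\binom{s+n-2}{n-1}$, and applying the hockey-stick identity once more to each resulting sum, gives
\[
|S|=(n-1)\binom{n+k-2}{n}-\binom{n+k-2}{n-1}+1,
\]
which equals $N_1(n,k)$ by the elementary identity $n\binom{n+k-2}{n}=(k-1)\binom{n+k-2}{n-1}$.

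The main obstacle I anticipate is the generation step, specifically making the reduction in (i) \emph{terminate}. The difficulty is that pushing an oversized prefix entry into the terminal slot via \eqref{equation:pseudo-braid} simultaneously alters the terminal value, so it can reorder the remaining entries and reintroduce prefix inversions to be cleaned up again by \eqref{equation:pseudo-commutative}. One must therefore choose a complexity measure (for instance a lexicographic combination of the number of prefix inversions with a weighted count of prefix entries exceeding the terminal value) that provably drops under \emph{both} moves, and verify that after at most one elementary inversion the six-term relation \eqref{equation:pseudo-braid} always isolates the offending generator on a single side, with the remaining five factors strictly simpler. Establishing this well-foundedness is the crux; once it is in place, the counting in (ii) and the Hopfian argument are routine.
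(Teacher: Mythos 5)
Your proposal is correct in its architecture, and it takes a genuinely different route from the paper. The paper proves freeness of $G_{n,k}$ \emph{intrinsically}: it runs Tietze transformations on the presentation, using each relation \eqref{equation:pseudo-commutative} or \eqref{equation:pseudo-braid} to eliminate one unsorted generator, and concludes that this process consumes \emph{all} relations, leaving exactly the $N_1(n,k)$ sorted generators and no relations among them; the cited freeness of $\bfB_n(S_k)$ enters only afterwards, in the Theorem, to upgrade $\phi$ to an isomorphism. You instead import the freeness of $\bfB_n(S_k)$ into the Lemma itself: granting (i) that the sorted set $S$ generates $G_{n,k}$ and (ii) that $|S|=N_1(n,k)$, the surjection $\phi$ (well defined and onto by Propositions~\ref{proposition:new relations} and \ref{proposition:new generators}, both proved before the Lemma, so there is no circularity) forces $\phi(S)$ to be a free basis, and your rank/Hopfian argument makes the natural map from the free group on $S$ to $G_{n,k}$ an isomorphism. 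This buys something real: you never have to verify that the Tietze process consumes every relation --- the most delicate and least justified point in the paper's proof --- because any residual relation is excluded by the isomorphism onto a free group; as a bonus your argument proves the paper's subsequent Theorem simultaneously. Your count (ii) is correct and agrees with the paper's, which organizes the same count as a telescoping sum $N_1(n,k)-N_1(n-1,k)$ over the index.

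The gap you flag in step (i) is the only incomplete point, and it closes with one structural observation that also vindicates exactly the measure you propose. A pseudo-commutative substitution $\sigma_j^{P}=\sigma_i^{Q}\sigma_j^{P'}(\sigma_i^{Q})^{-1}$ transposes two adjacent \emph{prefix} entries only (positions $i,i+1\le j-1$): it never touches the exchange pair $(a_j,a_{j+1})$, preserves the prefix multiset and the terminal value, and strictly decreases the number of prefix inversions. Consequently each generator needs the pseudo-braid move at most \emph{once}. Work by outer induction on $j$, so that all lower-index factors arising below are already words in $S$. First sort the prefix; if the resulting generator (or its inverse) still fails the terminal condition, then $a_{j-1}$ is the strict maximum of the whole sequence, and solving \eqref{equation:pseudo-braid} with $i=j-1$ for this generator writes it as a product of lower-index generators and the two index-$j$ generators $\sigma_j^{\bfa'',a_j,a_{j-1},a_{j+1}}$ and $\sigma_j^{\bfa'',a_{j+1},a_j,a_{j-1}}$; after replacing the first by its inverse, both carry the maximum $a_{j-1}$ in the terminal slot, with strictly smaller exchange-partner. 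Their prefixes may be unsorted, but re-sorting them uses only pseudo-commutative moves, which cannot displace the terminal entry; since every entry of the new prefixes is at most $a_{j-1}$, each factor lands in $S$ after this second sorting pass. In your lexicographic measure (number of prefix entries exceeding the terminal value, then prefix inversions) this is precisely the statement that commutation fixes the first coordinate and drops the second, while the braid move sends the first coordinate to zero; no interleaving can recur, and no confluence is needed. With this, (i) holds and your proof is complete.
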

\begin{proof}
We consider the group presentation of $G_{n,k}$ having all $\sigma_j^\bfa$ as generators and relations as in \eqref{equation:pseudo-commutative} and \eqref{equation:pseudo-braid}.

First, let us consider generators $\sigma_{j}^{\bfa}$ with $j=n-1$. Whenever $a_{i+1}<a_i$ for some $i\leq n-3$, we can use a preudo-commutative relation \eqref{equation:pseudo-commutative} to swap $a_{i+1}$ and $a_i$. That is, we apply the following Tietze transformation
\[\sigma_{n-1}^{a_1\hdots a_{n}}=\sigma_i^{a_1\hdots a_{i+1}}\sigma_{n-1}^{a_1\hdots a_{i-1}a_{i+1}a_ia_{i+2}\hdots a_{n}}\left(\sigma_i^{a_1\hdots a_{i+1}}\right)^{-1}\]
to obtain a presentation of $G_{n,k}$ where every $\sigma_{n-1}^{\bfa}$ satisfies $a_i\leq a_{i+1}$. Applying analogous Tietze transformations for every $i\in\{1,\hdots,n-3\}$, we get a presentation of $G_{n,k}$ where sequences $\bfa$ satisfy
\[a_1\le a_2\leq \hdots\le a_{n-2}.\]
Let us next show how pseudo-braiding relations \eqref{equation:pseudo-braid} allow us to arrange triples $a_{n-2}, a_{n-1}, a_n$. Whenever the triple $a_{n-2}, a_{n-1}, a_n$ consists of pairwise distinct integers, relation \eqref{equation:pseudo-braid} involves the following three generators
\[
\sigma_{n-1}^{a_1\hdots a_{n-3} a_{n-2} a_{n-1} a_n},\ 
\sigma_{n-1}^{a_1\hdots a_{n-3} a_{n-1} a_n a_{n-2}}\quad\text{ and }\quad
\sigma_{n-1}^{a_1\hdots a_{n-3} a_{n-1} a_n a_{n-2}}
\]
and two more generators with $j=n-2$. Note that the above three generators contain all permutations of $a_{n-2}, a_{n-1}, a_n$. Therefore, no matter what magnitudes of these numbers are, one can always find an appropriate Tietze transformation that yields a presentation of $G_{n,k}$ with 
\begin{align*}
a_{n-2}&\le a_n,&
a_{n-1}&<a_n.
\end{align*} 

Summing up, using relations \eqref{equation:pseudo-commutative} and \eqref{equation:pseudo-braid} we are able to obtain a presentation of $G_{n,k}$ whose generators $\sigma_{n-1}^{\bfa}$ are associated with sequences $\bfa$ such that $1\leq a_1\le a_2\leq \hdots\le a_{n-2}\leq a_n$ and $a_{n-1}<a_n$. Let $a_n=j$. Then the number of such sequences is $\binom{n+j-3}{j-1}(j-1)$, where the factor $(j-1)$ comes from the choice of $1\le a_{n-1}<j=a_n$.
Therefore the total number of generators $\sigma_{n-1}^\bfa$ is exactly
\[
\sum_{j=1}^{k-1}\binom{n+j-3}{j-1}(j-1).
\]
It is straightforward to check that this number is the same as $N_1(n,k)-N_1(n-1,k)$.

Finally, by applying analogous Tietze transformations for $\sigma_{n-2}^\bfa, \sigma_{n-3}^\bfa$ and so on, we can reduce all relations and end up with exactly $N_1(n,k)$ generators.
\end{proof}

\begin{theorem}
The abstract group $G_{n,k}$ is isomorphic to the braid group $\bfB_n(S_k)$ via the canonical map $\phi:G_{n,k}\to \bfB_n(S_k)$ which sends $\sigma_i^\bfa$ to an $n$-braid in $S_k$ represented by $\sigma_i^\bfa$.
\end{theorem}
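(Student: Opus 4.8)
The plan is to deduce the theorem from three facts already established, together with one general principle about free groups. Concretely, the strategy is: first verify that $\phi$ is a well-defined surjective homomorphism; then observe that both $G_{n,k}$ and $\bfB_n(S_k)$ are free of the same finite rank $N_1(n,k)$; and finally upgrade surjectivity to bijectivity using the fact that finitely generated free groups are Hopfian.

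First I would check that $\phi$ is well-defined. Since $G_{n,k}$ is given by the presentation with generators $\sigma_i^\bfa$ and defining relations \eqref{equation:pseudo-commutative} and \eqref{equation:pseudo-braid}, it suffices to send each generator $\sigma_i^\bfa$ to the braid it represents and to check that the images satisfy these relations. But this is exactly the content of Proposition \ref{proposition:new relations}, where the pseudo-commutative and pseudo-braid relations are shown to hold identically in $\bfB_n(S_k)$. Hence $\phi$ descends to a homomorphism. Surjectivity is then immediate from Proposition \ref{proposition:new generators}: the elements $\sigma_i^\bfa$ generate $\bfB_n(S_k)$, and all of them lie in the image of $\phi$.

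Next I would record the two freeness statements. By the preceding Lemma, $G_{n,k}$ is a free group of rank $N_1(n,k)$; by the classical computation of $\bfB_n(S_k)$ recalled at the start of this subsection, $\bfB_n(S_k)$ is likewise free of rank $N_1(n,k)$. In particular the domain and codomain of $\phi$ are abstractly isomorphic finitely generated free groups. To conclude, fix any isomorphism $\theta\colon \bfB_n(S_k)\to G_{n,k}$; then $\theta\circ\phi\colon G_{n,k}\to G_{n,k}$ is a surjective endomorphism of a finitely generated free group, hence an automorphism, because such groups are Hopfian. Consequently $\theta\circ\phi$ is injective, which forces $\phi$ to be injective as well. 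Being both injective and surjective, $\phi$ is the desired isomorphism.

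The one point requiring genuine care — and the step I would flag as the main obstacle — is this final one. It is tempting to argue that a surjection between free groups of equal rank must be an isomorphism by a dimension count, but unlike for vector spaces (or even for the abelianizations $\ZZ^{N_1(n,k)}$) this does not follow from rank alone: the conclusion relies on the fact that a surjective endomorphism of a finitely generated free group is necessarily injective, i.e.\ the Hopfian property, which has no analogue coming from mere cardinality considerations. All of the hard combinatorial content has already been absorbed into Proposition \ref{proposition:new generators} and the preceding Lemma, so beyond invoking the Hopfian property the remaining verifications are purely formal.
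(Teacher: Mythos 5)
Your proposal is correct and follows essentially the same route as the paper's proof: well-definedness and surjectivity of $\phi$ from Propositions~\ref{proposition:new generators} and~\ref{proposition:new relations}, equality of ranks from the preceding Lemma and the classical computation of $\bfB_n(S_k)$, and injectivity via the Hopfian property of finitely generated free groups. Your explicit composition with an isomorphism $\theta$ and your warning that rank equality alone (unlike for vector spaces) does not suffice merely spell out details the paper leaves implicit.
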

\begin{proof}
By Propositions~\ref{proposition:new generators} and \ref{proposition:new relations}, map $\phi$ is a surjective group homomorphism.
Since both $G_{n,k}$ and $\bfB_n(S_k)$ are free groups of the same rank $N_1$, the surjective homomorphism $\phi$ becomes an isomorphism since any finitely generated free groups are Hopfian.
\end{proof}

The construction of star generators in terms of $Y$-exchanges can also be generalised so that it applies to any graph $\graf$. This is done by considering an appropriate embedding of $S_k$ into $\Gamma$ as explained in the definition below. 
\begin{definition}
For each essential vertex $v\in V(\graf)$ of valency $k\ge 3$ and a sequence $\bfa=(a_1,\hdots, a_{i+1})$ with $1\le a_j\le k-1$, we define $\sigma_i^{v;\bfa}$ by the image of $\sigma_i^{\bfa}\in\bfB_n(S_k)$ under the map $(\iota_v)_*$
\[
\sigma_i^{v;\bfa} = (\iota_v)_*(\sigma_i^{\bfa}),
\]
where $\iota_v$ is the order-preserving embedding
\[
\iota_v:(S_k,*)\to (T,*)\subset(\graf,*)
\]
such that $\iota_v$ maps the central vertex of $S_k$ to the chosen vertex $v$.
We also denote $\sigma_1^{v;a,b}$ by $Y^{v;ab}$.
\end{definition}

\subsection{Generators of loop type}\label{sec:loop-gen}
Generators of the loop type are in a one-to-one correspondence with deleted edges $e\in \graf\setminus T$. To each deleted edge $e$ we assign a unique loop in $\Gamma$ using the path in $T$ that joins $\iota(e)$ and $\tau(e)$. The form of such a loop is $e\cup [\iota(e),\tau(e)]$ and we give it an orientation that agrees with the canonical orientation of $e$ -- from $\iota(e)$ to $\tau(e)$. We also distinguish one deleted edge $e_0$ being the deleted edge which is adjacent to the root of $T$. Edge $e_0$ will define a move which will be the counterpart of the total braid $\delta$ in $\bfB_n(\RR^2)$. Other edges will correspond to one-particle moves. The base point for all moves is such that all particles are resting on the edge in $T$ which is incident  to the root $*$. In terms of the corresponding Morse complex $M_n(\Gamma)$, all loops are based at the point corresponding to the unique critical $0$-cell $\{0,1,\hdots,n-1\}$. The loop generators are constructed as follows (see Fig.\ref{fig:loop-gen} an illustration of these generators).
\begin{enumerate}
\item If $e=e_0$, its corresponding loop is the following embedding of circle graph $C_k$. 
\[
\iota_e:C_k=\begin{tikzpicture}[baseline=-.5ex,scale=0.5,xscale=-1]
\foreach \i in {-1,...,2} {
\draw({72*(\i+1)}:1) --({72*\i}:1);
\fill({72*\i}:1) circle (0.1);
}
\fill(216:1) node[below] {$*$} circle (0.1);
\draw[dashed](216:1)-- node[midway,below] {$e_0$} (288:1);
\end{tikzpicture}\to (T\cup e_0,*)\subset\graf
\]
Generator $\delta$ is the image of the generator of $\bfB_n(S^1)$ which takes the top particle around $C_k$ and moves it to the bottom of the line. Generator $\delta$ will be called the \emph{circular move}. Such a move is unique (up to homotopy) provided that edge $e_0$ exists.

\item If $e\neq e_0$, then there is an embedding $\iota_e:Q\to\graf$, where $Q$ is the lollipop graph
\[
\iota_e:Q=\begin{tikzpicture}[baseline=-.5ex,scale=0.5,xscale=-1]
\foreach \i in {-1,...,2} {
\draw({72*(\i+1)}:1) --({72*\i}:1);
\fill({72*\i}:1) circle (0.1);
}
\fill(216:1) circle (0.1) (2,0) circle (0.1);
\draw (1,0)--(2,0) node[left] {$*$};
\draw[dashed](216:1)-- node[midway,below] {$e$} (288:1);
\end{tikzpicture}\to(T\cup e,*)\subset \graf.
\]
In this case, generator $\gamma$ is the image of a particular generator of $\bfB_n(Q)$ which moves the top particle around $\gamma$ while the remaining $(n-1)$ particles stay fixed on their positions on the edge which is the stick of the lollipop. We call $\gamma$ a \emph{one-particle move}. Then there are exactly $(b_1(\Gamma)-1)$ such generators where $b_1(\Gamma)$ is the first Betti number of $\graf$.
\end{enumerate}


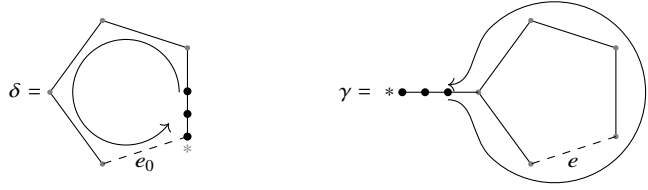
\begin{figure}[ht]
\begin{align*}
\delta&=\begin{tikzpicture}[baseline=-.5ex,scale=1,xscale=-1]
\foreach \i in {-1,...,2} {
\draw({72*(\i+1)}:1) --({72*\i}:1);
\fill[gray]({72*\i}:1) circle (1pt);
}
\fill[gray](216:1) node[below] {$*$} circle (0.05);
\draw[dashed](216:1)-- node[midway,below] {$e_0$} (288:1);
\fill (216:1) circle (1.5pt) ++(0,0.3) circle (1.5pt) ++(0,0.3) circle (1.5pt);
\draw[->] (180:0.7) arc (180:-144:0.7);
\end{tikzpicture}&
\gamma&=\begin{tikzpicture}[baseline=-.5ex,scale=1,xscale=-1]
\foreach \i in {-1,...,2} {
\draw({72*(\i+1)}:1) --({72*\i}:1);
\fill[gray]({72*\i}:1) circle (1pt);
}
\fill[gray](216:1) circle (1pt) (2,0) circle (1pt);
\draw (1,0)--(2,0) node[left] {$*$};
\draw[dashed](216:1)-- node[midway,below] {$e$} (288:1);
\fill (2,0) circle (1.5pt) ++(-0.3,0) circle (1.5pt) ++(-0.3,0) circle (1.5pt);
\draw[->] (1.4,-0.1) to[out=180,in=45] (-45:1.2) arc (-45:-315:1.2) to[out=-45,in=180] (1.4,0.1);
\end{tikzpicture}
\end{align*}
\caption{Generators of loop type}
\label{fig:loop-gen}
\end{figure}


Let us next define another one-particle move $\gamma_0$ for $\bfB_n(\graf)$ associated with edge $e_0$. Since $\graf$ is connected and contains at least one essential vertex, the loop represented by $e_0$ must contain an essential vertex $v$.
This means that we have a subgraph homeomorphic to the lollipop graph $Q$ whose loop is represented by $e_0$ and the lollipop's stick is some edge incident at $v$ that is not contained in the loop. We denote by $e_{0}^v$ and $e_{1}^v$ the two edges that are incident at $v$ and lie on the loop in $Q$. Edge $e_{0}^v$ is defined as the one which is closer to $*$ than $e_{1}^v$ in $T$. We denote the other edge incident at $v$ by $e_{2}^v$.
\[
\iota_{e_0}:Q=\begin{tikzpicture}[baseline=-.5ex,xscale=-1]
\draw(0:1) --(-72:1) node[midway, below left=-2pt] {$e_1^v$};
\fill[gray](-72:1) circle (1pt);
\draw(72:1) --(0:1) node[midway, left] {$e_{0}^v$};
\fill[gray](0:1) circle (1pt);
\foreach \i in {1,2} {
\draw({72*(\i+1)}:1) --({72*\i}:1);
\fill[gray]({72*\i}:1) circle (1pt);
}
\draw(1,0) node[below] {$v$} --(2,0) node[midway, below] {$e_{2}^v$};
\fill[gray](216:1) node[below] {$*$} circle (1pt) (2,0) circle (1pt);
\draw[dashed](216:1)-- node[midway,below] {$e_0$} (288:1);
\end{tikzpicture}\subset \graf.
\]

We define a braid $\gamma_0$ which is a one-particle move defined as follows. We first move the top $n-1$ particles to the edge $e_{2}^v$ and then we move the remaining $n$th particle along the loop represented by $e_0$. After the $n$th particle finishes the loop and returns to its original position, we move back the other $n-1$ particles from $e_{2}^v$ to their initial positions.
\[
\gamma_0\coloneqq\begin{tikzcd}[column sep=1pc]
\begin{tikzpicture}[baseline=-.5ex,xscale=-1]
\foreach \i in {-1,...,2} {
\draw({72*(\i+1)}:1) --({72*\i}:1);
\fill[gray]({72*\i}:1) circle (1pt);
}
\draw(1,0) node[below] {$v$} --(2,0) node[midway, below] {$e_{2}^v$};
\fill[gray](216:1) node[below] {$*$} circle (1pt) (2,0) circle (1pt);
\draw[dashed](216:1)-- node[midway,below] {$e_0$} (288:1);
\fill[red] (216:1) circle (1.5pt);
\fill[blue] (216:1) ++(0,0.3) circle (1.5pt) ++(0,0.3) circle (1.5pt) ++(0,0.3) circle (1.5pt);
\draw[rounded corners,->] (216:1)++(-0.15,0.3) -- (144:1.2) -- (72:1.2) -- (1.1,0.1) -- (2,0.1);
\end{tikzpicture}\arrow[r]&
\begin{tikzpicture}[baseline=-.5ex,xscale=-1]
\foreach \i in {-1,...,2} {
\draw({72*(\i+1)}:1) --({72*\i}:1);
\fill[gray]({72*\i}:1) circle (1pt);
}
\draw(1,0) node[below] {$v$} --(2,0) node[midway, below] {$e_{2}^v$};
\fill[gray](216:1) node[below] {$*$} circle (1pt) (2,0) circle (1pt);
\draw[dashed](216:1)-- node[midway,below] {$e_0$} (288:1);
\fill[red] (216:1) circle (1.5pt);
\fill[blue] (2,0) circle (1.5pt) (1.7,0) circle (1.5pt) (1.4,0) circle (1.5pt);
\draw[->] (210:0.7) arc (210:-144:0.7);
\end{tikzpicture}\arrow[r]&
\begin{tikzpicture}[baseline=-.5ex,xscale=-1]
\foreach \i in {-1,...,2} {
\draw({72*(\i+1)}:1) --({72*\i}:1);
\fill[gray]({72*\i}:1) circle (1pt);
}
\draw(1,0) node[below] {$v$} --(2,0) node[midway, below] {$e_{2}^v$};
\fill[gray](216:1) node[below] {$*$} circle (1pt) (2,0) circle (1pt);
\draw[dashed](216:1)-- node[midway,below] {$e_0$} (288:1);
\fill[red] (216:1) circle (1.5pt);
\fill[blue] (216:1) ++(0,0.3) circle (1.5pt) ++(0,0.3) circle (1.5pt) ++(0,0.3) circle (1.5pt);
\draw[rounded corners,<-] (216:1)++(-0.15,0.3) -- (144:1.2) -- (72:1.2) -- (1.1,0.1) -- (2,0.1);
\end{tikzpicture}
\end{tikzcd}
\]
Then $\gamma_0$ can be expressed as a word involving $\sigma^{v;\bfa}$, $\gamma_0$ and $\delta$. More specifically, $Y$-exchanges $\sigma_i^{v;\bfa}$ at $v$ satisfy the following relation
\begin{equation}\label{eq:delta-rel}
\gamma_0=\sigma_{n-1}^{v;2,\hdots,2,1}\sigma_{n-2}^{v;2,\hdots,2,1}\cdots\sigma_1^{v;2,1}\delta.
\end{equation}
Moreover, for any $\bfa=(a_1,\hdots,a_{i+1})$ such that $a_i\in\{1,2\}$ for all $i$, we have another relation
\begin{align}\label{eq:lollipop-braid}
\sigma_{i+1}^{v;\bfa'}&=\delta\cdot\sigma_i^{v;\bfa}\cdot\delta^{-1},&\bfa'&=(1,a_1,\hdots,a_{i+1}).
\end{align}
Both of the above relations can be verified by visualising their corresponding moves. LHS and RHS of the above equations are essentially the same moves up to some backtracking.

\begin{remark}
Let us next briefly argue how the above relations \eqref{eq:delta-rel} and \eqref{eq:lollipop-braid} will be used to relate graph braid groups to $\bfB_n(\RR^2)$. Namely, consider quotient group $\bfB_n(Q)/\langle\gamma_0\rangle$ which is generated only by star generators $\sigma_i^{v;\bfa}$. By \eqref{eq:delta-rel}, circular move is then expressed as
\[\delta=\sigma_1^{v;1,2}\sigma_2^{v;2,1,2}\hdots\sigma_{n-1}^{v;2,\hdots,1,2}.\]
Substituting the above expression in equation \eqref{eq:lollipop-braid}, we get
\begin{align}\label{equation:braid-like relation}
\sigma_1^{v;1,2}\sigma_2^{v;2,1,2}X\sigma_1^{v;1,2}X^{-1} &= \sigma_2^{v;1,1,2}\sigma_1^{v;1,2}\sigma_2^{v;2,1,2},
\end{align}
where
\begin{align*}
X&=\sigma_3^{v;2,2,1,2}\sigma_4^{v;2,\hdots,2,1,2}\cdots\sigma_{n-1}^{v;2,\hdots,2,1,2}.
\end{align*}
If we forget all decorations such as $v$ and $\bfa$, then we can use pseudo-braiding relations for $S_3$ to show that $X$ commutes with $\sigma_1^{v;1,2}$. Then, equation \eqref{equation:braid-like relation} simplifies to the ordinary braid relation.
\end{remark}

\subsection{Generation of graph braid groups and connectivity}\label{sec:2connected}
The following theorem summarises the material contained in previous sections.
\begin{theorem}\label{thm:generators}
The graph braid group $\bfB_n(\graf)$ for arbitrary $\Gamma$ is generated by the following moves
\begin{enumerate}
\item $Y$-exchanges 
\[
\{\sigma_i^{v;\bfa}\mid 1\le i<n, v\in V, \bfa=(a_1,\hdots, a_{i+1}), 1\le a_j<\val(v)\},
\]
\item $b_1(\graf)$-many one-particle moves
\[
\{\gamma_0,\hdots, \gamma_{b-1}\mid b=b_1(\graf)\},
\]
\item one circular move $\delta$,
\end{enumerate}
where $b_1(\graf)$ is the first Betti number of $\graf$.
\end{theorem}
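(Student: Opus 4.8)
The plan is to assemble Theorem~\ref{thm:generators} by combining the Farley--Sabalka presentation (Theorem~\ref{thm:FS}) with the explicit geometric interpretations of critical $1$-cells established in the preceding subsections. Recall that Theorem~\ref{thm:FS} tells us that $\bfB_n(\graf)$ is generated by the critical $1$-cells of $D_n(\graf)$, and that we have already classified these into exactly two types: critical cells at essential vertices (those with $e\in T$) and critical cells associated with deleted edges (those with $e\in\graf\setminus T$). Since the generating set in Theorem~\ref{thm:FS} is already known to be a generating set, it suffices to show that every critical $1$-cell of each type can be rewritten as a word in the proposed geometric generators: the $Y$-exchanges, the one-particle moves $\gamma_0,\hdots,\gamma_{b-1}$, and the single circular move $\delta$.

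The first and main step handles the star-type critical cells. Here I would invoke the embedding $\iota_v:(S_k,*)\to(T,*)\subset\graf$ constructed in Section~\ref{sec:star-gen}, which induces a well-defined map $\bfB_n(S_k)\to\bfB_n(\graf)$ sending critical cells at $v$ bijectively onto critical cells of $D_n(S_k)$. By Proposition~\ref{proposition:new generators}, every critical $1$-cell $v_i(\bfb)\in D_n(S_k)$ is expressible as a word in the star generators $\sigma_i^{\bfa}$; pushing this forward under $(\iota_v)_*$ shows that every star-type critical cell of $D_n(\graf)$ at $v$ equals $(\iota_v)_*$ of such a word, i.e. a word in the $Y$-exchanges $\sigma_i^{v;\bfa}$. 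Ranging over all essential vertices $v\in V$ of valency $k=\val(v)\ge 3$ and all admissible sequences $\bfa=(a_1,\hdots,a_{i+1})$ with $1\le a_j<\val(v)$ accounts for all critical cells of the first type.

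The second step handles the deleted-edge critical cells. By the structural analysis following Theorem~\ref{thm:FS}, each deleted edge $e$ contributes critical cells of the rigid form $\{e,0,1,\hdots,n-2\}$ or $\{e,1,2,\hdots,n-1\}$, and each such cell corresponds geometrically to one of the loop generators defined in Section~\ref{sec:loop-gen}: the distinguished deleted edge $e_0$ adjacent to the root yields the circular move $\delta$, while each of the remaining $b_1(\graf)-1$ deleted edges yields a one-particle move $\gamma$. Together with $\gamma_0$, which is the one-particle move associated with $e_0$ and which by relation~\eqref{eq:delta-rel} is itself a word in $\delta$ and the $\sigma_i^{v;\bfa}$, this produces exactly the $b_1(\graf)$ one-particle moves $\gamma_0,\hdots,\gamma_{b-1}$ together with $\delta$. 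Collecting the two steps, every Farley--Sabalka generator lies in the subgroup generated by the proposed moves, so these moves generate $\bfB_n(\graf)$.

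I expect the main obstacle to be purely organisational rather than computational: the genuinely substantive content has already been front-loaded into Proposition~\ref{proposition:new generators} (the reduction of arbitrary star critical cells to $Y$-exchanges) and into the loop-generator constructions of Section~\ref{sec:loop-gen}. The delicate point to get right is the bookkeeping at the distinguished edge $e_0$, namely verifying that the pair $(\delta,\gamma_0)$ exhausts the critical-cell data at $e_0$ without double-counting and that the one-particle move $\gamma_0$ is correctly absorbed via~\eqref{eq:delta-rel} rather than added as an independent generator beyond the Betti-number count. Consequently the proof is essentially a matter of quoting the earlier results and observing that the critical-cell classification after Theorem~\ref{thm:FS} matches the proposed list bijectively; I would keep it short, emphasising that it is a direct consequence of Theorem~\ref{thm:FS} together with Proposition~\ref{proposition:new generators} and the definitions of $\delta$, $\gamma_j$, and $\gamma_0$.
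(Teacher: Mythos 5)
Your proposal is correct and follows essentially the same route as the paper: both invoke Theorem~\ref{thm:FS}, dispatch the star-type critical cells via Proposition~\ref{proposition:new generators} pushed forward along the embeddings $\iota_v$, and identify the deleted-edge critical cells with the loop generators (with $e_0$ giving $\delta$ and the remaining deleted edges giving $\gamma_1,\hdots,\gamma_{b-1}$). Your extra remark that $\gamma_0$ is redundant via relation~\eqref{eq:delta-rel} is a harmless refinement of the same argument, so nothing further is needed.
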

\begin{proof}
This is a direct consequence of Farley-Sabalka's algorithm and Proposition~\ref{proposition:new generators}. Namely, any critical $1$-cell associated with an essential vertex $v\in\Gamma$ is $F^\infty$-equivalent to a word of $Y$-exchanges $\sigma_i^{v;\bfa}$. Furthermore, any critical $1$-cell associated with a deleted edge $e\neq e_0$ is $F^\infty$-equivalent to a word of $1$-cells corresponding to the loop generator associated with $e$. This is because such a word contains only collapsible cells and one critical cell which is precisely $\{e,0,1,\hdots,n-2\}$. By similar arguments, critical cell $\{e_0,1,2,\hdots,n-1\}$ is $F^\infty$-equivalent to the word of $1$-cells associated with circular move $\delta$. Hence, for every critical $1$-cell in $D_n(\Gamma)$, we have found its corresponding geometric generator. By theorem \ref{thm:FS}, such geometric moves generate $\bfB_n(\graf)$.
\end{proof}

Our next result states that for $2$-connected graphs the number of $Y$-exchanges needed to generate $\bfB_n(\Gamma)$ can be greatly reduced.
\begin{proposition}\label{proposition:generators}
Let $\graf$ be a $2$-connected graph. Then the braid group $\bfB_n(\graf)$ is generated by 
\[
\{Y^{v;ab}\mid v\in V, 1\le a<b<\val(v)\}\sqcup
\{\gamma_0,\hdots,\gamma_{b-1}\mid b=b_1(\graf)\}\sqcup\{\delta\}.
\]
\end{proposition}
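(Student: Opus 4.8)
By Theorem~\ref{thm:generators}, $\bfB_n(\graf)$ is already generated by all $Y$-exchanges $\sigma_i^{v;\bfa}$, the one-particle moves $\gamma_0,\hdots,\gamma_{b-1}$, and the circular move $\delta$. The task is therefore purely reductive: I must show that, under the hypothesis of $2$-connectivity, every $Y$-exchange $\sigma_i^{v;\bfa}$ can be rewritten as a word in the much smaller set $\{Y^{w;ab}\}$ (the $i=1$ exchanges at all vertices) together with the loop generators and $\delta$. The loop generators and $\delta$ are carried over verbatim, so they require no further work.

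**Plan.** The main engine is the pair of relations already established at the end of Section~\ref{sec:loop-gen}, namely \eqref{eq:delta-rel} and, most importantly, the \emph{lollipop-braid relation} \eqref{eq:lollipop-braid}, which reads $\sigma_{i+1}^{v;\bfa'}=\delta\,\sigma_i^{v;\bfa}\,\delta^{-1}$ with $\bfa'=(1,a_1,\hdots,a_{i+1})$. Read from right to left, this says that conjugation by $\delta$ raises the index $i$ by one while prepending a $1$ to the leaf-sequence. Conjugating repeatedly by $\delta$ therefore lets me express any higher-index exchange $\sigma_{i}^{v;\bfa}$ whose sequence begins with enough $1$'s in terms of a lower-index exchange, ultimately in terms of some $\sigma_1^{w;a,b}=Y^{w;ab}$. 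The first step is thus to set up this descent carefully: starting from a general $\sigma_i^{v;\bfa}$, I want to strip the leading entries of $\bfa$ one at a time by inverse-conjugating by $\delta$, reducing $i$ until I reach a $1$-exchange. The obstruction is that \eqref{eq:lollipop-braid} only applies when the leaf-sequence has the special prefix structure (entries in $\{1,2\}$ beginning with $1$), whereas a general $\bfa$ has arbitrary entries $1\le a_j<\val(v)$.

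**Where $2$-connectivity enters.** This is exactly the point where the connectivity hypothesis must do the work, and I expect it to be the main obstacle. The embedding $\iota_v\colon(S_k,*)\to(T,*)\subset\graf$ realizes a given $Y$-exchange using particular leaves at $v$; in a $2$-connected graph, any two edges incident at an essential vertex lie together on some embedded cycle (by Menger's theorem there are two internally-disjoint paths joining their far endpoints, producing a cycle through $v$). Such a cycle furnishes a lollipop subgraph $Q$ whose loop passes through the two chosen leaves, and hence a copy of relation \eqref{eq:lollipop-braid} adapted to \emph{that} pair of leaves rather than only to the canonical leaves $e_0^v,e_1^v,e_2^v$. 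The plan is to use these cycles to relabel: $2$-connectivity guarantees that I can always route the "spectator" particles onto a leaf disjoint from the two active leaves, so that the inductive stripping via $\delta$-conjugation can be carried out with the prefix hypotheses of \eqref{eq:lollipop-braid} satisfied. Concretely, I would prove a lemma stating that for a $2$-connected graph every $\sigma_i^{v;\bfa}$ is conjugate, by a word in $\delta$ and the $\gamma_j$'s, to a product of $1$-exchanges $Y^{w;ab}$; the $2$-connectivity is what ensures the requisite cycles (hence the requisite lollipop relations) exist to justify each conjugation.

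**Assembly.** With that lemma in hand the proposition follows immediately: Theorem~\ref{thm:generators} gives generation by the full $Y$-exchange set plus $\{\gamma_0,\hdots,\gamma_{b-1}\}$ plus $\delta$; the lemma rewrites each member of the full $Y$-exchange set as a word in $\{Y^{w;ab}\}$, the $\gamma_j$'s, and $\delta$; therefore the stated reduced set generates $\bfB_n(\graf)$. I would also double-check the degenerate cases --- vertices of valence exactly $3$, where the index ranges $1\le a<b<\val(v)$ collapse, and the interaction of relation \eqref{eq:delta-rel} (which expresses $\gamma_0$, and dually $\delta$, through exchanges at the essential vertex on the $e_0$-loop) so that no circularity arises in the rewriting. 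The delicate bookkeeping of leaf-sequences under repeated $\delta$-conjugation, and verifying that the cycles supplied by $2$-connectivity are genuinely embedded (so the lollipop relations are valid in $\graf$ and not merely formal), is where the real care is needed.
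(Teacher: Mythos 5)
Your overall strategy does match the paper's in spirit: the paper also proves the proposition by stripping the leading entries of $\bfa$ one at a time, writing $\sigma_i^{v;\bfa}$ as a conjugate of $\sigma_{i-1}^{v;\bfa'}$, $\bfa'=(a_2,\hdots,a_{i+1})$, by a circular-type move, with $2$-connectivity supplying the loops. But your proposal has a genuine gap at its crux: the conjugating elements. Your ``adapted lollipop relation'' would conjugate by the circular move around an embedded cycle through two leaves at $v$; such a cycle generally lies far from the root $*$, so the circular move around it is \emph{not} the generator $\delta$, is not even a based loop in $C_n(\graf)$ without connecting arcs, and nothing in your sketch expresses it as a word in the allowed generators $\{Y^{w;ab}\}\cup\{\gamma_j\}\cup\{\delta\}$. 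That expression problem is precisely the content of the paper's proof: there the loop is forced to be based at the root, $\bar\gamma=[*,v']\cup\gamma$, where $v'$ is the far endpoint of the leaf $e_{a_1}^v$ being stripped and $\gamma$ is a path from $v'$ to $*$ inside $\graf\setminus\st(v)$ (this is where $2$-connectivity enters: connectedness of the complement of the \emph{star} of $v$, rather than Menger applied to two leaf endpoints), and one then checks that the circular-type move around $\bar\gamma$ equals $\gamma_{j_1}^{s_1}\cdots\gamma_{j_\ell}^{s_\ell}\delta$, giving $\sigma_i^{v;\bfa}=\bar\gamma\,\sigma_{i-1}^{v;\bfa'}\,\bar\gamma^{-1}$. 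Note also that relation \eqref{eq:lollipop-braid} as established in the paper holds only at the essential vertex lying on the $e_0$-loop, so conjugating by the literal generator $\delta$ cannot by itself reach exchanges at other vertices.

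Moreover, your proposed key lemma --- that every $\sigma_i^{v;\bfa}$ is conjugate \emph{by a word in $\delta$ and the $\gamma_j$'s alone} to a product of $1$-exchanges --- is stronger than what the proposition needs, and your plan would fail to prove it. When every path from $v'$ back to $*$ in $\graf\setminus\st(v)$ re-enters the tree path $[*,v]$ at an intermediate essential vertex $w\neq *$ (this can be unavoidable), the resulting loop is not embedded and acts as a one-particle word $\bar\gamma=\gamma_{j_1}^{s_1}\cdots\gamma_{j_\ell}^{s_\ell}$; the paper shows that the conjugation then necessarily involves an additional word of $Y$-exchanges at $w$, namely $\sigma=\sigma_1^{w;b,a}\sigma_2^{w;a,b,a}\cdots\sigma_{i}^{w;a,\hdots,a,b,a}$, so that $\sigma_i^{v;\bfa}=\bar\gamma\,\sigma\,\sigma_{i-1}^{v;\bfa'}\,\sigma^{-1}\,\bar\gamma^{-1}$, and it requires a \emph{second} induction, on the tree-distance from $*$ to $v$, to rewrite the $\sigma_j^{w;-}$'s. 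Your descent has no counterpart for this case, so it does not close. A minor further point: two internally-disjoint paths between the far endpoints $p_a,p_b$ (Menger) give a cycle through $p_a$ and $p_b$ that may avoid $v$ entirely; the cycle through the two edges $e_a^v,e_b^v$ should instead be built from $p_a$--$v$--$p_b$ together with a $p_a$--$p_b$ path in $\graf\setminus\{v\}$.
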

\begin{proof}
It suffices to show that each $\sigma_i^{v;\bfa}$ can be expressed as a word of $Y^{w;ab}$, $\gamma_j$, and $\delta$. We use the induction on both $i$ and the edge-length of the path $[*,v]$ in $T$. For $i=1$, there is nothing to prove for any $v$.

Suppose that $v$ is the very first essential vertex when counting from $*$, i.e. there are no essential vertices in the interior of $[*,v]$. Let $v'$ be the essential vertex which is the end of $a_1$th edge incident at $v$. Since $\graf$ is $2$-connected, $\graf_v=\graf\setminus\st(v)$ is connected.
Therefore there exists a path $\gamma\subset \graf_v$ from $v'$ to $*$. Such a path $\gamma$ defines a loop $\bar\gamma$ based at $*$ which is the union $\bar\gamma\coloneqq[*,v']\cup\gamma$.
\[
\begin{tikzpicture}
\draw(0,0) node[below right] {$v$} -- (2,0) node[below] {$*$};
\draw(0,0)-- node[midway, left] {$e_{a_1}^v$} (-1,1) node[left] {$v'$};
\draw(0,0) -- (90:0.5) (0,0)--(180:0.5) (0,0)--(210:0.5) (0,0)--(240:0.5);
\draw[dashed] (90:0.5)--(90:1) (180:0.5)--(180:1) (210:0.5)--(210:1) (240:0.5)--(240:1);
\draw[red,dashed, rounded corners] (-1,1) -- (0,1.5) --node[midway, above] {$\gamma$} (3,1.5) -- (3,0) -- node[below] {$e_0$} (2,0);
\fill[gray] (0,0) circle (2pt) (2,0) circle (2pt) (-1,1) circle (2pt);
\fill[blue] (2,0) circle (0.1) (1.6,0) circle (0.1) (1.2,0) circle (0.1);
\fill[red] (0.8,0) circle (0.1);
\draw[rounded corners, ->] (0.8,0.2) -- (0.1,0.2) -- (-0.8, 1) -- (0,1.3) -- node[midway, below] {$\bar\gamma$} (2.9,1.3) -- (2.9,0.2) -- (2,0.2);
\end{tikzpicture}
\]

We use loop $\bar\gamma$ to define a braid of the circular type which sends the top particle around $\bar\gamma$ to the last position, just as the circular move $\delta$. We denote such a move by $\bar\gamma$ as well. Move $\bar\gamma$ is a word of $\gamma_j^{\pm1}$ and $\delta$ so that
\[
\bar\gamma=\gamma_{j_1}^{s_1}\hdots\gamma_{j_\ell}^{s_\ell}\delta,
\]
where $(j_1,\hdots,j_\ell)$ is a sequence of indices of deleted edges that $\gamma$ passes and $s_k\in\{-1,1\}$, $1\leq k\leq \ell$, are exponents coming from orientations of the deleted edges relative to the orientation of $\gamma$. We also have a relation which is analogous to relation \eqref{eq:lollipop-braid}. 
\begin{align*}
\sigma_i^{v;\bfa} &= \bar\gamma\sigma_{i-1}^{v;\bfa'}\left(\bar\gamma\right)^{-1},&
\bfa'&=(a_2,\hdots, a_{i+1})
\end{align*}
By repeating the above inductive step for $\sigma_{i-1}^{v;\bfa'}$ and another loop associated with edge $e_{a_2}^v$ and so on, we end up with an expression that involves only one-particle moves, the circular move $\delta$ and $\sigma_1^{v;a_{i},a_{i+1}}$.

Now suppose that  $v$ is a vertex farther than the nearest vertices from $*$ and $i\ge2$. 
As before, the connectivity of $\graf_v$ implies the existence of a path $\gamma\subset\graf_v$ from the $v'$ to $*$, where $v'\neq v$ is the vertex which is the end point of the $a_1$th edge adjacent to $v$.

Suppose that $\gamma$ does not intersect the path $[*,v]$ in $T$.
Then the union $[*,v']\cup\gamma$ defines a loop $\bar\gamma$ based at $*$ and by the exactly same argument as above, we have
\begin{equation}\label{2conn-case1}
\sigma_i^{v;\bfa} = \bar\gamma\sigma_{i-1}^{v;\bfa'}\left(\bar\gamma\right)^{-1},
\end{equation}
where 
\begin{align*}
\bfa'&=(a_2,\hdots, a_{i+1}),&
\bar\gamma&=\gamma_{j_1}^{s_1}\hdots\gamma_{j_\ell}^{s_\ell}\delta,
\end{align*}
and by the induction hypothesis, we are done.
\[
\begin{tikzpicture}
\draw[dashed](0,0) node[below right] {$v$} -- (1,0);
\draw (1,0) -- (3,0) node[below] {$*$};
\draw(0,0)-- node[midway, left] {$e_{a_1}^v$} (-1,1) node[left] {$v'$};
\draw(0,0) -- (90:0.5) (0,0)--(180:0.5) (0,0)--(210:0.5) (0,0)--(240:0.5);
\draw[dashed] (90:0.5)--(90:1) (180:0.5)--(180:1) (210:0.5)--(210:1) (240:0.5)--(240:1);
\draw[red,dashed, rounded corners] (-1,1) -- (0,1.5) --node[midway, above] {$\gamma$} (4,1.5) -- (4,0) -- node[below] {$e_0$} (3,0);
\fill[gray] (0,0) circle (2pt) (3,0) circle (2pt) (-1,1) circle (2pt);
\fill[blue] (3,0) circle (0.1) (2.6,0) circle (0.1) (2.2,0) circle (0.1);
\fill[red] (1.8,0) circle (0.1);
\draw[rounded corners, ->] (1.8,0.2) -- (0.1,0.2) -- (-0.8, 1) -- (0,1.3) -- node[midway, below] {$\bar\gamma$} (3.9,1.3) -- (3.9,0.2) -- (3,0.2);
\end{tikzpicture}
\]

Finally, suppose that $\gamma$ intersects the path $[*,v']$ at $w\neq *$.
By taking the subpath of $\gamma$, we may assume that $\gamma$ is a composition of $\gamma'$ and $[w,*]$ such that $\gamma'$ is a path joining $v$ and $w$ and does not intersect $[*,v]$.
\[
\begin{tikzpicture}
\draw[dashed](0,0) node[below right] {$v$} -- (1,0);
\draw[dashed](2,0) node[below] {$w$} -- (3,0);
\draw (1,0)-- node[midway,below] {$e_a^w$}(2,0);
\draw(3,0) -- (5,0) node[below] {$*$};
\draw(0,0)-- node[midway, left] {$e_{a_1}^v$} (-1,1) node[left] {$v'$};
\draw(0,0) -- (90:0.5) (0,0)--(180:0.5) (0,0)--(210:0.5) (0,0)--(240:0.5);
\draw[dashed] (90:0.5)--(90:1) (180:0.5)--(180:1) (210:0.5)--(210:1) (240:0.5)--(240:1);
\draw[red,dashed, rounded corners] (-1,1) -- (0,1.5) --node[midway, above] {$\gamma'$} (2,1.5) -- (2,1);
\draw[red] (2,1)-- node[midway,right] {$e_b^w$} (2,0);
\fill[gray] (0,0) circle (2pt) (2,0) circle (2pt) (-1,1) circle (2pt);
\fill[blue] (5,0) circle (0.1) (4.6,0) circle (0.1) (4.2,0) circle (0.1);
\fill[red] (3.8,0) circle (0.1);
\draw[rounded corners, ->] (3.8,0.2) -- (0.1,0.2) -- (-0.8, 1) -- (0,1.3) -- node[midway, below] {$\bar\gamma$} (1.9,1.3) -- (1.9,0.3) -- (3.8,0.3);
\end{tikzpicture}
\]

As before, it gives us loop $\bar\gamma=[*,v']\cup \gamma'\cup [w,*]$ based at $*$.
In this case, $\bar\gamma$ can be regarded as a one-particle move and expressed as a word
\[
\bar\gamma=\gamma_{j_1}^{s_1}\hdots\gamma_{j_\ell}^{s_\ell},
\]
where $(j_1,\hdots,j_\ell)$ is the sequence of indices of deleted edges defined as the same as before and $s_k\in\{-1,1\}$, $1\leq k\leq \ell$ are the exponents.

Let us denote two edges adjacent to $w$ contained in $[*,v]$ and $\gamma'$ by $e_a^w$ and $e_b^w$, respectively.
The conjugation $\bar\gamma^{-1}\sigma_i^{v;\bfa}\bar\gamma$ gives us a braid, which is a concatenation as follows:
\begin{enumerate}
\item Move the first particle along $[*,w]$ to the edge $e_b^w$.
\item Move the next $(i-2)$ particles onto edges adjacent to $v$ by using the sequence $(a_2,\hdots, a_{i-1})$.
\item Interchange the positions of the next two particles by using the $a_i$th and $a_{i+1}$th edges adjacent to $v$ and take them back to the edge at $*$.
\item Move the $(i-2)$ particles on edges of $v$ back to the original position.
\item Move the first particle back to the original position.
\end{enumerate}

Then indeed, this is a conjugate of $\sigma_{i-1}^{v;\bfa'}$ with $\bfa'=(a_2,\hdots, a_{i+1})$ by the braid $\sigma$ which interchanges positions of the first particle with the next $i$ particles at $w$ by using edges $e_a^w$ and $e_b^w$, which is a word of $Y$-exchanges 
\[
\sigma=\sigma_1^{w;b,a}\sigma_2^{w;a,b,a}\hdots\sigma_{i}^{w;a,\hdots,a,b,a}.
\]
Therefore we have
\begin{equation}\label{2conn-case2}
\sigma_i^{v;\bfa} = \bar\gamma\cdot \sigma\cdot \sigma_{i-1}^{v;\bfa'}\cdot \sigma^{-1}\cdot\bar\gamma^{-1}
\end{equation}

By induction hypothesis, not only $\sigma_{i-1}^{v;\bfa'}$ but also all $\sigma_j^{w;-}$ can be expressed as words of $\sigma_i^{w;-}$'s since $w$ is closer to $*$ than $v$, which completes the proof.
\end{proof}


\section{Planar triconnected graphs}\label{sec:3connected}
For the rest of the paper, we assume that $\graf$ is planar and we fix its planar embedding $\iota:\graf\to\RR^2$. We denote the closures of connected components of the complement $\RR^2\setminus\graf$ by  $D_0,\hdots,D_b$, where $b$ is the first betti nuber of $\graf$.
In particular, we assume that $D_b$ is the unbounded component.
\begin{remark}\label{remark:regions}
If $\graf$ is $2$-connected, then each bounded $D_i$ is homeomorphic to a closed disk and will be called a face of $\Gamma$.
\end{remark}

We choose a bounded face, say $D_0$, whose boundary $\partial D_0$ shares at least one edge $e$ with the unbounded face $\partial D_b$. By subdividing $e$ if necessary, we may assume that there exists a bivalent vertex $*$ in the common boundary of $D_0$ and $D_b$
\begin{align*}
*&\in\partial D_0\cap \partial D_b,&
\val(*)&=2.
\end{align*}
We denote the facial cycle $\partial D_0$ by $\delta$.

Let $(T,*)$ be a rooted spanning tree for $\graf$ such that every deleted edge has bivalent endpoints. Without loss of generality, we may assume that $\delta\setminus e_0\subset T$, or equivalently, $e_0\subset \graf\setminus T$ is a deleted edge representing the loop $\delta$.

\begin{proposition}\label{braid-2connected}
Let $\graf$ be a planar $2$-connected graph. Then $\bfB_n(\graf)$ admits a group presentation generated by $Y$-exchanges, one-particle moves and the circular move such that we obtain the classical braid group $\bfB_n(\RR^2)$ from $\bfB_n(\graf)$ by 
\begin{enumerate}
\item taking the quotient by all one-particle moves $\{\gamma_0,\hdots,\gamma_{b-1}\}$, and 
\item identifying all $Y$-exchanges.
\end{enumerate}
\end{proposition}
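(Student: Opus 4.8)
The plan is to show that after the two reductions, the presentation of $\bfB_n(\graf)$ collapses exactly onto the two-generator presentation of $\bfB_n(\RR^2)$ reviewed in Section~\ref{sec:prel}. The starting point is Proposition~\ref{proposition:generators}, which already gives us, for a $2$-connected $\graf$, a generating set consisting of the $Y$-exchanges $Y^{v;ab}$, the one-particle moves $\gamma_0,\hdots,\gamma_{b-1}$, and the single circular move $\delta$. The quotient in step~(1) kills all $\gamma_j$, so afterwards we are left with a group generated by the $Y^{v;ab}$ and $\delta$, and step~(2) identifies all the $Y$-exchanges into a single generator, which I will call $\sigma$. The goal is then to verify that in this quotient the relations reduce precisely to the braiding relation $\sigma\delta\sigma\delta^{-1}\sigma=\delta\sigma\delta^{-1}\sigma\delta\sigma\delta^{-1}$ together with the far-commutativity relations, matching the two-generator presentation of $\bfB_n(\RR^2)$ written in terms of $\delta$ and $\sigma_1$.

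First I would set up the quotient carefully, using the choice of root $*$ on the common boundary of the distinguished face $\delta$ and the unbounded face made just before the statement, so that $e_0$ is the deleted edge representing the facial cycle $\delta$. The key structural input is relation~\eqref{eq:lollipop-braid}, $\sigma_{i+1}^{v;\bfa'}=\delta\sigma_i^{v;\bfa}\delta^{-1}$ with $\bfa'=(1,a_1,\hdots,a_{i+1})$, which after identifying all $Y$-exchanges becomes the recursion $\sigma_{i+1}=\delta\sigma_i\delta^{-1}$, i.e.\ $\sigma_{i+1}=\delta^i\sigma\delta^{-i}$. This is exactly the recursion that expresses all Artin generators in terms of $\delta$ and $\sigma_1$, so the two surviving generators $\sigma$ and $\delta$ already have the right shape. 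The content of step~(2), then, is that identifying the $Y$-exchanges is consistent: one must check that the pseudo-commutative and pseudo-braid relations of Proposition~\ref{proposition:new relations}, which carry the decorations $v$ and $\bfa$, collapse to the genuine commutative and braid relations once those decorations are forgotten. The remark following equation~\eqref{equation:braid-like relation} already indicates this mechanism for $S_3$: the auxiliary word $X$ commutes with $\sigma_1^{v;1,2}$ and the relation simplifies to the ordinary braid relation.

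I would then assemble these pieces into a genuine presentation argument via Tietze transformations. Using the recursion $\sigma_{i+1}=\delta^i\sigma\delta^{-i}$ I eliminate all generators except $\sigma$ and $\delta$, rewrite every relation of the graph braid group in terms of $\sigma$ and $\delta$, and show that in the quotient each such relation is either trivially satisfied or is a consequence of the braid relation. Since by Theorem~\ref{thm:generators} and Proposition~\ref{proposition:generators} the listed moves generate $\bfB_n(\graf)$, and since the target $\bfB_n(\RR^2)$ is generated by $\delta$ and $\sigma_1$ with precisely these relations, I obtain a surjection from the quotient onto $\bfB_n(\RR^2)$ whose relations match; the fact that the braiding relation is the only surviving nontrivial relation gives the reverse direction, yielding the isomorphism.

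The main obstacle I anticipate is step~(2), namely proving that identifying all $Y$-exchanges is a well-defined and relation-preserving operation. The $Y$-exchanges $Y^{v;ab}$ live at different essential vertices $v$ and involve different leaf pairs $(a,b)$, and a priori a relation among the decorated generators need not survive the forgetful identification; one must verify that no contradiction (such as $\sigma=\sigma^{-1}$) is introduced and that the collapse of \eqref{equation:pseudo-braid} to the honest braid relation works for every vertex, not just in the illustrative $S_3$ computation. This requires a careful bookkeeping of how the conjugating paths $\bar\gamma$ built in Proposition~\ref{proposition:generators} interact with the quotient by the $\gamma_j$, since it is precisely the vanishing of the one-particle moves that makes the various conjugations trivial and the decorations irrelevant. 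Verifying that $2$-connectedness (hence connectivity of $\graf\setminus\st(v)$) is exactly what guarantees the existence of these conjugating loops, and that planarity fixes the orientation conventions so the braiding is clockwise and consistent across all faces, is where the geometric care is needed.
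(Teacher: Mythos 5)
Your proposal correctly reproduces one half of the paper's argument: the collapse of the lollipop relation \eqref{eq:lollipop-braid} to the recursion $\sigma_{i+1}=\delta\sigma_i\delta^{-1}$, and of the pseudo-commutative and pseudo-braid relations to honest commutation and braid relations once decorations are forgotten. This is exactly how the paper constructs a surjective homomorphism $f:\bfB_n(\RR^2)\to\bfB_n(\graf)/\sim$. The genuine gap is in the other direction. You propose to ``rewrite every relation of the graph braid group in terms of $\sigma$ and $\delta$ and show that each is a consequence of the braid relation,'' but no explicit list of relations of $\bfB_n(\graf)$ is available for a general planar $2$-connected graph: by Theorem~\ref{thm:FS} the relators are the $F^\infty$-images of boundary words of critical $2$-cells, which are graph-dependent and never enumerated in closed form (the paper resorts to machine computation even for the single graph $\Theta_4$ in the Appendix). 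Consequently the step in which you would certify that the quotient has \emph{no more} relations than $\bfB_n(\RR^2)$ --- i.e.\ that killing the $\gamma_j$'s and identifying the $Y$-exchanges does not collapse the group further, say to $S_n$ or to the trivial group --- cannot be executed as described. Your own worry about ``no contradiction such as $\sigma=\sigma^{-1}$'' is precisely this problem, and your plan offers no mechanism to resolve it; asserting a surjection onto $\bfB_n(\RR^2)$ ``whose relations match'' presupposes what has to be proved.

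The paper avoids this entirely by using planarity in a way your proposal never does: the planar embedding $\iota:\graf\to\RR^2$ induces, via the inclusion of configuration spaces, a genuine homomorphism $\iota_*:\bfB_n(\graf)\to\bfB_n(\RR^2)$. Since $\iota_*$ visibly kills every one-particle move and sends all $Y$-exchanges to conjugates of the same planar generator, it factors through the quotient with no relation-checking whatsoever, yielding $\bar\iota_*:\bfB_n(\graf)/\sim\;\to\bfB_n(\RR^2)$. One then checks that $\bar\iota_*\circ f$ is the identity on the Artin generators, which forces $f$ to be injective as well as surjective, hence an isomorphism. This retraction argument is the key idea missing from your proposal; the only role you assign to planarity (fixing orientation conventions) is not its essential use here.
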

\begin{proof}
Let $(T,*)$ be a rooted spanning tree as above.
Then $\bfB_n(\graf)$ admits a group presentation whose generators are $Y$-exchanges, one-particle moves and a circular move by Proposition~\ref{proposition:generators}.

It is clear that the induced map $\iota_*:\bfB_n(\graf)\to\bfB_n(\RR^2)$ kills all one-particle moves and identifies all $Y^{v;a,b}$'s. Therefore $\iota_*$ factors through $\bfB_n(\graf)/\sim$
\[
\begin{tikzcd}
\bfB_n(\graf)\arrow[rr,"\iota_*"]\arrow[rd,"\psi"] & & \bfB_n(\RR^2)\\
&\bfB_n(\graf)/\sim\arrow[ru,"\bar\iota_*"]
\end{tikzcd}
\]
where
\[
\bfB_n(\graf)/\sim=\bfB_n(\graf)\bigg/\left\langle
\gamma_0,\hdots,\gamma_{b-1}, Y^{v;a,b}Y^{v';b',a'}
\right\rangle
\]
for $1\le a<b<\val(v), 1\le a'<b'<\val(v')$ and $v, v'\in V$. Taking the quotient $\sim$ is the same as forgetting decorations $\bfa$ and $v$ in $\sigma_i^{v;\bfa}$. Here we only sketch the proof of this fact, as it relies on the same inductive reasoning as the one that was used in the proof of Proposition \ref{proposition:generators}. Namely, for an essential vertex $v$ consider the inductive step where one assumes that $\sigma_j^{w;\bfa}\sim\sigma_j^{w';\bfb}$ for all $j<i$ and $w,w'<v$ and any sequences $\bfa,\ \bfb$. Recursive relations \eqref{2conn-case1} and \eqref{2conn-case2} together with braiding and commutative relations for each $w,w'<v$ imply that  $\sigma_i^{v;\bfa}\sim\sigma_i^{w;\bfb}$ for all $w<v$ and and appropriate $\bfb$. The base case is $\sigma_1^{v_0;1,2}$  where $v_0$ is the closest essential vertex to the root $*$.

Hence, pseudo-commutative star relations \eqref{equation:pseudo-commutative} become just the commutative relations in $\bfB_n(\graf)/\sim$ and lollipop relations \eqref{equation:braid-like relation} become the regular $2D$ braiding relations. Therefore, we have 
\begin{align*}
\bar\sigma_i\bar\sigma_j\bar\sigma_i^{-1}\bar\sigma_j^{-1}=\begin{cases}
e & |i-j|>1;\\
\bar\sigma_j^{-1}\bar\sigma_i& |i-j|=1,
\end{cases}
\end{align*}
where $\bar\sigma_i\coloneqq\psi(\delta^{i-1}Y^{v;a,b}\delta^{1-i})$ for any $v$ and $1\le a<b<\val(v)$.

In other words, there exists a surjective homomorphism $f:\bfB_n(\RR^2)\to\bfB_n(\graf)/\sim$ sending $\sigma_i$ to $\bar\sigma_i$.
Since $\bar\iota_*\circ f$ is the identity on $\bfB_n(\RR^2)$, the map $f$ is injective as well. Therefore $f$ is in fact an isomorphism.
\end{proof}

\begin{example}\label{example:theta}
Consider graph $\Theta_3$ from Fig. \ref{theta-moves}. Denote by $\gamma$ the one-particle move where the first particle travels around the cycle which is disjoint from $*$ in the direction from $v$ to $w$ along $[v,w]$ (it is the upper cycle on Fig. \ref{theta-moves}). Moreover, denote by $\gamma'$ a move which involves particles $1$ and $2$ where i) particle $1$ travels to branch $e_2^w$ through the solid edge $[v,w]$, ii) particle $2$ goes around the upper cycle (first along $[v,w]$ in the direction from $v$ to $w$ and then back to $v$ through the upper deleted edge) and iii) the first particle goes back through the solid edge $[v,w]$.
Up to some backtracking moves, we have the following relations.
\begin{equation}\label{theta-rel1-1}
Y^{w;2,1}\gamma=\gamma' Y^{v; 2,1},
\end{equation}
\begin{equation}\label{theta-rel1-2}
\delta\gamma=\gamma' \delta.
\end{equation}
Note, that after substituting expression for $\gamma'$ from (\ref{theta-rel1-1}) into (\ref{theta-rel1-2}) we obtain 
\begin{equation}\label{theta-rel}
\delta\gamma\delta^{-1}=Y^{w;2,1}\gamma Y^{v;1,2}.
\end{equation}
After taking the quotient by one-particle move $\gamma$, equation \eqref{theta-rel} yields identification $Y^{w;1,2}\sim Y^{v; 1,2}$. Hence, by proposition \ref{braid-2connected}, group $\bfB_n(\Theta_3)$ becomes $\bfB_n(\RR^2)$ already after taking the quotient by one-particle cycles.
\end{example}

\begin{example}\label{example:vartheta}
Let $\vartheta_3$ be the graph which is the union of the theta graph $\Theta_3$ and an edge as depicted in Figure~\ref{vartheta-moves}.
We denote by $\gamma$ and $\gamma'$ the one-particle moves corresponding to the loops which are boundary of the right and left regions, respectively.
Then as before, up to some backtracking moves, we have the following relation.
\begin{equation}\label{vartheta-rel}
Y^{w;1,2}=\gamma' Y^{v;1,2} \gamma Y^{z;2,1} \gamma'^{-1} Y^{z;1,2} \gamma.
\end{equation}
After taking the quotient by $\gamma, \gamma'$, this yields identification $Y^{w;1,2}\sim Y^{v;1,2}$.
\end{example}

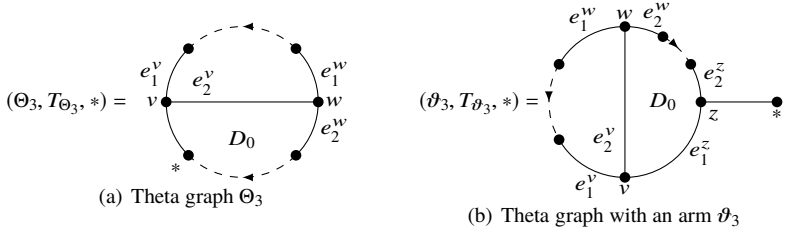
\begin{figure}[h]
\subfigure[Theta graph $\Theta_3$\label{theta-moves}]{\makebox[0.45\textwidth]{$
(\Theta_3, T_{\Theta_3},*)=\begin{tikzpicture}[baseline=-.5ex,scale=0.5]
\draw(45:2) arc (45:-45:2) (135:2) arc (135:225:2);
\draw(-2,0) -- (2,0) (-1,0) node[above] {$e_2^v$};
\begin{scope}[dashed,decoration={
    markings,
    mark=at position 0.5 with {\arrow{latex}}}
    ] 
    \draw[postaction={decorate}](45:2) arc (45:135:2);
    \draw[postaction={decorate}](-45:2) arc (315:225:2);
\end{scope}
\fill(225:2) node[below left] {$*$};
\fill(-45:2) circle (4pt) (0:2) node[right] {$w$} circle (4pt) (45:2) circle (4pt);
\fill(-22:-2) node[left] {$e_1^v$} (-45:-2) circle (4pt) (0:-2) node[left] {$v$} circle (4pt) (45:-2) circle (4pt) (-22:2) node[right] {$e_2^w$} (22:2) node[right] {$e_1^w$};
\draw(0,-1) node {$D_0$};
\end{tikzpicture}
$}}
\subfigure[Theta graph with an arm $\vartheta_3$\label{vartheta-moves}]{\makebox[0.45\textwidth]{$
(\vartheta_3, T_{\vartheta_3},*)=\begin{tikzpicture}[baseline=-.5ex]
\draw (30:1) arc (30:-150:1) node[pos=0.05,right] {$e_2^z$} node[pos=0.4, right] {$e_1^z$} (60:1) arc (60:150:1) node[pos=0.05, above] {$e_2^w$} node[pos=0.7, above] {$e_1^w$};
\begin{scope}[dashed,decoration={
    markings,
    mark=at position 0.5 with {\arrow{latex}}}
    ] 
    \draw[postaction={decorate}](60:1) arc (60:30:1);
    \draw[postaction={decorate}](150:1) arc (150:210:1);
\end{scope}
\draw (0,1) node[above] {$w$}--(0,-1) node[below] {$v$};
\draw (-120:1) node[below] {$e_1^v$} (0,-0.5) node[left] {$e_2^v$};
\draw (1,0) node[below right] {$z$} --(2,0);
\draw (0.5,0) node {$D_0$};
\fill (0,1) circle (2pt) (0,-1) circle (2pt) (1,0) circle (2pt) (30:1) circle (2pt) (60:1) circle (2pt) (150:1) circle (2pt) (210:1) circle (2pt) (2,0) circle (2pt);
\draw (2,0) node[below] {$*$};
\end{tikzpicture}
$}}
\caption{Settings for canonical moves on $\Theta_3$ and $\vartheta_3$.}
\label{fig:thetas}
\end{figure}

\begin{definition}
Let $p, q$ be two essential vertices of valency $k, \ell$.
For $1\le a<b<k$ and $1\le c<d<\ell$, two triples $(p;a,b)$ and $(q;c,d)$ are said to be \emph{$\Theta$-related} if there exists and embedding $\iota$
\[
\iota:(\Theta_3, T_{\Theta_3},*)\to (\graf, T,*)\quad\text{ or }\quad
\iota:(\vartheta_3, T_{\vartheta_3},*)\to (\graf, T,*)
\]
such that 
\begin{align*}
\iota(v) &= p,&
\iota(e_1^v) &= e_a^p,&
\iota(e_2^v) &= e_b^p,\\
\iota(w) &= q,&
\iota(e_1^w) &= e_c^q,&
\iota(e_2^w) &= e_d^q.
\end{align*}
We define the equivalence relation generated by $\Theta$-relatedness and denote it by
\[
(p;a,b)\sim_\Theta (q;c,d).
\]
\end{definition}

Suppose that $(p;a,b)$ and $(q;c,d)$ are $\Theta$-related.
Then two $Y$-exchanges $Y^{p;ab}$ and $Y^{q;cd}$ are the images of $Y$-exchanges $Y^{v;1,2}$ and $Y^{w;1,2}$ under the induced map 
\[
\iota_*:\bfB_n(\Theta_3)\to \bfB_n(\graf)\quad\text{ or }\quad
\iota_*:\bfB_n(\vartheta_3)\to \bfB_n(\graf).
\]
Moreover, the induced map $\iota_*$ sends one-particle moves in $\bfB_n(\Theta_3)$ or $\bfB_n(\vartheta_3)$ to words of one-particle moves in $\bfB_n(\graf)$. It also maps a circular move in $\bfB_n(\Theta_3)$ to a word of one-particle moves and a circular move in $\bfB_n(\graf)$ as seen in the proof of Proposition~\ref{proposition:generators}.

\begin{lemma}\label{lemma:related}
If $(p;a,b)$ and $(q;c,d)$ are $\Theta$-related, then $Y$-exchanges $Y^{p;ab}$ and $Y^{q;cd}$ are identified by taking the quotient by one-particle moves.
\end{lemma}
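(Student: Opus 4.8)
The plan is to reduce to a single step of $\Theta$-relatedness and then transport the two model relations \eqref{theta-rel} and \eqref{vartheta-rel} into $\bfB_n(\graf)$ by means of the induced map $\iota_*$. Since $\sim_\Theta$ is by definition the equivalence relation \emph{generated} by $\Theta$-relatedness, and since being identified in a quotient group is itself transitive, it suffices to treat the case in which $(p;a,b)$ and $(q;c,d)$ are directly $\Theta$-related. By hypothesis this provides an embedding $\iota$ of either $(\Theta_3,T_{\Theta_3},*)$ or $(\vartheta_3,T_{\vartheta_3},*)$ into $(\graf,T,*)$, and we already know that $\iota_*$ carries $Y^{v;1,2}\mapsto Y^{p;ab}$ and $Y^{w;1,2}\mapsto Y^{q;cd}$, one-particle moves to words of one-particle moves, and the circular move to a word of one-particle moves times a circular move. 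Write $\pi$ for the quotient homomorphism of $\bfB_n(\graf)$ by the normal subgroup generated by all one-particle moves.

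I would dispose of the $\vartheta_3$ case first, as it is the cleaner one: the model relation \eqref{vartheta-rel} involves no circular move. Applying $\pi\circ\iota_*$ to \eqref{vartheta-rel}, every image of $\gamma$ or $\gamma'$ is a word of one-particle moves and is therefore killed by $\pi$. What survives on the right-hand side is $\pi(Y^{p;ab})$ together with the images of $Y^{z;2,1}$ and $Y^{z;1,2}$; since $(Y^{ab})^{-1}=Y^{ba}$ these last two are mutual inverses and cancel. Hence $\pi(Y^{q;cd})=\pi(Y^{p;ab})$, which is exactly the asserted identification.

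The $\Theta_3$ case runs identically except for one point that I expect to be the only real obstacle, namely the presence of the circular move in \eqref{theta-rel}: the image $\iota_*(\delta)$ is \emph{not} killed by $\pi$, but equals $\pi(\delta_\graf)$ after its one-particle part is annihilated. The resolution is to note that in \eqref{theta-rel} the circular move occurs solely in the conjugating position $\delta\gamma\delta^{-1}$. Applying $\pi\circ\iota_*$, the inner factor $\iota_*(\gamma)$ is a word of one-particle moves and dies, so the whole left-hand side collapses to $\pi(\delta_\graf)\cdot 1\cdot\pi(\delta_\graf)^{-1}=1$ irrespective of the surviving circular factor. The right-hand side reduces to $\pi(Y^{q;dc})\,\pi(Y^{p;ab})$, and using $Y^{q;dc}=(Y^{q;cd})^{-1}$ we again obtain $\pi(Y^{q;cd})=\pi(Y^{p;ab})$. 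Assembling the two cases and chaining the identifications along a sequence witnessing $(p;a,b)\sim_\Theta(q;c,d)$ completes the argument.
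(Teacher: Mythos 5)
Your proof is correct and follows essentially the same route as the paper's: the paper likewise takes the embedding of $\Theta_3$ or $\vartheta_3$ witnessing the relation, invokes the identifications coming from relations \eqref{theta-rel} and \eqref{vartheta-rel} in Examples~\ref{example:theta} and \ref{example:vartheta}, and pushes them forward via $\iota_*$ using the fact that one-particle moves map to words of one-particle moves. The only (immaterial) difference is organizational: the paper performs the quotient computation inside $\bfB_n(\Theta_3)$ or $\bfB_n(\vartheta_3)$ and then transports the resulting identification, whereas you transport the relations into $\bfB_n(\graf)$ first and compute in the quotient there, which has the minor virtue of making explicit why the surviving circular factor in the $\Theta_3$ case is harmless.
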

\begin{proof}
Let $\iota$ be an embedding of $\Theta_3$ or $\vartheta_3$ which makes $(p;a,b)$ and $(q;c,d)$ $\Theta$-related.
As seen in Examples~\ref{example:theta} and \ref{example:vartheta}, two $Y$-exchanges $Y^{v;1,2}$ and $Y^{w;1,2}$ will be identified in the quotient by one-particle moves.
Since both $Y^{p;ab}$ and $Y^{q;cd}$ are images of $Y^{v;1,2}$ and $Y^{w;1,2}$ under $\iota_*$ and $\iota_*$ maps one-particle moves on $\Theta_3$ or  $\vartheta_3$ to words of one-particle moves on $\Gamma$, we are done.
\end{proof}

\begin{proposition}\label{proposition:all related}
Suppose that $\graf$ is $3$-connected.
Then each pair of triples $(p;a,b)$ and $(q;c,d)$ are $\Theta$-related.
\end{proposition}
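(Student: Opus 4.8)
The plan is to prove the (formally stronger, and genuinely needed) statement that \emph{all} triples lie in a single $\sim_\Theta$-class; by Lemma~\ref{lemma:related} this is what forces every $Y$-exchange to be identified in the quotient by one-particle moves. I would organise the argument around the ``relation graph'' $\mathcal{R}$ whose nodes are the triples $(p;a,b)$ and whose edges record direct $\Theta$-relatedness, so that the task becomes showing $\mathcal{R}$ is connected. The engine of the proof is a construction that produces, from the planar $3$-connectivity, a supply of $\Theta_3$- and $\vartheta_3$-subgraphs \emph{without} having to solve any linkage problem.

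The key construction uses an edge together with its two incident faces. Fix an edge $e=pq$ whose endpoints are essential vertices (passing to consecutive essential vertices of $\graf_0$ where necessary). Because $\graf$ is planar and $3$-connected, every face is bounded by a cycle and the two faces $F,F'$ incident to $e$ meet exactly along $e$, a standard consequence of $3$-connectivity (a larger intersection would produce a $2$-separation). Hence $e$, the path $\partial F\setminus e$, and the path $\partial F'\setminus e$ are three \emph{internally disjoint} $p$--$q$ paths, leaving $p$ and $q$ along three distinct edges at each end. Selecting any two of these three arcs realises an order-preserving embedding of $(\Theta_3,T_{\Theta_3},*)$ when $*$ already lies on the resulting theta, and otherwise of $(\vartheta_3,T_{\vartheta_3},*)$, where the arm is taken to be the final segment of any path from the theta to $*$ after its last intersection with the theta. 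By Examples~\ref{example:theta} and \ref{example:vartheta} this yields explicit $\Theta$-relations such as $(p;f_p,g_p)\sim_\Theta(q;f_q,g_q)$ (selecting the two face-arcs, where $f,g$ are the edges flanking $e$) and $(p;e,f_p)\sim_\Theta(q;e,f_q)$ (selecting $e$ and one face-arc). The point is that the three arcs are handed to us by the two faces, so the incident edges at \emph{both} branch vertices are pinned down for free; no $2$-linkage is required.

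I would then chain these relations to collapse $\mathcal{R}$ to one class. Letting $e$ range over the edges at a fixed vertex $p$, the flanking-pair selection produces the non-consecutive pairs $\{e_{i-1},e_{i+1}\}$ in the rotation at $p$, while the mixed selections produce the two corner pairs $\{e_{i-1},e_i\}$ and $\{e_i,e_{i+1}\}$; as $i$ varies these overlap, so every pair at $p$ (consecutive or not) is reached. Simultaneously each such relation ties a pair at $p$ to a pair at the neighbour $q$. Since $\graf$ is connected and its flag (vertex--edge--face) incidence structure is connected, iterating over all edges and faces links the per-vertex families across the whole graph, giving a single $\sim_\Theta$-class; the root is carried along coherently throughout by the $\vartheta_3$-arm and the freedom in choosing which two arcs are selected.

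The main obstacle is precisely this last chaining/bookkeeping step: showing that the various corner pairs and flanking pairs at every vertex all merge into \emph{one} global class, rather than into several classes indexed by faces. What makes it tractable is that planarity plus $3$-connectivity guarantees the three internally disjoint arcs around each edge, so one never needs to prescribe all four incident edges of a theta independently; that stronger demand would be a $2$-linkage condition which $3$-connectivity does not supply and which, for a ``crossing'' choice of terminals on a common face, genuinely fails. The residual care is to verify at each step that the model $(\Theta_3,T_{\Theta_3},*)$ or $(\vartheta_3,T_{\vartheta_3},*)$ really embeds as a root- and order-preserving subgraph, which is exactly where the arm construction and the choice of the third (unselected) arc are used.
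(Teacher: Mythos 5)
Your construction produces too few relations, and the gap is fatal rather than merely technical. From an edge $e=e_i^p$ and its two incident faces you only ever obtain triples at $p$ whose two edges lie at cyclic distance at most $2$ in the rotation at $p$: the corner pairs $\{e_{i-1}^p,e_i^p\}$, $\{e_i^p,e_{i+1}^p\}$ and the flanking pair $\{e_{i-1}^p,e_{i+1}^p\}$. Consequently, at an essential vertex of valence at least $6$ a triple such as $(p;1,4)$ never occurs in any relation your construction generates, so it cannot be shown $\Theta$-related to anything; the claim that ``every pair at $p$ (consecutive or not) is reached'' fails there. This is exactly the point where the paper cannot avoid a global argument: to relate an arbitrary pair $(p;a,b)$ one needs theta subgraphs whose arcs travel far from $p$. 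The paper manufactures these in Lemma~\ref{lemma:general case} from the tree paths $\gamma_a,\gamma_b$ running from $p_a,p_b$ to $\delta$, the regions $L$ and $R$, and an induction on the length of $\gamma$, and in Lemma~\ref{lemma:special case} from the connectivity of $\graf\setminus\delta$ (Theorem~\ref{nonseparating}), which lets one arc of the theta run through the complement of the distinguished face and thereby couples an \emph{arbitrary} edge $e_a^p$ with the $\delta$-edge $e_{k-1}^p$. Your deliberate avoidance of any ``linkage'' construction discards precisely the tool that handles pairs which are not face-adjacent.

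Moreover, even for the triples your construction does reach, the relations it produces provably do not chain into a single class, so the ``main obstacle'' you name is not unfinished bookkeeping but an actual failure. Your corner-type relation always relates a corner of a face $F$ at $p$ to a corner of the \emph{same} face $F$ at $q$, and your flank-type relation always relates a flanking (distance-$2$) pair to a flanking pair. Hence the equivalence generated has at least one class per face, plus separate classes of distance-$2$ pairs, and no relation ever crosses between them. The subdivided octahedron is a concrete counterexample: every essential vertex has valence $4$, so corner pairs and opposite pairs are disjoint families; your relations confine the corners of each of the eight faces to their own class and the opposite pairs to yet further classes, whereas the proposition demands one class. Connectivity of the vertex--edge--face flag structure does not rescue this, because two corners of different faces at the same vertex are adjacent flags yet are never the two endpoints of any relation you construct. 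The paper escapes this trap by anchoring everything to the single facial cycle $\delta$ containing the root: Lemma~\ref{lemma:special case} merges all boundary triples $(p;a,k-1)$ into one class, and Lemma~\ref{lemma:general case} pulls every remaining triple down into that class.
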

We will prove this proposition later.

\begin{proof}[Proof of Theorem~\ref{main theorem}]
Let us fix a rooted spanning tree $(T,*)$ as before.
We consider the quotient group
\[
\bar\bfB_n(\graf)=\bfB_n(\graf)/\langle\gamma_0,\hdots, \gamma_{b-1}\rangle
\]
by all one-particle moves.

The fact that all triples $(p;a,b)$ are $\Theta$-related (Proposition~\ref{proposition:all related}) implies that all $Y$-exchanges are identified in the quotient group $\bar\bfB_n(\graf)$ by Lemma~\ref{lemma:related}. Finally, Proposition~\ref{braid-2connected} implies that the quotient group is indeed the classical braid group $\bfB_n(\RR^2)$.
\end{proof}
The whole technical difficulty of the proof lies in the proof of Proposition~\ref{proposition:all related}. This is the subject of the next subsection. 

\subsection{Proof of Proposition~\ref{proposition:all related}}
From now on, we assume that $\graf$ is $3$-connected. We start by invoking the following graph-theoretic theorem that can be found in \cite{BondyMurty}. 
\begin{theorem}\label{nonseparating}
A cycle in a simple $3$-connected planar graph is a facial cycle if and only if it is nonseparating.
\end{theorem}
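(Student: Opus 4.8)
The plan is to prove both implications using the theory of bridges of a cycle, following the treatment in \cite{BondyMurty}. Fix the (essentially unique, by Whitney's theorem) planar embedding of the simple $3$-connected planar graph $\graf$, so that ``facial cycle'' is unambiguous. Recall that a \emph{bridge} of a cycle $C$ is either a chord of $C$ or a connected component of $\graf - V(C)$ together with all edges joining it to $C$; the vertices of a bridge lying on $C$ are its \emph{attachments}. I read ``$C$ is nonseparating'' in the standard (Tutte) sense that $C$ is chordless and $\graf - V(C)$ is connected, equivalently that $C$ has exactly one bridge. This is the correct hypothesis: for instance the $4$-cycle of $K_5$ with one edge deleted has connected complement yet is not facial, the failure being precisely the presence of a chord. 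The single geometric input I will use is the elementary planarity lemma that two bridges drawn in the \emph{same} one of the two regions cut out by $C$ cannot overlap, i.e. they must \emph{avoid} each other (all attachments of one lie on a single arc of $C$ between two consecutive attachments of the other); otherwise their connecting arcs would be forced to cross.

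For the direction ``nonseparating $\Rightarrow$ facial'' I would argue directly. If $C$ is nonseparating it has a unique bridge $B$, which in the fixed embedding is drawn inside exactly one of the two regions bounded by the Jordan curve $C$. Then the interior of the other region contains no vertex or edge of $\graf$, so that region is a face whose boundary walk is exactly $C$; hence $C$ is facial. The degenerate case $\graf = C$ does not arise, since a simple $3$-connected graph has at least four vertices and minimum degree at least three.

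The substantive direction is ``facial $\Rightarrow$ nonseparating'', and this is where I expect the real work. Suppose $C$ bounds a face $f$. Since the interior of $f$ contains no part of $\graf$, every bridge of $C$ is drawn in the closure of the other region; in particular all bridges lie on the same side of $C$, so by the planarity lemma they pairwise avoid. I then claim $C$ has a single bridge. If not, pick two bridges $B_1, B_2$. Avoidance gives two attachments $x,y$ of $B_1$, consecutive along $C$, such that every attachment of $B_2$ lies on the arc $P$ of $C$ strictly between $x$ and $y$. Deleting $\{x,y\}$ then separates the nonempty set consisting of the internal vertices of $P$ together with the non-attachment vertices of $B_2$ from the remainder of $\graf$ (which is nonempty because $B_1$ has an attachment off $P$ and $|V(C)|\ge 3$). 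This exhibits a $2$-vertex-cut, contradicting $3$-connectivity. Therefore $C$ has exactly one bridge, so $\graf - V(C)$ is connected and $C$ has no chord, i.e. $C$ is nonseparating.

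The main obstacle is the last paragraph: making the passage from ``two avoiding bridges on one side'' to a genuine $2$-separator fully rigorous. One must treat carefully the edge cases where a bridge is a single chord or where the separating arc $P$ has no interior vertices, checking in each case that both the removed set and the remaining set are nonempty so that $\{x,y\}$ is an honest cut; the simplicity and $3$-connectivity hypotheses (yielding $|V(C)|\ge 3$, minimum degree $\ge 3$, and the absence of $2$-cuts) are exactly what close these cases. I would organize this by invoking the standard bridge-structure lemmas of \cite{BondyMurty} rather than re-deriving the overlap/avoidance dichotomy from scratch.
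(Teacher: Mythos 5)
Your comparison target barely exists: the paper offers no proof of Theorem~\ref{nonseparating} at all. The statement is imported wholesale from \cite{BondyMurty} (it is Tutte's characterisation of face boundaries of simple $3$-connected plane graphs), and the sentence following it in the paper merely recalls the terminology. So the only meaningful question is whether your blind argument is itself sound, and in essence it is: it is the standard proof from the cited source, resting on bridge decompositions and on the lemma that two bridges embedded in the same region of $C$ must avoid each other. Your insistence on reading ``nonseparating'' in the Tutte sense (chordless and $\Gamma - V(C)$ connected) is well placed, and your $K_5$-minus-an-edge example correctly shows that connectedness of the vertex complement alone would make the theorem false. In fairness to the paper, though, its stated definition is also correct if one reads $\Gamma\setminus C$ topologically, as befits a paper treating graphs as one-dimensional CW-complexes: a chord then leaves an isolated open arc in the complement, so on $3$-connected graphs the topological reading and the Tutte reading coincide.

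Two local slips in your hard direction deserve repair, both by the avoidance lemma you already invoke. First, avoidance places the attachments of $B_2$ on the \emph{closed} arc $P$ between the consecutive attachments $x,y$ of $B_1$, not strictly between them ($x$ or $y$ may itself be an attachment of $B_2$). Second, and more substantively, the set you claim is cut off by $\{x,y\}$ --- internal vertices of $P$ together with the interior of $B_2$ --- need not be a union of components of $\Gamma-\{x,y\}$: a third bridge attached at an internal vertex of $P$ has its interior outside your set but adjacent to it. The fix is forced by the same lemma: any bridge with an attachment in the interior of $P$ must, by avoidance with $B_1$, have \emph{all} its attachments in $P$, so the correct separated part is the interior of $P$ together with the interiors of all bridges attached entirely within $P$ (this contains what you wanted), against the interior of the complementary arc and everything attached there. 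Nonemptiness of the two sides is then exactly the case analysis you flagged: a chord has non-adjacent endpoints by simplicity, so the relevant arc acquires an interior vertex, while a non-chord bridge of a cycle in a $3$-connected graph has nonempty interior and at least three attachments. With these adjustments your argument is complete, and it is the proof that the paper's citation points to.
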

Recall that a separating cycle C is a cycle for which $\Gamma\setminus C$ is disconnected and a facial cycle is the boundary of a face. This in particular ensures the connectivity of $\graf\setminus \delta$ for $\delta=\partial D_0$.

Let $v\in\delta$ be an essential vertex of valency $k$. By the labelling convention, the edge $e_{k-1}^v$ is on $\delta$.

\begin{lemma}\label{lemma:special case}
Let $p\neq q \in \delta$ be essential vertices of valency $k,\ell$.
For each $1\le a<k$ and $1\le b<\ell$, two $Y$-exchange $(p;a,k-1)$ and $(q;b,\ell-1)$ are $\Theta$-related.
\end{lemma}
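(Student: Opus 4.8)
Lemma~\ref{lemma:special case} concerns two essential vertices $p,q$ both lying on the chosen facial cycle $\delta=\partial D_0$, with the special edges $e_{k-1}^p$ and $e_{\ell-1}^q$ lying on $\delta$ (by the labelling convention). The claim is that $(p;a,k-1)$ and $(q;b,\ell-1)$ are always $\Theta$-related. Let me recall what $\Theta$-relatedness means: we need a topological embedding of either $\Theta_3$ or $\vartheta_3$ into $\graf$ (respecting the spanning tree structure and the root $*$) that sends the two distinguished vertices $v,w$ of the model graph to $p,q$, sends edges $e_1^v,e_2^v$ to $e_a^p, e_{k-1}^p$, and sends $e_1^w, e_2^w$ to $e_b^q, e_{\ell-1}^q$.

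**The geometric idea.** Since $p,q\in\delta$ and the edges $e_{k-1}^p, e_{\ell-1}^q$ lie on $\delta$, the cycle $\delta$ itself already supplies a path through $p$ that exits along $e_{k-1}^p$ and a path through $q$ exiting along $e_{\ell-1}^q$. What I need to build is an embedded $\Theta_3$ (or $\vartheta_3$), i.e. essentially two internally-disjoint paths between $p$ and $q$ plus attachment data for the remaining edges $e_a^p$ and $e_b^q$. The natural first attempt is: use the cycle $\delta$ to produce the two arcs of a theta graph between $p$ and $q$. The cycle $\delta$ passes through both $p$ and $q$ (they lie on it), so $\delta$ decomposes into two arcs $\delta_1,\delta_2$ joining $p$ to $q$. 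These two arcs are internally disjoint and together form a cycle — this is exactly the outer cycle of a $\Theta_3$. The edges $e_{k-1}^p$ and $e_{\ell-1}^q$ are the edges of $\delta$ incident to $p$ and $q$; I need to arrange that one of the two arcs realizes the role of the central edge $[v,w]$ of the $\Theta_3$ model and the other realizes the outer loop, with $e_a^p, e_b^q$ appearing as the ``free'' branches $e_1^v,e_1^w$. The key point is that $e_a^p$ and $e_b^q$ (for $a\le k-1$, $b\le\ell-1$) are edges distinct from the arc chosen as the central edge, so they can serve as the pendant branches.

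**Carrying out the construction.** Concretely I would proceed as follows. Write $\delta$ as the union of two arcs $\alpha,\beta$ from $p$ to $q$, chosen so that the edge of $\alpha$ incident to $p$ is $e_{k-1}^p$ and the edge of $\beta$ incident to $q$ is $e_{\ell-1}^q$ (this is possible after possibly relabelling the two arcs, since $e_{k-1}^p$ and $e_{\ell-1}^q$ each lie on $\delta$, hence on one of the arcs). One of these arcs I designate as the central edge $[v,w]$ of the model; the other provides the outer loop together with the deleted edge $e_0$ and the segment of $\delta$ back to $*$. The branch $e_a^p$ (which by $a<k$ is an edge at $p$ other than the one I used for the arc, hence genuinely free) maps to $e_1^v$, and $e_b^q$ maps to $e_1^w$. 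The root $*$ lies on $\delta$ by construction, so the pendant stick of $\vartheta_3$ (or the base structure of $\Theta_3$) can be routed along $\delta$ to $*$. If $a=k-1$ or $b=\ell-1$ one must be slightly careful that the ``free'' branch is not the same edge already used on $\delta$; here the $\vartheta_3$ model gives the extra flexibility, since its extra arm lets $p$ (or $q$) use a second incident edge without forcing the theta-cycle through it. The main obstacle — the step I expect to require the most care — is the bookkeeping to ensure the embedding is tree-compatible and order-preserving, i.e. that the image arcs respect the spanning tree $(T,*)$ and the vertex labelling used in the definition of $\Theta$-relatedness, and that the orientations of the deleted edges match. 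The degenerate cases where $a$ or $b$ equals the index of the arc-edge are exactly where one must switch between the $\Theta_3$ and $\vartheta_3$ models, and verifying that at least one of the two always embeds is the crux. The underlying reason all of this works is Theorem~\ref{nonseparating}: because $\graf$ is $3$-connected and planar, $\delta$ is a nonseparating facial cycle, so $\graf\setminus\delta$ is connected and any needed connecting paths (for routing sticks and realizing the $\vartheta_3$ arm) exist inside $\graf$ without re-entering $\delta$.
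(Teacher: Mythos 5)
There is a genuine gap, and it sits at the heart of the lemma. A theta graph consists of \emph{three} internally disjoint arcs joining its two trivalent vertices, but your construction only ever produces two: the two arcs into which $p$ and $q$ divide $\delta$. Those two arcs together are just the cycle $\delta$ itself, not an embedded $\Theta_3$. You try to get around this by declaring $e_a^p$ and $e_b^q$ to be ``free'' pendant branches mapped to $e_1^v$ and $e_1^w$, but in the model graph $\Theta_3$ of Figure~\ref{fig:thetas} the edges $e_1^v$ and $e_1^w$ are not pendant --- $\Theta_3$ has no pendant edges at all. They are the initial edges of the third arc of the theta (they are joined to one another through the top deleted edge of the model). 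So once you map $e_a^p\mapsto e_1^v$ and $e_b^q\mapsto e_1^w$, you are obliged to exhibit an embedded path in $\graf$ continuing from the far endpoint of $e_a^p$ to the far endpoint of $e_b^q$, internally disjoint from $\delta$; what you have actually described is $\delta$ with two dangling edges, which is an embedding of neither $\Theta_3$ nor $\vartheta_3$. (The $\vartheta_3$ model does not rescue your ``degenerate cases'' either: its extra arm is the stick running to the root $*$, and it is needed in Lemma~\ref{lemma:general case} for vertices off $\delta$, not here.)

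The missing step is exactly where $3$-connectivity enters, and it is the entire content of the paper's proof: by Theorem~\ref{nonseparating} the facial cycle $\delta$ is nonseparating, so $\graf\setminus\delta$ is connected, and hence there is a path $\gamma$ from $p$ to $q$ whose first and last edges are $e_a^p$ and $e_b^q$ and whose interior avoids $\delta$. This $\gamma$ is the third arc: in the embedded $\Theta_3$, the central edge $[v,w]$ goes to the $\delta$-arc from $p$ to $q$ beginning with $e_{k-1}^p$, the other $\delta$-arc (through $*$, containing $e_0$ and ending with $e_{\ell-1}^q$) becomes the lower half of the outer circle, and $\gamma$ becomes the upper half, carrying $e_a^p$ and $e_b^q$. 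You do cite Theorem~\ref{nonseparating} at the very end, but only for ``routing sticks'' and the $\vartheta_3$ arm, not for constructing this third arc --- so the one place where the hypothesis of $3$-connectedness is genuinely used is absent from your argument, and without it the construction does not produce the required embedding.
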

\begin{proof}
Since $\graf$ is $3$-connected, the complement of $\delta$ is connected. Therefore there is a path $\gamma$ joining $p$ and $q$ whose starting and ending edges are $e_a^p$ and $e_b^q$. Hence we have the following embedding of $(\Theta_3,T_{\Theta_3},*)$
\[
(\Theta_3,T_{\Theta_3},*)\subset
\begin{tikzpicture}[baseline=-.5ex,yshift=-1cm]
\draw(-0.5,0) -- (-2,0) -- (-2,1) -- (-1,1) -- (0,1) node[midway,above] {$e_{k-1}^p$}-- (1,1) -- (2,1) node[midway,above] {$e_{\ell-1}^q$}-- (2,0) -- (0.5,0) (-1,1)--(-1,2) node[midway,left] {$e_a^p$} (1,1)--(1,2) node[midway,left] {$e_b^q$};
\draw[dashed] (-0.5,0)--node[below] {$e_0$}(0.5,0) (-1,2)--(1,2);
\draw(0,0.5) node {$D_0$};
\fill(-0.5,0) circle(2pt) node[below] {$*$} (0.5,0) circle (2pt);
\fill(-1,1) circle (2pt) node[below] {$p$} (1,1) circle (2pt) node[below] {$q$} (-1,2) circle(2pt) (1,2) circle(2pt);
\end{tikzpicture}
\]
and so two triples $(p;a,k-1)$ and $(q;b,\ell-1)$ are $\Theta$-related.
\end{proof}

Let $(p;a,b)$ be a triple for an essential vertex $p$ and $1\le a<b<k=\val(p)$.
We denote two endpoints of $e_a^p$ and $e_b^p$ other than $p$ by $p_a$ and $p_b$, respectively.
Let $\gamma_a$ and $\gamma_b$ be the paths in $T$ from $p_a$ and $p_b$ to the loop $\delta$.
We denote the intersection of $\gamma_a$ and $\gamma_b$ by $\gamma$.
\begin{align*}
\gamma&=\gamma_a\cap\gamma_b,&
\gamma_a&=\gamma\cup e_a^p,&
\gamma_b&=\gamma\cup e_b^p.
\end{align*}
We orient $\gamma_a$ and $\gamma_b$ from $\delta$ to $p_a$ and $p_b$, respectively, and so we may think two sides of $\gamma_a$ and $\gamma_b$. Next, we define two subsets $L$ and $R$ of $\RR^2$ by the following unions of faces $D\neq D_0$.
\begin{enumerate}
\item Regions $D$ in $L$ intersects $\gamma_a\setminus\{p_a\}$ and is on the left side of $\gamma_a$.
\item Regions $D$ in $R$ intersects $\gamma_b\setminus\{p_b\}$ and is on the right side of $\gamma_b$.
\end{enumerate}

\begin{align*}
\begin{tikzpicture}[baseline=-.5ex,yshift=-1cm]
\fill[blue, opacity=0.3] (-1,1)--(0,1)--(0,2)--(-1,3)--cycle;
\fill[gray, opacity=0.3] (1,1)--(0,1)--(0,2)--(1,3)--cycle;
\draw(0.5,1.5) node {$R$} (-0.5,1.5) node {$L$};
\draw(-0.5,0) -- (-1.5,0) -- (-1.5,1) -- (1.5,1) -- (1.5,0) -- (0.5,0);
\draw[dashed] (-0.5,0)--node[below] {$e_0$}(0.5,0);
\draw(0,0.5) node {$D_0$};
\begin{scope}[decoration={
    markings,
    mark=at position 0.5 with {\arrow{latex}}}
    ] 
\draw[postaction={decorate}] (0,1)-- (0,2) node[midway, right] {$\gamma$};
\draw[postaction={decorate}] (0,2)-- (-1,3) node[midway,above] {$e_a^p$};
\draw[postaction={decorate}] (0,2)-- (1,3) node[midway,above] {$e_b^p$};
\end{scope}
\fill(-0.5,0) circle(2pt) node[below] {$*$} (0.5,0) circle (2pt);
\fill (0,1) circle (2pt) (0,2) circle (2pt) node[above] {$p$} (-1,3) circle (2pt) node[above] {$p_a$} (1,3) circle (2pt) node[above] {$p_b$};
\end{tikzpicture}
\end{align*}
We claim that regions $L$ and $R$ are interior-disjoint. To see this, note that if is a face $D\subset L\cap R$, then $D$ cannot be homeomorphic to a disk. This in turn contradicts the fact that $\graf$ is $3$-connected. See Remark~\ref{remark:regions}.

\begin{lemma}\label{lemma:general case}
The triple $(p;a,b)$ is $\Theta$-related with $(q;c,\ell-1)$ for some $q\in\delta$ and $1\le c<\ell=\val(q)$.
\end{lemma}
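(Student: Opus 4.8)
The plan is to find, for the given triple $(p;a,b)$, an embedded copy of $\Theta_3$ or $\vartheta_3$ inside $(\graf,T,*)$ that exhibits a $\Theta$-relation $(p;a,b)\sim_\Theta(q;c,\ell-1)$ with $q\in\delta$; once this is achieved, Lemma~\ref{lemma:special case} identifies all such boundary triples with one another, and Proposition~\ref{proposition:all related} follows by transitivity of $\sim_\Theta$. I would realize the three arcs of the gadget as follows. The downward tree-edge of $p$ along the stem $\gamma$, prolonged by $\gamma$ down to its endpoint on $\delta$ and then continued along $\delta$ through $*$, plays the role of the arc passing through the base point; this is precisely the arc pairing the downward edge of $p$ with a boundary edge $e_{\ell-1}^q$. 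The edge $e_b^p$, prolonged by a path inside the region $R$, plays the role of the middle path, and the edge $e_a^p$, prolonged by a path inside the region $L$, plays the role of the remaining (cycle) arc.

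First I would fix the vertex $q$. Because $\graf$ is $3$-connected, $\delta$ is nonseparating (Theorem~\ref{nonseparating}) and every bounded complementary region is a disk (Remark~\ref{remark:regions}); in particular $R$ is a disk whose boundary is a simple cycle meeting $\delta$, and I take $q$ to be a vertex of $\delta\cap\partial R$ together with its boundary edge $e_{\ell-1}^q$. The interior-disjointness of $L$ and $R$ established above is exactly what guarantees that the arc routed through $R$ and the arc routed through $L$ share no interior point, so that their union with the common stem $\gamma$ is an \emph{embedded} theta rather than a merely immersed union of three paths. When $p\in\delta$ the gadget is literally $\Theta_3$; in the generic case $p\notin\delta$ the correct gadget is $\vartheta_3$, whose arm is the initial sub-path of $\gamma$ from the branch vertex down to the bivalent base point $*$. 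The clockwise ribbon ordering induced by $T$ orients $\gamma_a$ to the left of the stem and $\gamma_b$ to its right, which is precisely the rotation pattern of $e_1^v,e_2^v$ (and of $e_1^w,e_2^w$) in the definitions of $\Theta_3$ and $\vartheta_3$, so the label assignments $\iota(e_1^v)=e_a^p$, $\iota(e_2^v)=e_b^p$, $\iota(e_1^w)=e_c^q$, $\iota(e_2^w)=e_{\ell-1}^q$ are consistent.

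The main obstacle I expect is verifying that the constructed subgraph is genuinely an embedded $\Theta_3$ or $\vartheta_3$ with the prescribed incidences at both $p$ and $q$, and in particular that the $R$-arc can be pushed all the way to a boundary edge $e_{\ell-1}^q$ and terminates at the same vertex $q$ as the $L$-arc. Here $3$-connectivity is indispensable: it prevents interior cut vertices from trapping the arcs inside $R$ and forces the relevant faces to be disks, so that the boundary of $R$ behaves like a single simple cycle reaching $\delta$. I anticipate that the cleanest implementation is an induction on the number of faces contained in $R$ (equivalently, on the graph distance from $e_b^p$ to $\delta$): when $R$ is a single face incident to $\delta$ the gadget is immediate, and otherwise one peels off the face of $R$ incident to $e_b^p$, uses it to replace $(p;a,b)$ by a $\Theta$-related triple at an essential vertex one face closer to $\delta$, and then invokes the inductive hypothesis. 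A symmetric argument applies if one prefers to push through $L$ rather than $R$.
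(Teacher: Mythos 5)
Your overall strategy --- assembling an embedded $\Theta_3$ or $\vartheta_3$ from the stem $\gamma$, the regions $L$ and $R$, and the portion of $\delta$ through $*$, with an induction held in reserve for degenerate configurations --- is the same as the paper's, but two of your key steps fail. The first failure is in the arc-routing itself. You require all three arcs to terminate at a common vertex $q$: the arc $e_a^p$ prolonged \emph{inside} $L$, the arc $e_b^p$ prolonged \emph{inside} $R$, and the stem prolonged along $\delta$ through $*$. Since the edges lying in $L$ (resp.\ $R$) are edges on boundaries of faces of $L$ (resp.\ $R$), such a common endpoint must lie in $L\cap R$. But in the case $L\cap R=\gamma$ --- the first and generic case of the paper's proof --- the only vertex of $\gamma$ on $\delta$ is the endpoint of the stem, and that vertex cannot serve as $q$: the arc through the base point would then degenerate to the stem alone (which misses $*$) or to the stem together with the whole cycle $\delta$ (which is not an arc). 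Your routing tacitly assumes that $L$ contains the unbounded face; that is exactly one of the paper's three sub-cases, in which $q$ can indeed be reached from $R$ along $\delta$ while the $e_a^p$-arc travels around through the unbounded region. When $L$ and $R$ are both bounded, the correct gadget is asymmetric: $q$ sits at the corner of $L$ on $\delta$, the middle arc is the stem together with the single $\delta$-edge $e_{\ell-1}^q$, and it is the $e_b^p$-arc --- not the stem --- that passes through $*$, running around the \emph{outside} of $R$ and along $\delta$. Your proposal contains no case distinction that could produce this configuration, so it breaks down precisely in the main case.

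The second failure is your claim that interior-disjointness of $L$ and $R$ ``is exactly what guarantees'' that the union of the arcs is embedded. That is false, and it is where the real work of the lemma lies: interior-disjoint unions of closed faces can perfectly well share graph vertices or edges away from $\gamma$, and the paper's proof devotes its entire second half to this situation, where $L$ and $R$ meet at an essential vertex $r$ outside the stem. There the gadget is produced by a genuinely different mechanism: $3$-connectivity of $\graf\setminus\{p,r\}$ yields a path from the region enclosed by $L\cup R$ to $D_0$, planarity forces this path to meet the stem at an essential vertex $q$ strictly closer to $\delta$, and this gives an embedded $\vartheta_3$ realizing $(p;a,b)\sim_\Theta(q;c,d)$, after which induction on the length of the stem applies. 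Your substitute induction --- peeling off the face of $R$ adjacent to $e_b^p$ --- does not produce any $\Theta$-relation: that face is bounded by $e_b^p$ and the stem but not by $e_a^p$, and it need not contain $*$, so its boundary cycle cannot be completed to a gadget with the prescribed incidences $e_a^p,e_b^p$ at $p$ and an arc through the base point; moreover your base case (``$R$ a single face incident to $\delta$'') suffers from the same unreachability of $q$ from $L$ described above. Note also that $R$ need not be a disk with simple-cycle boundary, as you assert: it is a union of faces which may pinch, and this possibility is again the very difficulty the paper's second case is designed to handle.
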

\begin{proof}
We first suppose that $L\cap R=\gamma$. Then we claim that the triple $(p;a,b)$ is $\Theta$-related with $(q;c,\ell-1)$ for some $q\in\delta$ and $1\le c<\ell=\val(q)$.

According to whether $L$ or $R$ contains the unbounded region $D_b$, we have one of the following three cases.
\[
\begin{tikzpicture}[baseline=-.5ex,yshift=-1cm]
\fill[blue, opacity=0.3] (-1,1)--(0,1)--(0,2)--(-1,3)--cycle;
\fill[gray, opacity=0.3] (1,1)--(0,1)--(0,2)--(1,3)--cycle;
\draw(0.5,1.5) node {$R$} (-0.5,1.5) node {$L$};
\draw[thick](-0.5,0) -- (-1.5,0) -- (-1.5,1) -- (-1,1) -- (0,1) node[midway,below] {$e_{\ell-1}^q$};
\draw[dashed,thick] (0.5,0)--node[below] {$e_0$}(-0.5,0);
\draw(0,0.5) node {$D_0$};
\begin{scope}[thick, decoration={
    markings,
    mark=at position 0.5 with {\arrow{latex}}}
    ] 
\draw[postaction={decorate}] (0,1)-- (0,2);
\draw[postaction={decorate}] (0,2) node[above] {$p$}-- (-1,3) node[midway,above] {$e_a^p$};
\draw[postaction={decorate}] (0,2)-- (1,3) node[midway,above] {$e_b^p$};
\end{scope}
\draw[thick](-1,1) node[below] {$q$}--(-1,2) node[midway,left] {$e_c^q$} (1,2)--(1,1)--(1.5,1)--(1.5,0)--(0.5,0);
\draw[thick,dashed] (-1,2)--(-1,3) (1,2)--(1,3);
\fill(-1,1) circle (2pt) (1,1) circle (2pt) (-1,2) circle (2pt) (1,2) circle (2pt);
\fill(-0.5,0) circle(2pt) node[below] {$*$} (0.5,0) circle (2pt);
\fill (0,1) circle (2pt) (0,2) circle (2pt) (-1,3) circle (2pt) (1,3) circle (2pt);
\end{tikzpicture}\quad
\begin{tikzpicture}[baseline=-.5ex,yshift=-1cm]
\fill[blue, opacity=0.3] (-1.5,-0.5)rectangle (2.5,4);
\fill[white] (0,1)--(0,2)--(-1,3) {[rounded corners] --(-1,3.5)--(2,3.5)}--(2,0)--(-1,0)--(-1,1)--cycle;
\fill[gray, opacity=0.3] (1,1)--(0,1)--(0,2)--(1,3)--cycle;
\draw(0.5,1.5) node {$R$} (-0.5,1.5) node {$L$};
\draw[thick](0,0) -- (-1,0) -- (-1,1) -- (0,1);
\draw[dashed,thick] (1,0)--node[below] {$e_0$}(0,0);
\draw(0.5,0.5) node {$D_0$};
\begin{scope}[thick, decoration={
    markings,
    mark=at position 0.5 with {\arrow{latex}}}
    ] 
\draw[postaction={decorate}] (0,1)-- (0,2);
\draw[postaction={decorate}] (0,2) node[above] {$p$}-- (-1,3) node[midway,above] {$e_a^p$};
\draw[postaction={decorate}] (0,2)-- (1,3) node[midway,above] {$e_b^p$};
\end{scope}
\draw[thick] (1,2)--(1,1)--(2,1) node[right] {$q$}--(2,0) node[midway,left] {$e_{\ell-1}^q$}--(1,0) (2,2)--(2,1) node[midway,left] {$e_c^q$};
\draw[thick,dashed,rounded corners] (-1,3) -- (-1,3.5)-- (2,3.5)--(2,2);
\draw[thick,dashed] (1,2)--(1,3);
\fill(2,1) circle (2pt) (1,1) circle (2pt) (1,2) circle (2pt) (2,2) circle (2pt);
\fill(0,0) circle(2pt) node[below] {$*$} (1,0) circle (2pt);
\fill (0,1) circle (2pt) (0,2) circle (2pt) (-1,3) circle (2pt) (1,3) circle (2pt);
\end{tikzpicture}\quad
\begin{tikzpicture}[baseline=-.5ex,yshift=-1cm]
\fill[gray, opacity=0.3] (-2.5,-0.5)rectangle (1.5,4);
\fill[white] (0,1)--(0,2)--(1,3) {[rounded corners] --(1,3.5)--(-2,3.5)}--(-2,0)--(1,0)--(1,1)--cycle;
\fill[blue, opacity=0.3] (-1,1)--(0,1)--(0,2)--(-1,3)--cycle;
\draw(0.5,1.5) node {$R$} (-0.5,1.5) node {$L$};
\draw[thick](-1,0) -- (-2,0) -- (-2,1)--(-1,1) node[midway,below] {$e_{\ell-1}^q$} (0,1) -- (1,1)--(1,0)--(0,0);
\draw[dashed,thick] (0,0)--node[below] {$e_0$}(-1,0);
\draw(-0.5,0.5) node {$D_0$};
\begin{scope}[thick, decoration={
    markings,
    mark=at position 0.5 with {\arrow{latex}}}
    ] 
\draw[postaction={decorate}] (0,1)-- (0,2);
\draw[postaction={decorate}] (0,2) node[above] {$p$}-- (-1,3) node[midway,above] {$e_a^p$};
\draw[postaction={decorate}] (0,2)-- (1,3) node[midway,above] {$e_b^p$};
\end{scope}
\draw[thick] (-1,1)--(-1,2) (-2,1) node[left] {$q$}--(-2,2) node[midway,right] {$e_c^q$};
\draw[thick,dashed,rounded corners] (1,3) -- (1,3.5)-- (-2,3.5)--(-2,2);
\draw[thick,dashed] (-1,2)--(-1,3);
\fill(-2,1) circle (2pt) (-1,1) circle (2pt) (-1,2) circle (2pt) (-2,2) circle (2pt);
\fill(-1,0) circle(2pt) node[below] {$*$} (0,0) circle (2pt);
\fill (0,1) circle (2pt) (0,2) circle (2pt) (-1,3) circle (2pt) (1,3) circle (2pt);
\end{tikzpicture}
\]
It is easy to see that in each of the above cases  $(\graf,T,*)$ contains $(\Theta_3, T_{\Theta_3},*)$, which yields the $\Theta$-relatedness of $(p;a,b)$ and $(q;c,\ell-1)$.

Suppose next that $L$ and $R$ intersect at an essential vertex $r$ outside of $\gamma$. We use the induction on the length $N$ of the path $\gamma$. Since the vertex $*$ faces the unbounded region $D_b$, the only possibility for the shape of $L$ and $R$ looks as follows.
\[
\begin{tikzpicture}[baseline=-.5ex,yshift=-3cm]
\fill[blue,opacity=0.3] (0,1)--(0,2)--(-1,3)--(0,4)--(-1.5,4)--(-1.5,1)--cycle;
\fill[gray,opacity=0.3] (0,1)--(0,2)--(1,3)--(0,4)--(1.5,4)--(1.5,1)--cycle;
\draw(-0.5,1.5) node {$L$} (0.5,1.5) node {$R$};
\draw (-0.5,0)--(-1.5,0)--(-1.5,1)--(1.5,1)--(1.5,0)--(0.5,0);
\draw[dashed] (0,4)--(-1.5,4)--(-1.5,2) (0,4)--(1.5,4)--(1.5,2) (0.5,0)--(-0.5,0) node[midway,below] {$e_0$};
\draw (-1.5,1)--(-1.5,2) (1.5,1)--(1.5,2) (0,1)--(0,2)--(-1,3) node[midway,above] {$e_a^p$} (0,2)--(1,3) node[midway,above] {$e_b^p$};
\draw[dashed](-1,3)--(0,4)--(1,3);
\draw(0,0.5) node {$D_0$};
\fill (-0.5,0) circle (2pt) node[below] {$*$} (0.5,0) circle (2pt) (-1.5,2) circle (2pt) (1.5,2) circle (2pt) (-1.5,1) circle (2pt) (1.5,1) circle (2pt) (0,1) circle (2pt) (0,2) circle (2pt) node[above] {$p$} (0,4) circle (2pt) node[below] {$r$} (-1,3) circle (2pt) node[above] {$p_a$} (1,3) circle (2pt) node[above] {$p_b$};
\end{tikzpicture}
\]
\begin{remark}
It is also possible that $L$ and $R$ share an edge outside $\gamma$, but in this case the proof proceeds without changes.
\end{remark}

Since $\graf$ is $3$-connected, the complement of $\{p,r\}$ is connected as well. Hence, there exists a path in $\graf_{p,q}:=\graf\setminus\{p,r\}$ connecting any two regions of $\graf_{p,q}$. In particular, there exists a path that joins two connected components of $\Gamma_{p,q}\setminus(L\cup R)$ being the region bounded by the interior boundary of $L\cup R$ (the white diamond region on the picture above) and $D_0$. As such a path cannot touch $p$ and $q$, we claim that it necessarily has a common part with $\gamma$. Denote by $q\in\gamma$, $q\neq p$ the essential vertex where the above described path joins with $\gamma$. If the path has no common part with $\gamma$, then we contradict the construction of $L$ or $R$ by creating some additional faces of $\Gamma$ that imply that regions $L$ and $R$  do not intersect outside $\gamma$, which is a contradiction. This situation is depicted on figures below.
\begin{align*}
&\begin{tikzpicture}[baseline=-.5ex,yshift=-2cm]
\fill[blue,opacity=0.3] (0,1)--(0,2)--(-1,3)--(-0.5,3.5)--(-1.5,3.5)--(-1.5,1)--cycle;
\fill[gray,opacity=0.3] (0,1)--(0,2)--(1,3)--(0,4)--(1.5,4)--(1.5,1)--cycle;
\draw(-0.5,1.5) node {$L$} (0.5,1.5) node {$R$};
\draw (-0.5,0)--(-1.5,0)--(-1.5,1)--(1.5,1)--(1.5,0)--(0.5,0);
\draw[dashed] (0,4)--(-1.5,4)--(-1.5,2) (0,4)--(1.5,4)--(1.5,2) (0.5,0)--(-0.5,0) node[midway,below] {$e_0$};
\draw (-1.5,1)--(-1.5,2) (1.5,1)--(1.5,2) (0,1)--(0,2)--(-1,3) node[midway,above] {$e_a^p$} (0,2)--(1,3) node[midway,above] {$e_b^p$};
\draw[dashed](-1,3)--(0,4)--(1,3) (-0.5,3.5)--(-1.5,3.5);
\fill (-0.5,3.5) circle (2pt) (-1.5,3.5) circle (2pt);
\draw(0,0.5) node {$D_0$};
\fill (-0.5,0) circle (2pt) node[below] {$*$} (0.5,0) circle (2pt) (-1.5,2) circle (2pt) (1.5,2) circle (2pt) (-1.5,1) circle (2pt) (1.5,1) circle (2pt) (0,1) circle (2pt) (0,2) circle (2pt) node[above] {$p$} (0,4) circle (2pt) node[below] {$q$} (-1,3) circle (2pt) (1,3) circle (2pt);
\end{tikzpicture}&
&\begin{tikzpicture}[baseline=-.5ex,yshift=-2cm]
\fill[blue,opacity=0.3] (0,1)--(0,2)--(-1,3)--(0,4)--(-1.5,4)--(-1.5,1)--cycle;
\fill[gray,opacity=0.3] (0,1)--(0,2)--(1,3)--(0.5,3.5)--(1.5,3.5)--(1.5,1)--cycle;
\draw(-0.5,1.5) node {$L$} (0.5,1.5) node {$R$};
\draw (-0.5,0)--(-1.5,0)--(-1.5,1)--(1.5,1)--(1.5,0)--(0.5,0);
\draw[dashed] (0,4)--(-1.5,4)--(-1.5,2) (0,4)--(1.5,4)--(1.5,2) (0.5,0)--(-0.5,0) node[midway,below] {$e_0$};
\draw (-1.5,1)--(-1.5,2) (1.5,1)--(1.5,2) (0,1)--(0,2)--(-1,3) node[midway,above] {$e_a^p$} (0,2)--(1,3) node[midway,above] {$e_b^p$};
\draw[dashed](0.5,3.5)--(1.5,3.5);
\fill (0.5,3.5) circle (2pt) (1.5,3.5) circle (2pt);
\draw[dashed](-1,3)--(0,4)--(1,3);
\draw(0,0.5) node {$D_0$};
\fill (-0.5,0) circle (2pt) node[below] {$*$} (0.5,0) circle (2pt) (-1.5,2) circle (2pt) (1.5,2) circle (2pt) (-1.5,1) circle (2pt) (1.5,1) circle (2pt) (0,1) circle (2pt) (0,2) circle (2pt) node[above] {$p$} (0,4) circle (2pt) node[below] {$q$} (-1,3) circle (2pt) (1,3) circle (2pt);
\end{tikzpicture}
\end{align*}
Therefore we have the following situation:
\begin{align*}
&\begin{tikzpicture}[baseline=-.5ex,yshift=-2cm]
\fill[blue,opacity=0.3] (0,1)--(0,2)--(-1,3)--(0,4)--(-1.5,4)--(-1.5,1)--cycle;
\fill[gray,opacity=0.3] (0,1)--(0,2)--(1,3)--(0,4)--(1.5,4)--(1.5,1)--cycle;
\draw (-0.5,0)--(-1.5,0)--(-1.5,1)--(1.5,1)--(1.5,0)--(0.5,0);
\draw[dashed] (0,4)--(-1.5,4)--(-1.5,2) (0,4)--(1.5,4)--(1.5,2) (0.5,0)--(-0.5,0) node[midway,below] {$e_0$};
\draw (-1.5,1)--(-1.5,2) (1.5,1)--(1.5,2) (0,1)--(0,2)--(-1,3) node[midway,above] {$e_a^p$} (0,2)--(1,3) node[midway,above] {$e_b^p$};
\draw[dashed](-1,3)--(0,4)--(1,3);
\draw[dashed,rounded corners](-0.5,3.5)--(-1.25,3.5)--(-1.25,1.5)--(-1, 1.5);
\draw (-1,1.5)--(0,1.5) node[midway,above] {$e_c^q$} (0,1.75) node[right] {$e_d^q$};
\fill (-0.5,3.5) circle (2pt) (-1,1.5) circle (2pt) (0,1.5) circle (2pt) node[below left] {$q$};
\draw(0,0.5) node {$D_0$};
\fill (-0.5,0) circle (2pt) node[below] {$*$} (0.5,0) circle (2pt) (-1.5,2) circle (2pt) (1.5,2) circle (2pt) (-1.5,1) circle (2pt) (1.5,1) circle (2pt) (0,1) circle (2pt) (0,2) circle (2pt) node[above] {$p$} (0,4) circle (2pt) node[below] {$r$} (-1,3) circle (2pt) (1,3) circle (2pt);
\end{tikzpicture}&
&\begin{tikzpicture}[baseline=-.5ex,yshift=-2cm]
\fill[blue,opacity=0.3] (0,1)--(0,2)--(-1,3)--(0,4)--(-1.5,4)--(-1.5,1)--cycle;
\fill[gray,opacity=0.3] (0,1)--(0,2)--(1,3)--(0,4)--(1.5,4)--(1.5,1)--cycle;
\draw (-0.5,0)--(-1.5,0)--(-1.5,1)--(1.5,1)--(1.5,0)--(0.5,0);
\draw[dashed] (0,4)--(-1.5,4)--(-1.5,2) (0,4)--(1.5,4)--(1.5,2) (0.5,0)--(-0.5,0) node[midway,below] {$e_0$};
\draw (-1.5,1)--(-1.5,2) (1.5,1)--(1.5,2) (0,1)--(0,2)--(-1,3) node[midway,above] {$e_a^p$} (0,2)--(1,3) node[midway,above] {$e_b^p$};
\draw[dashed,rounded corners](0.5,3.5)--(1.25,3.5)--(1.25,1.5)--(1, 1.5);
\draw (1,1.5)--(0,1.5) node[midway,above] {$e_c^q$} (0,1.75) node[left] {$e_d^q$};
\fill (0.5,3.5) circle (2pt) (1,1.5) circle (2pt) (0,1.5) circle (2pt) node[below right] {$q$};
\draw[dashed](-1,3)--(0,4)--(1,3);
\draw(0,0.5) node {$D_0$};
\fill (-0.5,0) circle (2pt) node[below] {$*$} (0.5,0) circle (2pt) (-1.5,2) circle (2pt) (1.5,2) circle (2pt) (-1.5,1) circle (2pt) (1.5,1) circle (2pt) (0,1) circle (2pt) (0,2) circle (2pt) node[above] {$p$} (0,4) circle (2pt) node[below] {$r$} (-1,3) circle (2pt) (1,3) circle (2pt);
\end{tikzpicture}
\end{align*}
The base case for induction is when $N=0$. This can happen only when, $p\in\delta$, which means that $\gamma=\{p\}$ and therefore there is no room for a vertex $q\in \gamma\setminus\{p\}$. We end up with the case which has already been dealt with at the beginning of this proof. 

If $N>0$, then one can find an embedded $(\vartheta_3,T_{\vartheta_3},*)$ in $(\graf,T,*)$ which makes $(p;a,b)$ and $(q;c,d)$ $\Theta$-related.
\begin{align*}
&\begin{tikzpicture}[baseline=-.5ex,yshift=-2cm]
\draw[thick] (-0.5,0)--(-1.5,0)--(-1.5,1)--(1.5,1)--(1.5,2);
\draw[thick,dashed] (0,4)--(1.5,4)--(1.5,2) (0.5,0)--(-0.5,0) node[midway,below] {$e_0$};
\draw[thick] (0,1)--(0,2)--(-1,3) node[midway,above] {$e_a^p$} (0,2)--(1,3) node[midway,above] {$e_b^p$};
\draw[thick,dashed](-1,3)--(-0.5,3.5) (0,4)--(1,3);
\draw[thick,dashed,rounded corners](-0.5,3.5)--(-1.25,3.5)--(-1.25,1.5)--(-1, 1.5);
\draw[thick] (-1,1.5)--(0,1.5) node[midway,above] {$e_c^q$} (0,1.75) node[right] {$e_d^q$};
\fill (-0.5,3.5) circle (2pt) (-1,1.5) circle (2pt) (0,1.5) circle (2pt) node[below left] {$q$};
\fill (-0.5,0) circle (2pt) node[below] {$*$} (0.5,0) circle (2pt) (1.5,2) circle (2pt) (-1.5,1) circle (2pt) (1.5,1) circle (2pt) (0,1) circle (2pt) (0,2) circle (2pt) node[above] {$p$} (0,4) circle (2pt) node[below] {$r$} (-1,3) circle (2pt) (1,3) circle (2pt);
\end{tikzpicture}&
&\begin{tikzpicture}[baseline=-.5ex,yshift=-2cm]
\draw[thick] (-0.5,0)--(-1.5,0)--(-1.5,1)--(0,1);
\draw[thick,dashed] (0,4)--(-1.5,4)--(-1.5,2) (0.5,0)--(-0.5,0) node[midway,below] {$e_0$};
\draw[thick] (-1.5,1)--(-1.5,2) (0,1)--(0,2)--(-1,3) node[midway,above] {$e_a^p$} (0,2)--(1,3) node[midway,above] {$e_b^p$};
\draw[thick,dashed,rounded corners](0.5,3.5)--(1.25,3.5)--(1.25,1.5)--(1, 1.5);
\draw[thick] (1,1.5)--(0,1.5) node[midway,above] {$e_c^q$} (0,1.75) node[left] {$e_d^q$};
\fill (0.5,3.5) circle (2pt) (1,1.5) circle (2pt) (0,1.5) circle (2pt) node[below right] {$q$};
\draw[thick,dashed](-1,3)--(0,4) (0.5,3.5)--(1,3);
\fill (-0.5,0) circle (2pt) node[below] {$*$} (0.5,0) circle (2pt) (-1.5,2) circle (2pt) (-1.5,1) circle (2pt) (0,1) circle (2pt) (0,2) circle (2pt) node[above] {$p$} (0,4) circle (2pt) node[below] {$r$} (-1,3) circle (2pt) (1,3) circle (2pt);
\end{tikzpicture}
\end{align*}
Clearly, the length from $q$ to $\delta$ is strictly shorter than $N$ and by the induction hypothesis, $(q;c,d)$ is $\Theta$-related with $(q';c',\ell'-1)$ for some $q'\in\delta$ and $1\le c'<\ell'=\val(q')$. Therefore, we have
\[
(p;a,b)\sim_\Theta(q;c,d)\sim_\Theta(q';c',\ell'-1)
\]
which completes the inductive step.
\end{proof}

\begin{proof}[Proof of Proposition~\ref{proposition:all related}]
By Lemma~\ref{lemma:general case}, any triple $(p;a,b)$ for $1\le a<b<k=\val(p)$ is $\Theta$-related with $(q;c,\ell-1)$ for some $q\in \delta$ and $1\le c<\ell=\val(q)$.
Hence it suffices to prove that all $(p;a,k-1)$ are $\Theta$-related for $p\in\delta$ and $1\le a<k=\val(p)$.

Now we apply Lemma~\ref{lemma:special case}. Since $\graf$ is $3$-connected, $\delta$ contains at least two essential vertices $p\neq q$.
Then all triples $(r;c,m-1)$ for $p\neq r\in\delta$ and $1\le c<m=\val(r)$ are $\Theta$-related with a single triple $(p;1,k-1)$ for $k=\val(p)$
\[
(r;c,m-1)\sim_\Theta(p;1,k-1).
\]

On the other hand, all $(p;a;k-1)$ for $1\le a<k$ are $\Theta$-related with a single triple $(q;1,\ell-1)$ for $\ell=\val(q)$ 
\[
(p;a,k-1)\sim_\Theta(q;1,\ell-1)\sim_\Theta(p;1,k-1)
\]
and so we are done.
\end{proof}

\section{A generalisation of the Yang-Baxter equation}\label{sec:yang-baxter}
As we have seen in previous sections, the quotient 
\[
\bar\bfB_n(\graf):=\bfB_n(\graf)/\langle\gamma_0,\hdots, \gamma_{b-1}\rangle
\]
for triconnected $\graf$ is isomorphic to the well-known $\bfB_n(\RR^2)$. In other words, group $\bar\bfB_n(\graf)$ can be generated by an appropriate set of $Y$-exchanges $\bar \sigma_1,\hdots,\bar\sigma_{n-1}$ that satisfy the braiding and commutative relations. However, when $\Gamma$ is only $2$-connected, the quotient group $\bar\bfB_n(\graf)$ generally has more generators than $\bfB_n(\RR^2)$ and relations between these generators can be more complicated than braiding and commutative relations. Our general strategy for a $2$-connected $\Gamma$ is to find a set of $Y$-exchanges associated with triples $\{(v_\mu,k_\mu,l_\mu)\}_{\mu=1}^r$ where $1\leq k_\mu<l_\mu<{\mathrm{val}}(v_\mu)$ such that the set
\[\left\{\sigma_1^{v_\mu;k_\mu,l_\mu}\mid\ 1\leq \mu\leq r\right\}\sqcup \{\delta\}\]
generates $\bar\bfB_n(\graf)$ and within a fixed triple the braiding and commutative relations are satisfied. What leads to generalisations of the Yang-Baxter equation are additional relations between generators corresponding to different triples $(v_\mu,k_\mu,l_\mu)$ that appear in the presentation of $\bar\bfB_n(\graf)$. By Proposition \ref{braid-2connected} we know that such relations become the ordinary braiding relations only when one identifies $\sigma_1^{v_\mu;k_\mu,l_\mu}\sim \sigma_1^{v_{\mu'},k_{\mu'},l_{\mu'}}$ for all $\mu,\mu'$.

In the remaining part of this section, we derive an explicit form of such a relation for graph $\Theta_4$.
\begin{align*}
(\Theta_4,T,*)=\begin{tikzpicture}[baseline=-.5ex]
\fill(-2,0) circle (2pt) node[left] {$v$} (2,0) circle (2pt) node[right] {$w$} (60:2) circle (2pt) (150:2) node[above=3pt] {$e_1^v$} (-60:2) circle (2pt) (120:2) circle (2pt) (240:2) circle (2pt) node[below left] {$*$}; 
\draw(60:2) arc (60:-60:2) (120:2) arc (120:240:2);
\draw (60:2 and 1) arc (60:-240:2 and 1);
\fill(60:2 and 1) circle (2pt) (-240:2 and 1) circle (2pt) (-225:2 and 1) node[above=0.5pt ] {$e_2^v$} (-130:2 and 1) node[above] {$e_3^v$};
\begin{scope}[dashed,decoration={
    markings,
    mark=at position 0.5 with {\arrow{latex}}}
    ] 
    \draw[postaction={decorate}](60:2 and 1) arc (60:120:2 and 1)  (90:2 and 1) node[below] {$e_1$};
    \draw[postaction={decorate}](-60:2) arc (-60:-120:2) (-90:2) node[below] {$e_0$};
    \draw[postaction={decorate}](60:2) arc (60:120:2) (90:2) node[below] {$e_2$};
\end{scope}
\end{tikzpicture}
\end{align*}
\begin{proposition}\label{lemma:theta4presentation}
Group $\bfB_3(\Theta_4)$ is generated by the following generators
\begin{enumerate}
\item $Y$-exchanges $\sigma_1^{v;1,2},\sigma_1^{v;1,3},\sigma_1^{v;2,3}$,
\item one-particle loops $\gamma_1,\gamma_2$,
\item the circular move $\delta$.
\end{enumerate}
Moreover $\bfB_3(\Theta_4)$ has a presentation with the above generators and just one relation which reads
\begin{equation}\label{theta4-rel}
\sigma_1^{v;1,2}\gamma_1^{-1}\delta\sigma_1^{v;1,3}\delta^{-1}\gamma_1\sigma_1^{v;2,3}=\gamma_2^{-1}\delta\sigma_1^{v;2,3}\delta^{-1}\gamma_2\sigma_1^{v;1,3}\delta\sigma_1^{v;1,2}\delta^{-1}.
\end{equation}
\end{proposition}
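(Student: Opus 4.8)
The plan is to derive the presentation from Farley--Sabalka's algorithm (Theorem~\ref{thm:FS}) and then to collapse the resulting list of generators and relators using the reduction results already available for $2$-connected graphs. First I would fix the rooted spanning tree $(T,*)$ shown above, so that the deleted edges are exactly $e_0,e_1,e_2$: the edge $e_0$ on $\delta=\partial D_0$ yields the circular move $\delta$, while $e_1,e_2$ yield the one-particle moves $\gamma_1,\gamma_2$, consistent with $b_1(\Theta_4)=3$. Generation by the six listed elements then follows from Proposition~\ref{proposition:generators}, which already reduces to the $\sigma_1$-type exchanges $Y^{v;a,b}$ and $Y^{w;a,b}$ at the two essential vertices together with $\gamma_1,\gamma_2,\delta$. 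It remains to eliminate the three exchanges at $w$; for this I would embed copies of $\Theta_3$ joining $v$ and $w$ through different pairs of strands and apply relation \eqref{theta-rel}, which expresses each $Y^{w;a,b}$ as a word in the $Y^{v;\cdot}$, $\gamma_1,\gamma_2$ and $\delta$. This leaves precisely $\sigma_1^{v;1,2},\sigma_1^{v;1,3},\sigma_1^{v;2,3},\gamma_1,\gamma_2,\delta$.

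To obtain the relation I would track where \eqref{theta4-rel} comes from. Conceptually it is a loop-decorated version of the pseudo-braid relation \eqref{equation:pseudo-braid} for the star $S_4$ embedded at $v$: the products on the two sides are the two associativity orderings of the three exchanges $\sigma_1^{v;1,2},\sigma_1^{v;1,3},\sigma_1^{v;2,3}$, and the conjugating factors $\delta(\cdot)\delta^{-1}$ are the graph analogue of the identity $\sigma_{i+1}=\delta\sigma_i\delta^{-1}$ used in \eqref{eq:lollipop-braid}. The new feature, absent in the pure star, is that in $\Theta_4$ the particle that is carried aside to realise the second exchange must travel across the cycles represented by $e_1$ and $e_2$; this is exactly what inserts the correcting factors $\gamma_1^{\pm1}$ and $\gamma_2^{\pm1}$. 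I would verify \eqref{theta4-rel} directly by exhibiting both sides as the same loop in $C_3(\Theta_4)$ up to backtracking, in the same manner as Examples~\ref{example:theta} and \ref{example:vartheta}.

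The harder half is showing that \eqref{theta4-rel} is the \emph{only} relation. Here I would enumerate the critical $2$-cells of $D_3(\Theta_4)$ and compute their principal reductions $F^\infty$ as prescribed by Theorem~\ref{thm:FS}. The critical $2$-cells that mix a deleted edge with a single star move should reduce, after the Tietze substitutions that remove the $w$-exchanges, either to trivial words or to pseudo-commutative relations \eqref{equation:pseudo-commutative} that are already subsumed by the $\delta$-conjugation identities; the remaining braid-type critical cell is the one that produces \eqref{theta4-rel}. A useful consistency check is abelianisation: in \eqref{theta4-rel} the exponent sums of $\gamma_1,\gamma_2,\delta$ all cancel and each of $\sigma_1^{v;1,2},\sigma_1^{v;1,3},\sigma_1^{v;2,3}$ occurs once on each side, so the relation is trivial in homology and hence $H_1\bigl(\bfB_3(\Theta_4)\bigr)\cong\ZZ^6$, matching a free abelianisation on the six generators and leaving no room for a second independent relation.

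The main obstacle is precisely this last step: organising the principal reductions of the full family of Farley--Sabalka relators for $D_3(\Theta_4)$ so that, under the Tietze transformations eliminating the $w$-exchanges, every pseudo-commutative relator disappears and exactly one pseudo-braid-type relator survives in the normalised form \eqref{theta4-rel}. The generation statement and the geometric verification of \eqref{theta4-rel} are comparatively routine consequences of the earlier sections, but the bookkeeping that certifies a single surviving relation is where the genuine effort lies.
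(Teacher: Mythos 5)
Your proposal follows essentially the same route as the paper: generation via Proposition~\ref{proposition:generators} with the three $w$-exchanges eliminated through $\Theta_3$-embeddings joining $v$ and $w$ (as in relation~\eqref{theta-rel}); the relation~\eqref{theta4-rel} obtained by substituting the lollipop relations~\eqref{eq:lollipop-braid} into the pseudo-braid relation~\eqref{equation:pseudo-braid} for $S_4$ at $v$, with the factors $\gamma_1^{\pm1},\gamma_2^{\pm1}$ entering exactly as you describe; and completeness of the single relation certified by enumerating the critical $2$-cells of $D_3(\Theta_4)$ and collapsing the Farley--Sabalka presentation by Tietze transformations. This last step is precisely what the paper does in its Appendix, where the computation (carried out with computer assistance) produces $25$ critical $1$-cells and $21$ critical $2$-cells and reduces them to six generators and the one relator, which is then translated back into the geometric generators.

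One caveat: your abelianisation remark overstates what it establishes. That the relator in~\eqref{theta4-rel} has vanishing exponent sums does give $H_1\bigl(\bfB_3(\Theta_4)\bigr)\cong\ZZ^6$, consistent with the paper's presentation; but this cannot ``leave no room for a second independent relation,'' because any additional relator lying in the commutator subgroup of the free group on the six generators is invisible in $H_1$. So the abelianisation is a sanity check only, and the single-relator claim rests entirely on the Morse-theoretic bookkeeping that you correctly identify as the main obstacle --- your proposal outlines that step but does not execute it, whereas the paper's proof depends on the explicit enumeration in the Appendix to finish.
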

\begin{proof}
Graph $\Theta_4$ is $2$-connected, hence by Proposition \ref{proposition:generators} group $\bfB_n(\Theta_4)$ is generated by $Y$-exchanges $\sigma_1^{v;a,b}$ and $\sigma_1^{w;a,b}$ for $1\leq a<b\leq 3$, one-particle moves $\gamma_1$, $\gamma_2$ and the circular move $\delta$. However, for $\Theta_4$, we actually may consider only $Y$-exchanges at $v$. To see this, note first that for each triple $(w,a,b)$ where $1\leq a<b\leq 3$ we have the corresponding triple $(v,c,d)$ such that edges $e^v_a,e^v_b,e^w_c,e^w_d$ belong to an image of an appropriate embedding of the $\Theta_3$-graph with root $*$ and spanning tree consistent with spanning the tree $T\subset \Theta_4$. Hence, any $Y$-exchange at vertex $w$ is $\Theta$-related to a $Y$-exchange at vertex $v$. 

Relation \eqref{theta4-rel} can be derived directly from the following pseudo-braiding relation \eqref{equation:pseudo-braid} for $\bfB_3(S_4)$
\[\sigma_1^{v;1,2}\sigma_2^{v;2,1,3}\sigma_1^{v;2,3}=\sigma_2^{v;1,2,3}\sigma_1^{v;1,3}\sigma_2^{v;3,1,2}.\]
To this end, associate with each $\sigma_2^{v;c,b,a}$ a lollipop graph containing $e_0$ in its loop. Then, lollipop relations as in \eqref{eq:lollipop-braid} can be applied. They are as follows.
\[\sigma_2^{v;1,2,3}=\gamma_2^{-1}\delta\sigma_1^{v;2,3}\delta^{-1}\gamma_2,\quad \sigma_2^{v;3,1,2}=\delta\sigma_1^{v;1,2}\delta^{-1},\quad \sigma_2^{v;2,1,3}=\gamma_1^{-1}\delta\sigma_1^{v;1,3}\delta^{-1}\gamma_1.\]
Relation \eqref{theta4-rel} is obtained by substituting the above lollipop relations in the pseudo-braid relation. Unfortunately, proving that relation \eqref{theta4-rel} is sufficient for presenting $\bfB_3(\Theta_4)$ requires referring to the Morse-theoretic methods. For the sake of completeness, we derive the corresponding Morse presentation of $\bfB_3(\Theta_4)$ in the Appendix.
\end{proof}

Let us next analyse a group presentation for $\bar\bfB_n(\Theta_4):=\bfB_n(\Theta_4)/\langle\gamma_0,\gamma_1, \gamma_{2}\rangle$. Firstly, recall that the one-particle move $\gamma_0$ in $C_n(\Theta_4)$ associated with deleted edge $e_0$ is defined as the move where i) top $(n-1)$ particles move to $e_1^v$ and ii) particle $n$ goes around the simple loop in $\Theta_4$ associated with edge $e_0$. Such a move satisfies lollipop relation \eqref{eq:delta-rel}, i.e.
\begin{equation}\label{theta4-lollipop}
\gamma_0=\sigma_{n-1}^{v;1,\hdots,1,1,3}\hdots\sigma_{2}^{v;1,1,3}\sigma_{1}^{v;1,3}\delta.
\end{equation}
Lollipop relation \eqref{theta4-lollipop} implies that in $\bar\bfB_n(\Theta_4)$ we have
\[\delta=\sigma_{1}^{v;3,1}\sigma_{2}^{v;1,3,1}\hdots\sigma_{n-1}^{v;1,\hdots,1,3,1}.\]
Furthermore, another set of lollipop relations as in \eqref{eq:lollipop-braid} implies that in $\bar\bfB_n(\Theta_4)$ we have 
\[\sigma_{i+1}^{v;a_1,a_2,\hdots,a_{i-1},a,b}=\delta \sigma_{i}^{v;a_2,\hdots,a_{i-1},a,b}\delta^{-1}\]
for any sequence $(a_1,a_2,\hdots,a_{i-1})$ such that $1\leq a_j\leq 3$. This in turn means that $Y$-exchanges $\sigma_i^{v;\bfa,a,b}$ in $\bar\bfB_n(\Theta_4)$ are in fact distinguished only by the choice of $1\leq a<b\leq 3$. Hence, we can drop decorations $\bfa$ and denote the respective equivalence class in $\bar\bfB_n(\Theta_4)$ as 
\[[\sigma_i^{v;\bfa,a,b}]:=\sigma_i^{v;a,b}=\delta^{i-1}\sigma_1^{v;a,b}\delta^{-i+1}.\]
With the above notation and identifications established, relation \eqref{theta4-rel} in $\bar\bfB_n(\Theta_4)$ becomes the pseudo-braiding relation
\begin{equation}\label{theta-yang-baxter}
\sigma_1^{v;1,2}\sigma_2^{v;1,3}\sigma_1^{v;2,3}=\sigma_2^{v;2,3}\sigma_1^{v;1,3}\sigma_2^{v;1,2}.
\end{equation}
Hence, when constructing unitary representations of $\bar\bfB_n(\Theta_4)$ on $\mH=\left(\CC^d\right)^{\otimes n}$ we need three $R$-matrices $R$, $R'$, $R''$ that separately constitute solutions of the Yang-Baxter equation and are assigned to generators of $\bar\bfB_n(\Theta_4)$ as follows
\begin{gather*}
\sigma_i^{v;1,2}\mapsto \bone\otimes\hdots\otimes\bone\otimes R_{i,i+1}\otimes\bone\otimes\hdots\otimes\bone,\\
\sigma_i^{v;1,3}\mapsto \bone\otimes\hdots\otimes\bone\otimes R'_{i,i+1}\otimes\bone\otimes\hdots\otimes\bone,\\
\sigma_i^{v;2,3}\mapsto \bone\otimes\hdots\otimes\bone\otimes R''_{i,i+1}\otimes\bone\otimes\hdots\otimes\bone.
\end{gather*}
On top of that, by \eqref{theta-yang-baxter} the $R$-matrices have to satisfy the following mixed Yang-Baxter equation
\begin{equation}\label{yang-baxter-mixed}
(R\otimes\bone)(\bone\otimes R')(R''\otimes\bone)=(\bone\otimes R'')(R'\otimes\bone)(\bone\otimes R).
\end{equation}

 \begin{acknowledgements}
The authors gratefully acknowledge the support of the American Institute of Mathematics (AIM) where this collaboration was initiated. We would like to thank Adam Sawicki and Jon Harrison for useful discussions during the workshop at AIM. TM wold like to also thank Jonathan Robbins and Nick Jones for helpful discussions about graph configuration spaces and anyons. Byung Hee An was supported by the National Research Foundation of Korea (NRF) grant funded by the Korea government (MSIT) No. 2020R1A2C1A01003201. TM acknowledges the support of the Foundation for Polish Science (FNP), START programme.
\end{acknowledgements}

\appendix
\section{Morse presentation of $\bfB_3(\Theta_4)$}\label{theta4-morse}
\begin{align*}
(\Theta_4,T,*)=\begin{tikzpicture}[baseline=-.5ex]
\fill(-2,0) circle (2pt) node[left] {$v$} node[right] {$2$} (2,0) circle (2pt) node[right] {$w$} node[left] {$8$} (70:2) circle (2pt) (165:2) node[above left] {$e_1^v$} (-70:2) circle (2pt) (110:2) circle (2pt) node[above=2pt] {$4$} (250:2) circle (2pt) node[below left] {$*$} node[above right] {$0$}; 
\fill(35:2) circle (2pt) (-35:2) circle (2pt) (145:2) circle (2pt) node[above=2pt] {$3$} (-145:2) circle (2pt) node[above right] {$1$};
\draw(70:2) arc (70:-70:2) (110:2) arc (110:250:2);
\draw (70:2 and 1) arc (70:-250:2 and 1);
\fill(70:2 and 1) circle (2pt) (-250:2 and 1) circle (2pt) (-215:2 and 1) node[below=0.5pt ] {$e_2^v$} (-130:2 and 1) node[above] {$e_3^v$} (-90:2 and 1) circle (2pt) (135:2 and 1) circle (2pt) (45:2 and 1) circle (2pt);
\begin{scope}[dashed,decoration={
    markings,
    mark=at position 0.5 with {\arrow{latex}}}
    ] 
    \draw[postaction={decorate}](70:2 and 1) arc (70:110:2 and 1) (90:2 and 1) node[below] {$e_1$};
    \draw[postaction={decorate}](-70:2) arc (-70:-110:2) (-90:2) node[below] {$e_0$};
    \draw[postaction={decorate}](70:2) arc (70:110:2) (90:2) node[below] {$e_2$};
\end{scope}
\end{tikzpicture}
\end{align*}
We will be using only some particular cells of Morse complex $M_3(\Theta_4)$, but for the sake of completeness we write down all critical $1$-cells of $M_3(\Theta_4)$ and all critical $2$-cells together with their boundary words. They were found using one of the author's own computer code implementing Farley-Sabalka's algorithm \cite{TMcode}. Critical $1$-cells read:
\begin{gather*}
g_{0}:=\{e_{2}^{7},3,8\},\quad g_{1}:=\{e_{2}^{5},3,7\},\quad g_{2}:=\{e_{8}^{11},0,9\},\quad g_{3}:=\{e_{8}^{13},11,12\},\quad g_{4}:=\{e_{8}^{13},9,11\}, \\ 
g_{5}:=\{e_{2}^{5},3,4\},\quad g_{6}:=\{e_{4}^{12},0,1\}, \quad g_{7}:=\{e_{8}^{13},9,14\}, \quad g_{8}:=\{e_{2}^{5},3,6\},\quad g_{9}:=\{e_{8}^{13},0,11\}, \\
g_{10}:=\{e_{8}^{11},9,13\},\quad g_{11}:=\{e_{8}^{11},9,12\},\quad g_{12}:=\{e_{8}^{13},9,10\},\quad g_{13}:=\{e_{8}^{13},11,14\}, \\
g_{14}:=\{e_{8}^{11},9,10\}, \quad  g_{15}:=\{e_{2}^{7},3,5\},\quad g_{16}:=\{e_{2}^{7},5,8\},\quad g_{17}:=\{e_{8}^{13},0,9\},\quad g_{18}:=\{e_{2}^{5},0,3\},\\
g_{19}:=\{e_{0}^{14},1,2\}, \quad g_{20}:=\{e_{2}^{7},0,3\},\quad g_{21}:=\{e_{2}^{7},0,5\},\quad g_{22}:=\{e_{6}^{10},0,1\},\quad g_{23}:=\{e_{2}^{7},5,6\}, \\
g_{24}:=\{e_{2}^{7},3,4\}.
\end{gather*}
Critical $2$-cells and their boundary words read
\begin{gather*}
\partial\{e_{0}^{14},e_{2}^{7},6\}=g_{19}g_{21}^{-1}g_{19}^{-1}g_{16},\quad \partial\{e_{2}^{7},e_{4}^{12},6\}=g_{15}g_{18}g_{6}^{-1}g_{16}^{-1}g_{6}g_{20}^{-1}g_{1}^{-1}, \\
\partial\{e_{6}^{10},e_{8}^{13},10\}=g_{22}g_{21}^{-1}g_{16}^{-1}g_{17}^{-1}g_{16}g_{21}g_{22}^{-1}g_{17}^{-1}g_{12},\quad \partial\{e_{4}^{12},e_{8}^{13},10\}=g_{6}g_{20}^{-1}g_{0}^{-1}g_{17}^{-1}g_{0}g_{20}g_{6}^{-1}g_{9}^{-1}g_{4}, \\
\partial\{e_{0}^{14},e_{6}^{10},2\}=g_{19}g_{22}^{-1}g_{19}^{-1}g_{17}g_{22}g_{21}^{-1}, \quad \partial\{e_{0}^{14},e_{8}^{13},10\}=g_{19}g_{17}^{-1}g_{19}^{-1}g_{7}, \\
\partial\{e_{4}^{12},e_{8}^{13},12\}=g_{6}g_{20}^{-1}g_{0}^{-1}g_{9}^{-1}g_{0}g_{20}g_{6}^{-1}g_{9}^{-1}g_{3},\quad \partial\{e_{2}^{7},e_{4}^{12},4\}=g_{24}g_{6}^{-1}g_{0}^{-1}g_{6}g_{20}^{-1}, \\
\partial\{e_{2}^{7},e_{6}^{10},6\}=g_{23}g_{22}^{-1}g_{16}^{-1}g_{22}g_{21}^{-1}, \quad \partial\{e_{0}^{14},e_{8}^{11},10\}=g_{19}g_{2}^{-1}g_{19}^{-1}g_{10}, \\
\partial\{e_{6}^{10},e_{8}^{13},12\}=g_{2}g_{22}g_{21}^{-1}g_{16}^{-1}g_{9}^{-1}g_{16}g_{21}g_{22}^{-1}g_{17}^{-1}g_{10}^{-1}g_{4},\quad \partial\{e_{2}^{5},e_{4}^{12},4\}=g_{5}g_{6}^{-1}g_{1}^{-1}g_{6}g_{18}^{-1}, \\
\partial\{e_{4}^{12},e_{8}^{11},10\}=g_{6}g_{20}^{-1}g_{0}^{-1}g_{2}^{-1}g_{0}g_{20}g_{6}^{-1}g_{11}, \quad \partial\{e_{0}^{14},e_{2}^{7},4\}=g_{19}g_{20}^{-1}g_{19}^{-1}g_{0}, \\
\partial\{e_{6}^{10},e_{8}^{11},10\}=g_{22}g_{21}^{-1}g_{16}^{-1}g_{2}^{-1}g_{16}g_{21}g_{22}^{-1}g_{2}^{-1}g_{14},\quad \partial\{e_{4}^{12},e_{6}^{10},1\}=g_{6}g_{18}^{-1}g_{22}^{-1}g_{20}g_{6}^{-1}g_{2}g_{22}g_{21}^{-1}, \\
\partial\{e_{0}^{14},e_{8}^{13},12\}=g_{19}g_{9}^{-1}g_{19}^{-1}g_{13}, \quad \partial\{e_{2}^{5},e_{6}^{10},4\}=g_{8}g_{22}^{-1}g_{1}^{-1}g_{22}, \\
\partial\{e_{0}^{14},e_{2}^{5},4\}=g_{19}g_{18}^{-1}g_{19}^{-1}g_{1}, \quad \partial\{e_{2}^{7},e_{6}^{10},4\}=g_{15}g_{22}^{-1}g_{0}^{-1}g_{22}g_{21}^{-1}, \\
\partial\{e_{0}^{14},e_{4}^{12},2\}=g_{19}g_{6}^{-1}g_{19}^{-1}g_{9}g_{6}g_{20}^{-1}.
\end{gather*}
The above Morse presentation uses $25$ generators and $21$ relators, but via appropriate Tietze transformations it can be reduced to the following presentation on $6$ generators and one relator.
\begin{equation}\label{theta-morse}
\bfB_3(\Theta_4)=\left\langle g_{18},g_{20},g_{21},g_{22},g_6,g_{19}\mid\ g_{21}g_{22}^{-1}g_{19}g_{20}g_{19}^{-1}g_{22}g_{18}g_6^{-1}g_{19}g_{21}^{-1}g_{19}^{-1}g_6g_{20}^{-1}g_{19}g_{18}^{-1}g_{19}^{-1}\right\rangle.
\end{equation}
The only relator in the above presentation is derived from $\partial\{e_{6}^{10},e_{8}^{13},12\}$ by substituting expressions for $g_2$, $g_{16}$, $g_{17}$, $g_{10}$, $g_{4}$, $g_9$. Expression for $g_2$ is obtained directly from $\partial\{e_{4}^{12},e_{6}^{10},1\}$. Expression for $g_{16}$ is obtained directly from $\partial\{e_{0}^{14},e_{2}^{7},6\}$. Expression for $g_{17}$ is obtained directly from $\partial\{e_{0}^{14},e_{6}^{10},2\}$. Expression for $g_{10}$ is obtained from $\partial\{e_{0}^{14},e_{8}^{11},10\}$ by substituting expression for $g_2$ which is extracted from $\partial\{e_{4}^{12},e_{6}^{10},1\}$. Expression for $g_9$ is obtained directly from $\partial\{e_{0}^{14},e_{4}^{12},2\}$. Finally, $g_4$ can be extracted from $\partial\{e_{4}^{12},e_{8}^{13},10\}$ using previous results and the expression for $g_0$ derived from $\partial\{e_{0}^{14},e_{2}^{7},4\}$.

Let us emphasise that although presentation \eqref{theta-morse} was derived for $n=3$, the relator holds for any $n$. This is because all relators from Morse presentation for $n$ particles can be extended to relators of the Morse presentation for $n+1$ particles by subdividing $\graf$ and adding one more particle next to the root $*$ to all relevant critical cells \cite{TMpresentations}. 

Let us next interpret presentation \eqref{theta-morse} in terms of geometric generators that correspond to particle moves on $\Theta_4$. We have the following correspondence
\[g_{18}\sim\sigma_1^{v;2,1},\quad g_{20}\sim\sigma_1^{v;3,1},\quad g_{21}\sim\sigma_1^{v;3,2},\quad g_{22}\sim \gamma_1,\quad, g_6\sim\gamma_2,\quad g_{19}\sim\delta.\]
The above correspondence follows from the correspondence between loop-generators and $Y$-exchanges and critical cells established in the proof of Theorem \ref{thm:generators} and its preceding sections. Namely, critical cells $g_{18}$, $g_{20}$ and $g_{21}$ correspond to essential vertex $v$ with label $2$. Hence, apply formula \eqref{equation:critical cell interpretation} to find their corresponding $Y$-exchanges. Furthermore, critical cells $g_{22}$, $g_{6}$ and $g_{19}$ are associated with deleted edges, hence they correspond to loop-type generators and the circular move. Cell $g_{19}$ corresponds to deleted edge $e_0$ which is incident at the root $*$, hence it corresponds to the circular move $\delta$. This way, we translate relator from \eqref{theta-morse} to the following relation involving geometric generators.
\[\sigma_1^{v;3,2}\gamma_1^{-1}\delta\sigma_1^{v;3,1}\delta^{-1}\gamma_1\sigma_1^{v;2,1}\gamma_2^{-1}=\delta\sigma_1^{v;2,1}\delta^{-1}\sigma_1^{v;3,1}\gamma_2^{-1}\delta\sigma_1^{v;3,2}\delta^{-1}.\]

\end{document}